\documentclass[11pt, letterpaper]{article}

\usepackage{comment}

\usepackage[utf8]{inputenc}
\usepackage[english]{babel}
\usepackage[margin=1in]{geometry}

\usepackage{amsthm, amsfonts}
\usepackage{mathtools, thmtools}
\usepackage{bbm}
\usepackage{thm-restate}
\newtheorem{theorem}{Theorem}
\newtheorem{lemma}[theorem]{Lemma}

\theoremstyle{definition}

\usepackage[normalem]{ulem}
\usepackage{multirow}

\usepackage{algorithm}
\usepackage{algpseudocode}

\usepackage[numbers, sort]{natbib}

\usepackage[dvipsnames,usenames]{xcolor}
\usepackage[colorlinks=true,pdfpagemode=UseNone,urlcolor=RoyalBlue,linkcolor=RoyalBlue,citecolor=OliveGreen,pdfstartview=FitH]{hyperref}
\usepackage{prettyref}

\usepackage{graphicx}
\usepackage{subcaption}

\usepackage{tikz}
\usetikzlibrary{shapes, patterns, decorations, fit, intersections}

\usepackage{multicol}

\newcommand{\eqdef}{\overset{\mathrm{def}}{=\mathrel{\mkern-3mu}=}}

\title{Improved Bounds for Fractional Online Matching Problems}

\author{
	Zhihao Gavin Tang\thanks{ITCS, Shanghai University of Finance and Economics. Email: \texttt{tang.zhihao@mail.shufe.edu.cn}}
	\and
    Yuhao Zhang\thanks{Shanghai Jiao Tong University. Email: \texttt{zhang\_yuhao@sjtu.edu.cn}}
}
\date{}

\begin{document}

\maketitle

\begin{abstract}
Online bipartite matching with one-sided arrival and its variants have been extensively studied since the seminal work of Karp, Vazirani, and Vazirani (STOC 1990). 
Motivated by real-life applications with dynamic market structures, e.g. ride-sharing, two generalizations of the classical one-sided arrival model are proposed to allow non-bipartite graphs and to allow all vertices to arrive online. Namely, online matching with general vertex arrival is introduced by Wang and Wong (ICALP 2015), and fully online matching is introduced by Huang et al. (JACM 2020).

In this paper, we study the fractional versions of the two models.  
We improve three out of the four state-of-the-art upper and lower bounds of the two models. For fully online matching, we design a $0.6$-competitive algorithm and prove no algorithm can be $0.613$-competitive. For online matching with general vertex arrival, we prove no algorithm can be $0.584$-competitive.
Moreover, we give an arguably more intuitive algorithm for the general vertex arrival model, compared to the algorithm of Wang and Wong, while attaining the same competitive ratio of $0.526$. 
\end{abstract}

\section{Introduction}
\label{sec:intro}
Matching theory is a focal topic in the area of combinatorial optimization, with a wide range of applications. 
Online matching problems, arguably one of the most intriguing branches of matching theory, stays as a central point in the online algorithm literatures since the celebrated work of Karp, Vazirani, and Vazirani~\cite{stoc/KarpVV90}.
Karp et al. introduced the online bipartite matching problem with one-sided vertex arrival, i.e. one side of the vertices of an underlying bipartite graph arrive online and each vertex reveals its neighbors on its arrival. The algorithm makes immediate and irrevocable matching decisions for every online vertex and the goal is to maximize the size of the matching. It is shown that the na\"ive greedy algorithm achieves the optimal competitive ratio of $0.5$ among deterministic algorithms, and the Ranking algorithm achieves the optimal competitive ratio of $1-1/e$ among random algorithms. 
Since then, a fruitful line of research has been established based on the one-sided arrival model, including simplifications of the analysis of Ranking~\cite{soda/GoelM08,sigact/BenjaminM08,soda/DevanurJK13}, the vertex-weighted version~\cite{soda/AggarwalGKM11,talg/HuangTWZ19}, the edge-weighted version~\cite{teco/DevanurHKMY16, fahrbach2020edge, wine/FeldmanKMMP09, wine/KorulaMZ13,gao2021improved,blanc2021multiway}, the random arrival version~\cite{stoc/MahdianY11,stoc/KarandeMT11}, $b$-matching~\cite{tcs/KalyanasundaramP00}, and Adwords~\cite{jacm/MehtaSVV07, esa/BuchbinderJN07, stoc/DevanurJ12, focs/HuangZZ20}.

On the other hand, the one-sided arrival model is restricted and is not able to capture more dynamic scenarios in reality. To this end, two generalizations of the classical one-sided arrival model have been proposed that allow all vertices of a graph arriving online and allow non-bipartite graphs.


\paragraph{General Vertex Arrival.}
Wang and Wong~\cite{icalp/WangW15} proposed the online matching with general vertex arrival.
In this model, all vertices of an underlying graph arrive online. Upon the arrival of a vertex, the edges between it and all previously arrived vertices are revealed. The algorithm can either match the new vertex to its unmatched neighbors, or leave it unmatched for future.

The one-sided arrival model is a special case when offline vertices all arrive before online vertices.

\paragraph{Fully Online Matching.}
Huang et al.~\cite{jacm/HuangKTWZZ20} proposed the fully online matching problem. In this model, each time step can either be the arrival or the deadline of a vertex. On the arrival of a vertex, all edges between it and all previously arrived vertices are revealed. However, the algorithm does not need to make matching decisions at this point. On the deadline of a vertex, the algorithm must decide whether to match it to an unmatched neighbor, or leave it unmatched. It is assumed that all neighbors of a vertex arrives before its deadline.  

The one-sided arrival model is a special case when offline vertices arrive at the beginning and have deadlines at the end, and each online vertex has its deadline right after its arrival. Moreover, observe that any algorithm for the general vertex arrival model can be directly applied to the fully online matching model and would attain the same competitive ratio. 


\paragraph{Previous Results.} For the general vertex arrival model, Wang and Wong proposed a dual-based $0.526$-competitive fractional algorithm and proved an upper bound of $0.625$ for hardness. Gamlath et al.~\cite{focs/GamlathKMSW19} gave the first non-trivial randomized integral algorithm that achieves a competitive ratio of $0.5+\epsilon$. Buchbinder et al.~\cite{algorithmica/BuchbinderST19} improved the upper bound to $0.591$. 

For the fully online matching model, Huang et al. \cite{jacm/HuangKTWZZ20} showed that the Ranking algorithm is $0.5671$-competitive on bipartite graphs and $0.5211$-competitive on general graphs. Huang et al.~\cite{soda/HuangPTTWZ19} proved that the Water-filling algorithm is $2-\sqrt{2} \approx 0.585$-competitive for the fractional version of the problem. Recently, Huang et al.~\cite{focs/HuangTWZ20} beat the two classical algorithm Ranking and Water-filling by designing a $0.569$-competitive integral algorithm on bipartite graphs, and a $0.592$-competitive fractional algorithm on general graphs. On the negative side, Eckl et al.~\cite{orl/EcklKLS21} proved that no fractional algorithm can be better than $0.629$-competitive.


\subsection{Our Contributions}



Despite the growing interests on the general vertex arrival model and the fully online model, there are large gaps between the state-of-the-art upper and lower bounds. In this work, we advance the study of this line of research by focusing on the \emph{fractional} versions of the two settings.

On the positive side, we provide a $0.6$-competitive fractional algorithm for the fully online model, improving the previous $0.592$ competitive ratio of Huang et al.~\cite{focs/HuangTWZ20}, and a new $0.526$-competitive fractional algorithm for the general vertex arrival model, matching the competitive ratio of Wang and Wong~\cite{icalp/WangW15}.
On the negative side, We improve the hardness of the fully online model to $0.6132$ and the hardness of the general vertex arrival model to $0.584$, respectively. Both of them hold even for bipartite graphs. Refer to Figure~\ref{fig:results} for a summary of previous and our results. 
\begin{figure}[H]
    \center
    \includegraphics[scale = 0.55]{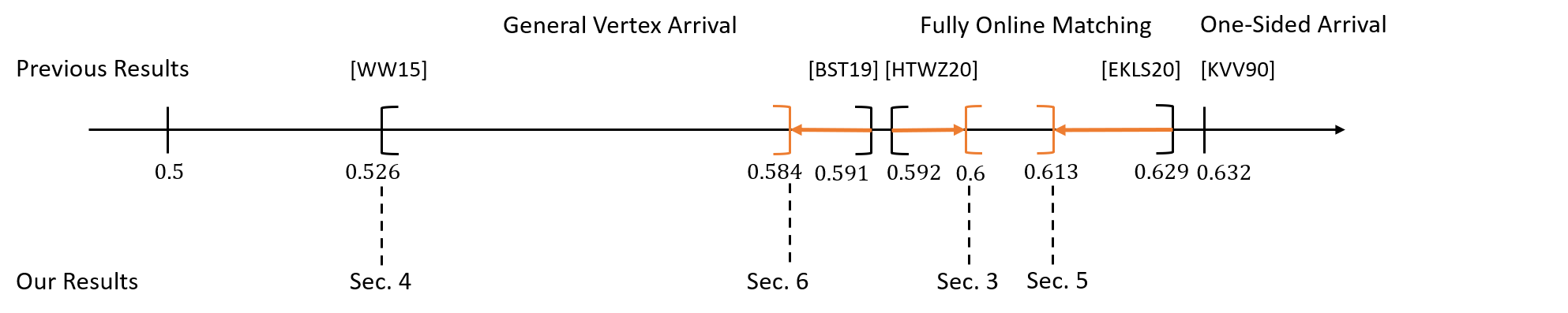}
    \caption{A summary of previous results and our results for the fractional online matching problems.}
    \label{fig:results}
\end{figure}

\paragraph{Economic View: History-based Pricing.}




Our algorithms are designed based on the economic view of the primal dual analysis. 
We start with the classical water-filling algorithm in the one-sided arrival model as an example. On the arrival of each online vertex, the algorithm continuously matches it to the least matched neighbor. Associated with the primal dual analysis, we interpret this algorithm as the following two steps: 1) each offline vertex $v$ maintains a dynamic price $f(x_v)$ that only depends on its current matched portion; 2) each online vertex $u$ continuously matches to the neighbor $v$ with the cheapest price and attains utility $1-f(x_v)$.
Based on it, Huang et al.~\cite{focs/HuangTWZ20} designed Eager Water-filling algorithm for the fully online model. We also present it in an economic view: 1) every vertex matches a dynamic price depending on its matched portion; 2) on the arrival of each vertex $u$, it continuously matches to the cheapest neighbor $v$ until $f(x_u)+f(x_v) > 1$; 3) Finally, on the deadline of each vertex, it continuously matches to the cheapest neighbor.

Unfortunately, prior to our work, there is no such an intuitive interpretation for the general vertex arrival model, although the $0.526$-competitive algorithm by Wang and Wong~\cite{icalp/WangW15} is also based on a primal dual analysis. A straightforward adaption of the eager water-filling algorithm to the general vertex arrival model is to apply the same first two steps, since there is no deadline in the general vertex arrival model.
However, this natural algorithm fails to achieve any non-trivial competitive ratio for arbitrary choice of $f$. (See Appendix~\ref{app:fail} for a detailed discussion.)

We resolve this issue and restore an intuitive economic interpretation for the general vertex arrival model by modifying the pricing step of the above algorithm. We introduce a novel pricing scheme called \emph{history-based pricing} that captures more information of the current matching status, than the current matched portion alone. Specifically, we use an extra number $a_u$ to record the active matched portion of each vertex $u$, i.e. the matched portion on its arrival, and then set a dynamic pricing of $g(a_u,x_u)$. The original pricing scheme is a special case of ours when the function $g$ is irrelevant from the first dimension. We recover the $0.526$-competitive fractional algorithm using the history-based pricing.

Furthermore, the same algorithmic idea leads to an immediate improvement of eager water-filling in the fully online model. By optimizing the function $g$ using factor-revealing LP techniques, we improve the competitive ratio from $0.592$ to $0.6$.

\paragraph{Hardness Results.}
For the fully online model, the previous hardness results of Huang et al.~\cite{jacm/HuangKTWZZ20} and Eckl et al.~\cite{orl/EcklKLS21} is only slightly smaller than $1-1/e$, and their constructions share the same structure of a tree plus an upper triangle at the end. Moreover, as observed by Eckl et al.~\cite{orl/EcklKLS21}, a similar structure cannot lead to a much smaller upper bound. 
We construct a different family of instances, that has a tree structure and an upper triangle structure alternatively. Our construction substantially improves the upper bound to $0.613$. We remark that Huang et al.~\cite{jacm/HuangKTWZZ20} used a similar alternating structure to show a tight upper bound of the Ranking algorithm. 

For the general vertex arrival mode, we build on the bad instance of \cite{icalp/WangW15}. Their construction is a two-stage instance that the adversary can choose when to stop. The first stage is a simple complete bipartite graph for which any $\Gamma$-competitive algorithm must match on average $\Gamma$ fraction for all vertices. Then they introduce extra vertices together with an upper-triangle structure as the second stage. The intuition is that all matchings made at the first stage are wrong and are punished by the second stage. However, since the algorithm knows the instance would end afterwards, it would greedily match vertices at the second stage.

Our idea is to further punish/forbidden the greedy behavior at the second stage by having a third stage. Though the intuition is straightforward, it is hard to implement since the matching status after the second stage need quite a few parameters to fully capture. We believe a tight hardness of the general arrival model would necessarily need multiple stages, and our construction would be useful for future studies of the problem.
Finally, we remark the two upper bounds are proved for any fractional algorithms, that directly imply upper bounds for any randomized algorithms.

\subsection{Other Related Works}
Simultaneous to the work of Huang et al.~\cite{jacm/HuangKTWZZ20}, Ashlagi et al.~\cite{ec/AshlagiBDJSS19} proposed the online windowed matching problem that is an edge-weighted version of fully online matching with the first-in-first-out assumption.
The general vertex arrival model has also been applied to the prophet matching setting and the secretary matching setting by Ezra et al.~\cite{ec/EzraFGT20, corr/EzraFGT20}. In the prophet setting, each edge has a weight that is drawn independently from a priori known distribution. In the secretary setting, edges are also weighted and all vertices arrive in a random order. The goal is to maximize the total weight of the matching in both models. Ezra et al. give an optimal $\frac{1}{2}$-competitive algorithm for the prophet setting and an optimal $\frac{5}{12}$-competitive algorithm for the secretary setting.

Besides two arrival models studied in this paper, researches have also studied the edge arrival model. Gamlath et al.~\cite{focs/GamlathKMSW19} gave a strong negative result showing that the $0.5$-competitive greedy algorithm is the best possible. Buchbinder, Segev, and Tkach~\cite{algorithmica/BuchbinderST19} provided a positive result in the edge arrival setting when the underlying graph is a forest. Epstein et al.~\cite{stacs/EpsteinLSW13} studied a relaxed setting of edge arrival by allowing free disposals of edges and designed constant competitive algorithms for weighted graphs.

\section{Preliminaries}
In this section, we present the two models formally and then introduce the standard primal dual analysis for matching.
\paragraph{General Vertex Arrival.}
Consider an underlying input graph $G=(V,E)$, where $V$ arrives online and we will reveal $E$ along with $V$. At each $v$'s arrival time:
\begin{itemize}
    \item All the edges between $v$ and previously arrived vertices are revealed. 
    \item The algorithm fractionally matches the newly revealed edges. And these edges cannot be matched afterwards. 
\end{itemize}
Every vertex cannot be matched more than one unit. The goal is to maximize the size of the fractional matching. 
\paragraph{Fully Online Matching.}
Consider an underlying input graph $G=(V,E)$, where $V$ arrives online and we will reveal $E$ along with $V$. Each time step is one of the following:
\begin{itemize}
    \item The arrival of vertex $v$: all edges between $v$ and previously arrived vertices are revealed. 
    \item The deadline of vertex $v$: this is the last chance we can (fractionally) match $v$ to the remaining vertices in the graph.
\end{itemize}
We assume that all neighbors of $v$ arrive before $v$'s deadline. Every vertex cannot be matched more than one unit. The goal is to maximize the size of the fractional matching. 
As observe by Huang et al.~\cite{jacm/HuangKTWZZ20}, it is without loss of generality to consider lazy algorithms that only make matching decisions on the deadlines of vertices. On the other hand, we study algorithms that make matching decisions on the arrival of vertices for a cleaner analysis.

\paragraph{Primal Dual Analysis} Following the standard competitive analysis, we say an online algorithm is $\Gamma$-competitive if for any input instance, its solution is always at least $\Gamma$ times the optimal solution. We apply the standard primal dual framework. First, we present the standard matching LP and its dual, that is common for the two models we study in this paper.
\begin{align*}
\max: \quad & \textstyle \sum_{(u,v)\in E} x_{uv} && \qquad\qquad & \min: \quad & \textstyle\sum_{u \in V} \alpha_u\\
\text{s.t.} \quad & \textstyle \sum_{v:(u,v)\in E} x_{uv} \leq 1 && \forall u\in V & \text{s.t.} \quad & \alpha_u + \alpha_v \geq 1 && \forall (u,v)\in E \\
& x_{uv} \geq 0 && \forall (u,v)\in E & & \alpha_u \geq 0 && \forall u \in V
\end{align*}

We shall set the primal variables $x_{uv}$'s to be the matched portion of edge $(u,v)$ of our algorithm, and maintain the dual variables at the same time.
\begin{lemma}
    \label{lem:primal-dual}
    An online primal dual algorithm is $\Gamma$-competitive if we have:
    \begin{itemize}
        \item \emph{Approximate dual feasibility:~}
            $\forall (u, v) \in E,~\alpha_u + \alpha_v \ge \Gamma$;
        \item \emph{Reverse weak duality:~} the primal objective $P$ and the dual objective $D$ satisfies that $P \ge D $.
    \end{itemize}
\end{lemma}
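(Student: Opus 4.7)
The plan is to treat the lemma as a direct consequence of LP weak duality applied to a rescaling of the dual assignment produced by the algorithm. First, I would observe that the primal LP stated above is precisely the fractional matching LP, so its optimum upper-bounds the offline optimum $\mathrm{OPT}$; simultaneously, since the primal variables $x_{uv}$ are set to the matched portion of edge $(u,v)$ by the algorithm, the primal objective $P$ equals the size $\mathrm{ALG}$ of the algorithm's fractional matching. In particular, the $x$ produced by the algorithm is automatically primal feasible because no vertex is matched beyond one unit.

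Next, I would rescale the dual assignment by setting $\alpha'_u \eqdef \alpha_u/\Gamma$ for each $u \in V$. The approximate dual feasibility hypothesis yields $\alpha'_u + \alpha'_v = (\alpha_u + \alpha_v)/\Gamma \ge 1$ on every edge $(u,v) \in E$, and nonnegativity is clearly preserved, so $\{\alpha'_u\}_{u \in V}$ is genuinely feasible for the dual LP. Since the dual is a minimization LP, weak duality gives $\mathrm{OPT} \le \sum_{u \in V} \alpha'_u = D/\Gamma$, i.e., $D \ge \Gamma \cdot \mathrm{OPT}$.

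Finally, chaining this with the reverse weak duality hypothesis $P \ge D$ yields $\mathrm{ALG} = P \ge D \ge \Gamma \cdot \mathrm{OPT}$, which is exactly the definition of $\Gamma$-competitiveness. I do not expect any obstacle here: the lemma is essentially a bookkeeping statement that codifies the primal-dual template reused throughout the paper. The only subtle point to keep straight is the direction of the rescaling — because the dual is a minimization LP, it is $\alpha_u/\Gamma$ rather than $\Gamma \alpha_u$ that must be made dual feasible in order to obtain an upper bound on $\mathrm{OPT}$.
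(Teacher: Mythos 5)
Your proof is correct and is the standard primal-dual argument that the paper implicitly relies on (the paper states this lemma without proof, as folklore). The rescaling $\alpha'_u = \alpha_u/\Gamma$, LP weak duality to get $D \ge \Gamma \cdot \mathrm{OPT}$, and chaining with $P \ge D$ is exactly the intended reasoning.
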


\section{Improved Algorithm for Fully Online Matching}
\label{sec:fully_lower}

We first recall the Eager Water-filling algorithm proposed by Huang et al.~\cite{focs/HuangTWZ20} and present it from an economic view.
\begin{itemize}
	\item Each vertex $v$ maintains a price that depends on its current water level.
	\item On the arrival of each vertex $u$, fractionally match it to the neighbor with the cheapest price, until the cheapest price among all neighbors and $u$'s own price sum to $1$.
	\item On the departure of each vertex $u$, fractionally match it to the neighbor with the cheapest price to maximize its own utility.
\end{itemize}

The standard water-filling algorithm only consists of the first and third steps. In the design of Water-filling~\cite{soda/HuangPTTWZ19} and Eager Water-filling~\cite{focs/HuangTWZ20}, the price of a vertex solely depends on its water level, i.e. the price of a vertex $v$ = $f(x_v)$ for some increasing function $f$. As a result, the selection of a vertex with the cheapest price is equivalent to the selection of a vertex with the lowest water level.

We improve upon the Eager Water-filling algorithm by modifying its pricing policy.  
In principle, one can set a price based on all historical information of the instance, instead of the specific local information of its water level. We implement this idea by introducing a two-dimensional price function. Our design of the algorithm is directly motivated by the primal-dual analysis.

\paragraph{Our Algorithm.} Fix an increasing function $f:[0,1] \to [0,1]$ and a two-dimensional function $g:[0,1]^2 \to [0,1]$ that is non-decreasing in both dimensions. 
All matched portion $x_{uv}$'s, $x_u$'s, and the dual variable $\alpha_u$'s  are initialized to $0$ upon $u$'s arrival. 
\begin{itemize}
	\item On the arrival of each vertex $u$, fractionally matches $u$ to the neighbor $v$ with the cheapest price $\rho_v = g(a_v, x_v)$, as long as $f(x_u) + g(a_v, x_v) \le 1$. When the process ends, let $a_u = x_u$ be the \emph{active} water level of $u$.
	\item On the departure of each vertex $u$, fractionally match it to the neighbor $v$ with the cheapest price $g(a_v, x_v)$.
\end{itemize}

In both steps, when $x_{uv}$ increases by $dx$ at the arrival or departure of $u$, update $\alpha_u, \alpha_v$ by 
\[
d\alpha_u = 1-g(a_v,x_v) dx \quad \text{and} \quad d\alpha_v = g(a_v, x_v) dx
\]
Refer to Algorithm~\ref{alg:fully} for the pseudocode of our algorithm.
\paragraph{Remark.} Eager Water-filling algorithm is a special case of the above algorithm when $g$ is set to be the same function $f$, i.e. $g(a, x) = f(x)$ for all $a$.

\begin{algorithm}
	\caption{History-based Pricing Algorithm for Fully Online Matching}
	\label{alg:fully}
	\begin{algorithmic}
	\If{a vertex $u$ arrives} 
		\For{$\forall v \in N(u)$} \Comment{$N(u)$ denotes $u$'s current neighbor set}
		\State $x_{u,v} \gets 0$ \Comment{$x_{u,v}$ denotes the matched portion of $(u,v)$}
		\EndFor
		\State $x_{u} \gets 0$ \Comment{$x_u$ denotes the water level of $u$}
		\State $v \gets \arg\min_{v\in N(u)}g(a_v,x_{v})$
		\While{$f(x_u) + g(a_v,x_v)\leq 1$}
			\State $x_{u,v}  ~{\scriptstyle  \mathrel{+}=}~ dx$ \Comment{Match $u$, $v$ (small enough portion $dx$).}
			\State $x_{u} ~{\scriptstyle  \mathrel{+}=}~ dx$, $x_{v}  ~{\scriptstyle  \mathrel{+}=}~ dx$ \Comment{Water level update}
			\State $\alpha_u   ~{\scriptstyle  \mathrel{+}=}~ (1 - g(a_{v},x_{v}))dx$,  $\alpha_{v}  ~{\scriptstyle  \mathrel{+}=}~ g(a_{v},x_{v})dx$ \Comment{
				Dual update}
			\State $v \gets \arg\min_{v\in N(u)}g(a_v,x_{v})$.
		\EndWhile
		\State $a_u \gets x_u$ \Comment{Fix $u$'s active water level.}
	\EndIf
	\If{a vertex $u$ departs} 
		\While{$x_u < 1$}
			\State $v \gets \arg\min_{v\in N(u)}g(a_v,x_{v})$ \Comment{$N(u)$ becomes $u$'s final neighbor set}
			\State $x_{u,v}  ~{\scriptstyle  \mathrel{+}=}~ dx$ \Comment{Match $u$, $v$ (small enough portion $dx$).}
			\State $x_{u} ~{\scriptstyle  \mathrel{+}=}~ dx$, $x_{v}  ~{\scriptstyle  \mathrel{+}=}~ dx$ \Comment{Water level update}
			\State $\alpha_u   ~{\scriptstyle  \mathrel{+}=}~ (1 - g(a_{v},x_{v}))dx$,  $\alpha_{v}  ~{\scriptstyle  \mathrel{+}=}~ g(a_{v},x_{v})dx$ \Comment{
				Dual update}
		\EndWhile
	\EndIf
	\end{algorithmic}
\end{algorithm}

\paragraph{Analysis.} 
Our proofs are inherently similar to the analysis of Eager Water-filling by \cite{focs/HuangTWZ20}. Each lemma below has a corresponding counterpart in the analysis of Eager Water-filling. After all, the family of algorithms we proposed include Eager Water-filling as a special case. Nevertheless, we need to do quite non-trivial transformation of the derived lower bounds, so that we can apply factor revealing LP techniques to optimize the functions $f,g$ of our algorithm. 

Fix a pair of neighbors $(u,v)$. Suppose $u$'s deadline is before $v$'s deadline. Let $p_u$ be the water level of $u$ right \emph{before} $u$'s deadline. 
Let $p_v$ be the water level of $v$ right \emph{after} $v$'s deadline. 

\begin{lemma}
Right after $u$'s deadline, we have
\begin{equation}
\label{eq:gain}
\alpha_u + \alpha_v \ge a_u \cdot f(a_u) + \int_{a_u}^{p_u} g(a_u, x_u) dx_u + a_v \cdot f(a_v) + \int_{a_v}^{p_v} g(a_v, x_v) dx_v + (1-p_u) \cdot (1-g(a_v, p_v)).
\end{equation}
\end{lemma}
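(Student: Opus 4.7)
The plan is to decompose $\alpha_u + \alpha_v$ into contributions accumulated during the distinct phases of the algorithm---$u$'s arrival, $u$'s passive period, $u$'s deadline, $v$'s arrival, and $v$'s passive period up to $u$'s deadline---and to bound each piece so as to recover exactly the five summands on the right-hand side of (\ref{eq:gain}).

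First I would handle the arrival phases. The water-filling rule makes the cheapest available price $\rho$ a non-decreasing function of $x_u$ while $u$ is arriving; since the while loop exits only when $f(x_u) + \rho > 1$, the terminal price satisfies $\rho(a_u) \le 1 - f(a_u)$, and by monotonicity $\rho(x_u) \le 1 - f(a_u)$ on the whole interval $[0, a_u]$. Integrating the dual update $d\alpha_u = (1 - \rho) dx_u$ then gives $A_u \ge a_u f(a_u)$, the term $T_1$; the symmetric argument at $v$'s arrival delivers $a_v f(a_v)$, the term $T_3$.

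The passive periods are essentially bookkeeping: every time an external vertex matches into $u$ during $u$'s passive window, the dual update assigns exactly $g(a_u, x_u) dx_u$ to $\alpha_u$, and integrating over $x_u \in [a_u, p_u]$ produces $\int_{a_u}^{p_u} g(a_u,x_u)dx_u$. The identical argument for $v$ (reading $p_v$ as $v$'s water level at the moment right after its deadline, with the same accumulation rule $g(a_v, x_v)\,dx_v$ holding throughout $v$'s passive window and at $v$'s own departure against external partners) yields the fourth term. Finally, for $u$'s deadline I exploit that $v$ is alive throughout with water level at most $p_v$, so the cheapest available neighbor has price at most $g(a_v, p_v)$, making each $dx$ of deadline matching contribute at least $(1 - g(a_v, p_v))\,dx$ to $\alpha_u$; integrating from $p_u$ up to $1$ produces the fifth term $(1 - p_u)(1 - g(a_v, p_v))$.

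The main obstacle I expect is the corner case in which $u$ does not reach $x_u = 1$ during its deadline; this can happen only if every live neighbor of $u$, in particular $v$, has already been pushed to water level $1$, so that $p_v = 1$ and the deadline loop stalls. In that case the naive integration for $D_u$ yields only $(f_u - p_u)(1 - g(a_v, 1))$ instead of $(1 - p_u)(1 - g(a_v, 1))$, and the missing mass must be recovered from the large passive contribution that $\alpha_v$ has already absorbed by virtue of $v$ running all the way to $x_v = 1$. Making this compensation clean---or, alternatively, arguing upfront that $u$ always reaches $x_u = 1$ by invoking the residual capacity of $v$---is the delicate step I would work out before writing the full proof.
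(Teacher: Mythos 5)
Your decomposition of $\alpha_u + \alpha_v$ into arrival, passive, and deadline contributions matches the paper's proof step for step: arrival gains bounded by $a\cdot f(a)$ via monotonicity of the cheapest price and the while-loop's exit condition, passive gains by integrating the dual update $g(a_\cdot, x_\cdot)\,dx$, and the deadline term from $v$'s availability at price at most $g(a_v, p_v)$.

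Two small remarks. The parenthetical claim that ``the same accumulation rule $g(a_v, x_v)\,dx_v$ [holds] \dots at $v$'s own departure against external partners'' is wrong: when $v$ departs it is the active side, so $\alpha_v$ gains $(1-g(a_w,x_w))\,dx$ for partner $w$, not $g(a_v,x_v)\,dx_v$, and the guard $f(x_v)+g\le 1$ that made the former dominate the latter at arrival is not enforced at deadlines. This is moot, however, because the lemma is asserted right after $u$'s deadline, which precedes $v$'s; the sentence defining $p_v$ as the level ``right after $v$'s deadline'' is an inconsistency in the paper, and the body of the paper's own proof reads $p_v$ as $v$'s level right after $u$'s deadline---under which the whole segment $[a_v,p_v]$ is passive and your accumulation rule holds. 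For the corner case you flag: if $v$ has residual capacity at $u$'s deadline, the loop cannot stall and $u$ reaches $1$; if $v$ is full, then $p_v=1$ and the boundary condition $g(\cdot,1)=1$ (implicit in $h(\tau,\theta)=\inf\{x:g(f^{-1}(\tau),x)\ge\theta\}$ together with the LP constraint $h(\tau,1)=1$) makes the term $(1-p_u)(1-g(a_v,p_v))$ vanish outright, so no compensation from $\alpha_v$ is needed.
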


\begin{proof}
On the arrival of the vertex $v$, it fractionally matches the neighbor $z$ with the cheapest price as long as $f(x_v) + g(a_z, x_z) \le 1$. Hence, during the process, the marginal gain of $v$ is $d \alpha_v = (1-g(a_z, x_z)) dx \ge f(a_v) dx$. Here, the inequality holds since the utility $1-g(a_z,x_z)$ decreases during the process as the cheapest price increases. We use the fact that $f$ is a continuous function. 

Thus, $\alpha_v \ge a_v \cdot f(a_v)$ after $v$'s arrival. Later, $v$ is passively matched and $x_v$ increases from $a_v$ to $p_v$, $\alpha_u$ increases at the rate of $g(a_v, x_v)$. Hence, $\alpha_v \ge a_v \cdot f(a_v) + \int_{a_v}^{p_v}g(a_v, x_v)dx_v$ right after $u$'s deadline.

Similarly, $\alpha_u \ge a_u \cdot f(a_u) + \int_{a_u}^{p_u}g(a_u, x_u)dx_u$ right \emph{before} $u$'s deadline.
On the deadline of $u$, $u$ actively matches the neighbor with the cheapest price, which is at most $g(a_v, p_v)$. Therefore, after the deadline of $u$, $\alpha_u \ge a_u \cdot f(a_u) + \int_{a_u}^{p_u}g(a_u, x_u)dx_u + (1-p_u) \cdot (1-g(a_v, p_v))$.

Summing up the total gains of $\alpha_u, \alpha_v$ gives the claimed bound.
\end{proof}

\paragraph{Reformulating the lower bound.}
To facilitate the analysis, we define the following function $h: [0,1]^2 \to [0,1]$ that has a one-one correspondence with $g$:
\[
h(\tau,\theta) = \inf \{x: g(f^{-1}(\tau), x) \ge \theta\}. 
\]
Our function $g$ shall be only defined on $(a, x) \in [0,1]^2$ with $x \ge a$. Hence, $h(\tau,\tau)=f^{-1}(\tau)$. Moreover, our functions $f,g$ satisfy that $f(x) = g(x,x)$ for all $x \in [0,1]$.

Then for any $0 \le \tau \le \theta \le 1$, we have
\[
\int_{h(\tau,\tau)}^{h(\tau,\theta)} g(f^{-1}(\tau),x) dx = \theta \cdot h(\tau, \theta) - \tau \cdot h(\tau, \tau) - \int_{\tau}^{\theta} h(\tau, y) dy.
\]

The following lemma studies the two cases depending on whether $u$ arrives earlier than $v$ or $v$ arrives earlier than $u$ and reformulate the lower bound~\eqref{eq:gain}.

\begin{lemma}
\label{lem:uearlier}
If $u$ arrives earlier than $v$, then
\begin{multline}
\label{eq:uearlier}
\alpha_u + \alpha_v \ge \min_{\substack{\tau_u \le \theta_u \\ 1-\theta_u \le \tau_v \le \theta_v}} \bigg\{ \theta_u \cdot h(\tau_u, \theta_u) - \int_{\tau_u}^{\theta_u} h(\tau_u, y_u) dy_u \\
+ \theta_v \cdot h(\tau_v, \theta_v) - \int_{\tau_v}^{\theta_v} h(\tau_v, y_v) dy_v + (1-h(\tau_u, \theta_u)) \cdot (1-\theta_v) \bigg\}
\end{multline}
\end{lemma}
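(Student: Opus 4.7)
The plan is to rewrite the right-hand side of \eqref{eq:gain} by substituting the natural change of variables
\[
\tau_u = f(a_u), \quad \theta_u = g(a_u, p_u), \quad \tau_v = f(a_v), \quad \theta_v = g(a_v, p_v),
\]
so that $a_u = f^{-1}(\tau_u) = h(\tau_u, \tau_u)$, $p_u = h(\tau_u, \theta_u)$, and analogously for $v$. Under this substitution, the prearrival contribution $a_u \cdot f(a_u) = \tau_u \cdot h(\tau_u,\tau_u)$, and the identity just stated immediately before the lemma converts $a_u f(a_u) + \int_{a_u}^{p_u} g(a_u, x_u) \, dx_u$ into $\theta_u \cdot h(\tau_u,\theta_u) - \int_{\tau_u}^{\theta_u} h(\tau_u, y) \, dy$. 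Applying the same transformation to the $v$-terms and noting $1 - p_u = 1 - h(\tau_u, \theta_u)$ and $1 - g(a_v, p_v) = 1 - \theta_v$ produces the objective inside the $\min$ in \eqref{eq:uearlier}.

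Next I verify the constraints. The inequalities $\tau_u \le \theta_u$ and $\tau_v \le \theta_v$ both follow from $a \le p$ for the two vertices together with the monotonicity of $h$ in its second argument (equivalently, the monotonicity of $g$). The nontrivial constraint is $1 - \theta_u \le \tau_v$, and this is where the hypothesis ``$u$ arrives earlier than $v$'' enters. Since $u$ has already arrived by the time $v$ arrives, $u$ is a candidate neighbor during $v$'s active matching phase, and its water level at that moment is some $x_u' \in [a_u, p_u]$. The active phase of $v$ halts with $x_v = a_v$ either because $v$ is fully matched (so $a_v = 1$, which forces $\tau_v = 1$ and the constraint is trivial) or because the cheapest price among $v$'s neighbors exceeds $1 - f(a_v)$. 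In the latter case, $g(a_u, x_u') \ge 1 - f(a_v) = 1 - \tau_v$, and since $x_u' \le p_u$ and $g(a_u, \cdot)$ is nondecreasing, $\theta_u = g(a_u, p_u) \ge g(a_u, x_u') \ge 1 - \tau_v$, which is exactly $1 - \theta_u \le \tau_v$.

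Having established that the particular tuple $(\tau_u, \theta_u, \tau_v, \theta_v)$ realized by the execution lies in the feasible region of the minimization, the bound in \eqref{eq:gain} is at least the value of the objective at this tuple, which is in turn at least the minimum over all feasible tuples. This yields \eqref{eq:uearlier}.

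The main obstacle is the constraint $1 - \theta_u \le \tau_v$: it requires being careful about what ``cheapest neighbor'' means during $v$'s active matching when $u$ may not be chosen, and about the corner case in which $v$'s active phase ends due to saturation rather than a price threshold. Once one realizes that nonselection of $u$ directly gives the price lower bound $g(a_u, x_u') \ge 1 - f(a_v)$, and that $p_u \ge x_u'$ lets us transfer the bound to $\theta_u$, the proof is a straightforward substitution. Minor care is also needed to ensure that the function $h$ is well-defined at the chosen arguments (equivalently, that $\tau_u \le \theta_u \le g(a_u, 1)$ and similarly for $v$), which follows from the ranges of $f$ and $g$ combined with $p_u, p_v \in [0,1]$.
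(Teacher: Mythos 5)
Your substitution and your derivation of the constraint $1-\theta_u \le \tau_v$ from the stopping condition of $v$'s arrival both match the paper's argument, but there is a gap in the handling of $p_u$. You assert $p_u = h(\tau_u, \theta_u)$ as an identity, whereas the definition of $h$ only gives $p_u \ge h(\tau_u, \theta_u)$, with equality only when $g(a_u, \cdot)$ is strictly increasing near $p_u$. The inequality direction is the problem: $p_u$ appears in \eqref{eq:gain} in two terms that move in opposite directions as $p_u$ grows — the integral $\int_{a_u}^{p_u} g(a_u, x_u)\,dx_u$ increases, but the last term $(1-p_u)\cdot(1-g(a_v, p_v))$ decreases. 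Replacing $p_u$ with the lower value $h(\tau_u,\theta_u)$ inside both terms therefore does not obviously yield a lower bound, and the analogous replacement for $p_v$ is unproblematic only because $p_v$ appears solely in a monotone integral once one writes $g(a_v,p_v) = \theta_v$ exactly.

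The paper closes this gap by a monotonicity argument: letting $G$ denote the right-hand side of \eqref{eq:gain}, it computes $\partial G / \partial p_u = g(a_u, p_u) - 1 + g(a_v, p_v) \ge 0$, using $g(a_v, p_v) \ge f(a_v)$ and the stopping condition $g(a_u,p_u) + f(a_v) \ge 1$. Since $G$ is non-decreasing in $p_u$ and $p_u \ge h(\tau_u, \theta_u)$, replacing $p_u$ with $h(\tau_u, \theta_u)$ can only decrease $G$, giving the desired lower bound without assuming equality. Note that the non-negativity of this derivative is exactly the inequality $\theta_u + \theta_v \ge 1$ that you establish for the constraint $1-\theta_u \le \tau_v$, so the ingredients of the repair are already present in your write-up; you only need to invoke the monotonicity of $G$ in $p_u$ rather than assert an equality that need not hold for a general non-decreasing $g$. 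Your explicit treatment of the corner case where $v$ saturates on arrival ($a_v = 1$) is a welcome clarification of the paper's terse ``by the design of our algorithm.''
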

\begin{proof}
Let $\tau_u = f(a_u), \tau_v= f(a_v), \theta_u = g(a_u, p_u)$ and $\theta_v = g(a_v, p_u)$. 
With the notion of $h$, we rewrite Equation~\eqref{eq:gain}.
\begin{multline}
\label{eq:uearlier_gainv}
a_v \cdot f(a_v) + \int_{a_v}^{p_v} g(a_v, x_v) dx_v = \tau_v \cdot h(\tau_v,\tau_v) + \int_{h(\tau_v,\tau_v)}^{p_v} g(f^{-1}(\tau_v), x_v) dx_v \\
\ge \tau_v \cdot h(\tau_v,\tau_v) + \int_{h(\tau_v,\tau_v)}^{h(\tau_v, \theta_v)} g(f^{-1}(\tau_v), x_v) dx_v = \rho_v \cdot h(\alpha_v, \rho_v) - \int_{\alpha_v}^{\rho_v} h(\alpha_v,y_v) dy_v,
\end{multline}
where the inequality holds from the fact that $p_v \ge h(\tau_v,\theta_v)$.
Let $G$ be the right hand side of Equation~\eqref{eq:gain}. We calculate the derivative of $G$ over $p_u$:
\[
\frac{\partial G}{\partial p_u} = g(a_u, p_u) - 1 + g(a_v, p_v) \ge g(a_u, p_u) - 1 + f(a_v) \ge 0.
\]
The last inequality holds due to the stopping condition of our algorithm on $v$'s arrival. Consequently, the minimum of $G$ is achieved when $p_u = h(\tau_u,\theta_u)$. 
Thus, we have 
\begin{align}
&\phantom{\ge} a_u \cdot f(a_u) + \int_{a_u}^{p_u} g(a_u, x_u) dx_u + (1-p_u) \cdot (1-g(a_v,p_v)) \notag \\
& \ge \tau_u \cdot h(\tau_u,\tau_u) + \int_{h(\tau_u,\tau_u)}^{h(\tau_u,\theta_u)} g(f^{-1}(\tau_u),x_u) dx_u + (1-h(\tau_u,\theta_u)) \cdot (1-\theta_v) \notag \\
& = \theta_u \cdot h(\tau_u, \theta_u) - \int_{\tau_u}^{\theta_u} h(\tau_u, y_u) dy_u + (1-h(\tau_u, \theta_u)) \cdot (1-\theta_v). \label{eq:uearlier_gainu}
\end{align}
We conclude the proof by summing up \eqref{eq:uearlier_gainv} and \eqref{eq:uearlier_gainu} and taking the minimum over all possible values of $\tau_u,\theta_u,\tau_v$ and $\theta_v$. Notice that since $u$ arrives earlier than $v$, we must have $\tau_v = f(a_v) > 1-g(a_u,p_u) = 1-\theta_u$, by the design of our algorithm.
\end{proof}

\begin{lemma}
\label{lem:vearlier}
If $v$ arrives earlier than $u$, then
\begin{equation}
\label{eq:vearlier}
	\alpha_u + \alpha_v \ge \min_{\substack{\tau_v \le \theta_v}} \left\{ \theta_v \cdot h(\tau_v, \theta_v) - \int_{\tau_v}^{\theta_v} h(\tau_v, y_v) dy_v + 1 - \theta_v \right\}
\end{equation}
\end{lemma}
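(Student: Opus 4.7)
The plan is to specialize the general bound \eqref{eq:gain} to the case where $v$ arrives earlier than $u$, and then use the algorithm's arrival-time stopping rule (now applied at $u$'s arrival, since $v$ is already present) to eliminate the $a_u, p_u$ terms from the expression. Set $\tau_v = f(a_v)$ and $\theta_v = g(a_v, p_v)$; since $a_v \le p_v$ and $g$ is non-decreasing in the second argument, $\tau_v = g(a_v, a_v) \le g(a_v, p_v) = \theta_v$, which justifies the range of the minimization.

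For $v$'s contribution in \eqref{eq:gain}, I would truncate the $[a_v, p_v]$ integral at $h(\tau_v, \theta_v) \le p_v$ and apply the identity stated immediately before Lemma~\ref{lem:uearlier}:
\[
a_v \cdot f(a_v) + \int_{a_v}^{p_v} g(a_v, x_v)\, dx_v \;\ge\; a_v \cdot f(a_v) + \int_{h(\tau_v,\tau_v)}^{h(\tau_v,\theta_v)} g(f^{-1}(\tau_v), x)\, dx \;=\; \theta_v \cdot h(\tau_v, \theta_v) - \int_{\tau_v}^{\theta_v} h(\tau_v, y)\, dy.
\]
For $u$'s contribution, the key observation is that at the end of $u$'s arrival step, the cheapest neighbor $z$ satisfies $f(a_u) + g(a_z, x_z) \ge 1$; since $v$ is one of $u$'s neighbors at that moment, and $g(a_v, x_v)$ is at least as large as the cheapest price $g(a_z, x_z)$, and since $x_v$ only grows afterwards, we get $\theta_v = g(a_v, p_v) \ge 1 - f(a_u)$, i.e., $f(a_u) \ge 1 - \theta_v$. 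Combined with the monotonicity $g(a_u, x_u) \ge g(a_u, a_u) = f(a_u)$ for $x_u \ge a_u$, this yields
\[
a_u \cdot f(a_u) + \int_{a_u}^{p_u} g(a_u, x_u)\, dx_u \;\ge\; a_u(1-\theta_v) + (p_u - a_u)(1 - \theta_v) \;=\; p_u \cdot (1 - \theta_v).
\]
Adding the $(1-p_u)(1-g(a_v,p_v)) = (1-p_u)(1-\theta_v)$ term from \eqref{eq:gain} collapses these contributions to exactly $1 - \theta_v$, and taking the infimum over admissible $(\tau_v, \theta_v)$ yields \eqref{eq:vearlier}.

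The main obstacle is justifying $f(a_u) \ge 1 - \theta_v$ cleanly, since one must simultaneously invoke the arrival-time stopping rule at $u$ (to pin down $g(a_v, x_v) \ge 1 - f(a_u)$ at that instant) and the subsequent monotonicity of $x_v$, and also handle the edge case $a_u = 1$ where $u$ saturates during arrival rather than triggering the stopping condition; in that boundary case $p_u = 1$ and the desired $p_u(1-\theta_v)$ bound reduces to $f(1) \ge 1-\theta_v$, which holds since $f(1) = 1$.
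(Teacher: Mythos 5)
Your proof is correct and essentially mirrors the paper's own argument: both set $\tau_v = f(a_v)$, $\theta_v = g(a_v,p_v)$, truncate the $v$-integral at $h(\tau_v,\theta_v)$ and apply the identity, and both derive $f(a_u) \ge 1-\theta_v$ from the stopping rule at $u$'s arrival. The only cosmetic difference is that the paper eliminates the $p_u$ dependence by computing $\partial G/\partial p_u \ge 0$ and substituting $p_u = a_u$, whereas you bound the $u$-terms plus the cross-term directly via $g(a_u,x_u) \ge f(a_u) \ge 1-\theta_v$ for all $x_u \ge a_u$, collapsing them to $p_u(1-\theta_v) + (1-p_u)(1-\theta_v) = 1-\theta_v$; the two routes are equivalent in content and length.
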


\begin{proof}
Let $\tau_u = f(a_u), \tau_v = f(a_v), \theta_u = g(a_u,p_u)$ and $\theta_v = g(a_v,p_u)$. Let $G$ be the right hand side of Equation~\eqref{eq:gain}. We calculate the derivative of $G$ over $p_u$:
\[
\frac{\partial G}{\partial p_u} = g(a_u, p_u) - 1 + g(a_v, p_v) \ge f(a_u) - 1 + g(a_v,p_u) \ge 0.
\]
The last inequality holds due to the stopping condition of our algorithm on $u$'s arrival. Thus, the minimum is achieved when $p_u=a_u$. Thus, we have
\begin{align*}
\alpha_u+\alpha_v & \ge a_v \cdot f(a_v) + \int_{a_v}^{p_v} g(a_v, x_v) dx_v + a_u \cdot f(a_u) + (1-a_u) \cdot (1-g(a_v, p_v)) \\
& \ge \tau_v \cdot h(\tau_v, \tau_v) + \int_{h(\tau_v,\tau_v)}^{h(\tau_v, \theta_v)} g(f^{-1}(\tau_v), x_v) dx_v +(1-\theta_v) \\
& = \theta_v \cdot h(\tau_v, \theta_v) - \int_{\tau_v}^{\theta_v} h(\tau_v, y_v) dy_v +1-\theta_v,
\end{align*}
where the second inequality holds by the fact that $f(a_u) \ge 1-g(a_v,p_v) = 1-\theta_v$.
Finally, taking the minimum over all possible values of $\tau_v,\theta_v$ concludes the proof.
\end{proof}

Let $\Phi_1, \Phi_2$ denote the two lower bounds derived from the above lemmas. Formally,
\begin{align*}
    &\Phi_1(\tau_v,\theta_v) \eqdef \theta_v \cdot h(\tau_v, \theta_v) - \int_{\tau_v}^{\theta_v} h(\tau_v, y_v) dy_v + 1-\theta_v; \\ 
    &\Phi_2(\tau_u,\theta_u,\tau_v,\theta_v) \eqdef \left(
    \begin{aligned}
        &\theta_u \cdot h(\tau_u, \theta_u) - \int_{\tau_u}^{\theta_u} h(\tau_u, y_u) dy_u + \theta_v \cdot h(\tau_v, \theta_v)\\
        & - \int_{\tau_v}^{\theta_v} h(\tau_v, y_v) dy_v + (1-h(\tau_u, \theta_u)) \cdot (1-\theta_v)
    \end{aligned}
    \right).
\end{align*}
Finally, we use the folklore factor revealing lp techniques to optimize the function $h$. A detailed implementation of the factor revealing lp is provided in Appendix~\ref{app:factor_lp}.
\begin{lemma}
\label{lem:opth}
There exists a function $h:[0,1]^2 \to [0,1]$ such that for $\Gamma = 0.6$:
\begin{align}
	& \forall 0\le \tau_v \le \theta_v \le 1: &\Phi_1 \ge \Gamma; \label{eqn:h-feasible-1} \\
    & \forall 0\le \tau_u \le \theta_u \le 1, 1-\theta_u\le \tau_v \le \theta_v \le 1 :\quad  & \Phi_2 \ge \Gamma; \label{eqn:h-feasible-2} 
\end{align}
\end{lemma}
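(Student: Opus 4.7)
The plan is to use the factor-revealing LP technique indicated by the statement: discretize $h$ on a uniform grid of $[0,1]^2$, reduce feasibility of the functional inequalities \eqref{eqn:h-feasible-1}--\eqref{eqn:h-feasible-2} to a finite linear program, and solve it numerically to exhibit an explicit grid of values certifying $\Gamma = 0.6$.

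First I would parameterize $h$ by its values $h_{i,j}$ on an $N \times N$ grid of step $\delta = 1/N$, keeping only the cells with $j \ge i$ since both $\Phi_1$ and $\Phi_2$ only evaluate $h(\tau,\theta)$ when $\theta \ge \tau$. The structural properties of $h$ coming from the definition $h(\tau,\theta) = \inf\{x : g(f^{-1}(\tau), x) \ge \theta\}$ become monotonicity constraints on the grid: $h_{i,j}$ nondecreasing in $j$ (inherited from $g(a,\cdot)$ being nondecreasing), the diagonal $h_{i,i}$ nondecreasing in $i$ (so that $f^{-1}$ is well-defined and increasing), and $h_{i,j} \in [0,1]$.

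Next I would discretize the functional constraints. Each integral $\int_\tau^\theta h(\tau, y)\,dy$ appears with a negative sign in both $\Phi_1$ and $\Phi_2$, so I would upper-bound it by the right-endpoint Riemann sum $\delta \sum_k h_{i,k}$, which is a valid upper bound given monotonicity of $h$ in the second coordinate; this makes the discretized inequality strictly tighter than the continuous one. Each point evaluation $h(\tau,\theta)$ appears with a nonnegative aggregate coefficient: coefficient $\theta_v \ge 0$ in $\Phi_1$, and coefficient $\theta_u + \theta_v - 1 \ge 0$ in $\Phi_2$ after collecting terms and using the hypothesis $\tau_v \ge 1-\theta_u$ together with $\theta_v \ge \tau_v$. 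Hence lower-bounding each such evaluation by the nearest lower-left grid value yields a valid linear lower bound on $\Phi_1, \Phi_2$. Enforcing the resulting $\widetilde{\Phi}_1 \ge 0.6$ at every grid pair $(\tau_v, \theta_v)$ and $\widetilde{\Phi}_2 \ge 0.6$ at every grid tuple $(\tau_u, \theta_u, \tau_v, \theta_v)$ produces a finite LP whose feasibility implies the continuous bounds, and whose optimal $\{h_{i,j}\}$ extends (as a step function) to a valid continuous $h$.

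The main obstacle is computational rather than conceptual: the constraint region of \eqref{eqn:h-feasible-2} is four-dimensional, producing $O(N^4)$ inequalities in $O(N^2)$ variables, so $N$ must be chosen just large enough that LP feasibility is robust under the discretization. The appendix handles this by solving the LP numerically and reporting the explicit table $\{h_{i,j}\}$; once such a table is in hand, the associated step-function $h$ satisfies \eqref{eqn:h-feasible-1} and \eqref{eqn:h-feasible-2} exactly, with $\Gamma = 0.6$.
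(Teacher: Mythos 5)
Your high-level plan matches the paper's: discretize $h$ on a grid, impose the $\Phi_1,\Phi_2$ constraints as linear inequalities, solve the LP numerically, and extract an explicit certificate. Your bookkeeping of the coefficients is also right — the coefficient of $h(\tau_u,\theta_u)$ in $\Phi_2$ is indeed $\theta_u+\theta_v-1 \ge 0$ under the hypothesis $\tau_v \ge 1-\theta_u$, $\theta_v\ge\tau_v$.

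The genuine gap is in the final claim that the step-function extension of the LP solution ``satisfies \eqref{eqn:h-feasible-1} and \eqref{eqn:h-feasible-2} exactly.'' It does not. Fix $\tau_v$ on the grid and let $\theta_v$ slide off-grid into $(\bar\theta_v,\hat\theta_v)$. For a piecewise-constant $h$ one computes
\[
\Phi_1(\tau_v,\theta_v)-\Phi_1(\tau_v,\bar\theta_v)
=(\theta_v-\bar\theta_v)\,h(\tau_v,\bar\theta_v)
-\int_{\bar\theta_v}^{\theta_v} h(\tau_v,y)\,dy
-(\theta_v-\bar\theta_v)
=-(\theta_v-\bar\theta_v),
\]
because the integrand is constant on $(\bar\theta_v,\theta_v)$. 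So $\Phi_1$ falls by as much as $\delta=1/N$ between grid points, and an LP-feasible step function only certifies $\Phi_1\ge\Gamma-\delta$ in the continuum. The same $O(\delta)$ loss occurs in $\Phi_2$ through the $1-\theta_v$ and $1-\theta_u$ terms. This is exactly the discretization error you have to defeat, and the proposal contains no mechanism for it.

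The paper does not use step functions. It extends the grid values by \emph{bilinear interpolation}, adds explicit Lipschitzness constraints $|h(\tau,y+\tfrac1n)-h(\tau,y)|\le \tfrac4n$ (and similarly in the first argument) to the LP, and then proves a separate lemma bounding $|H(\tau,\theta)-\tilde H(\tau,\theta)|\le O(1/n^2)$ for the quantity $H=\theta h-\int h$, yielding $\Phi_1\ge\tilde\Phi_1-\tfrac{5}{2n^2}$ and $\Phi_2\ge\tilde\Phi_2-\tfrac{5}{n^2}$. The LP is therefore written with a shifted right-hand side $\Gamma+\tfrac{5}{2n^2}$ (resp.\ $\Gamma+\tfrac5{n^2}$). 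With linear interpolation the discretization error is $O(1/n^2)$ rather than your $O(1/n)$, which matters: at $n=100$ (the value the paper actually solves, given the $O(n^4)$ constraint set) a margin of $1/n=0.01$ would almost certainly be larger than the available slack above $0.6$, whereas $5/n^2=5\times10^{-4}$ is absorbable. To repair your version you would need to (i) acknowledge the $O(1/N)$ loss and shift the LP's right-hand side accordingly, and (ii) verify that $N$ can be taken small enough to remain computationally feasible yet large enough that the shifted LP is still feasible at $\Gamma=0.6$ --- neither of which is automatic and the second of which is likely false for step functions.
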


\begin{theorem}
	The algorithm with the functions $f,g$ chosen corresponding to the $h$ function of Lemma~\ref{lem:opth} is $\Gamma=0.6$-competitive for fractional fully online matching on general graphs.
\end{theorem}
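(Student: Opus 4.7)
The plan is to invoke the primal--dual framework of Lemma~\ref{lem:primal-dual}, verifying each of its two hypotheses using the lemmas established earlier in the section.

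For reverse weak duality, I would first observe that the algorithm updates primal and dual in lockstep. Whenever $x_{uv}$ grows by $dx$, the primal objective grows by $dx$, while the dual update prescribed in the algorithm gives $d\alpha_u + d\alpha_v = (1-g(a_v,x_v))\,dx + g(a_v,x_v)\,dx = dx$. Integrating over the entire execution yields $P = D$, so $P \ge D$ is immediate.

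The substantive step is approximate dual feasibility: I must show $\alpha_u + \alpha_v \ge 0.6$ for every edge $(u,v) \in E$ at termination. Fix such an edge and, by symmetry, assume $u$'s deadline precedes $v$'s. There are two cases according to arrival order. If $u$ arrives first, Lemma~\ref{lem:uearlier} provides $\alpha_u + \alpha_v \ge \Phi_2(\tau_u,\theta_u,\tau_v,\theta_v)$ for parameters satisfying $\tau_u \le \theta_u$ and $1-\theta_u \le \tau_v \le \theta_v$, exactly the constraints under which Lemma~\ref{lem:opth} guarantees $\Phi_2 \ge 0.6$ via \eqref{eqn:h-feasible-2}. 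If instead $v$ arrives first, Lemma~\ref{lem:vearlier} gives $\alpha_u + \alpha_v \ge \Phi_1(\tau_v,\theta_v)$ with $\tau_v \le \theta_v$, and Lemma~\ref{lem:opth} bounds $\Phi_1 \ge 0.6$ via \eqref{eqn:h-feasible-1}. Combining both cases with reverse weak duality yields the competitive ratio $\Gamma = 0.6$.

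The one subtlety I would flag is a timing issue: Lemmas~\ref{lem:uearlier} and~\ref{lem:vearlier} state their bounds at the instant right after $u$'s deadline, whereas dual feasibility must hold at the end of the instance. This is resolved by noting that every dual increment prescribed by the algorithm is non-negative, so $\alpha_u$ and $\alpha_v$ are monotone non-decreasing and the lower bound transfers unchanged to termination. Beyond this monotonicity observation, the proof is a straightforward assembly of the pieces already in hand; the real difficulty has been quarantined inside Lemma~\ref{lem:opth}, whose construction of the function $h$ via the factor-revealing LP (deferred to the appendix) is the only place where nontrivial computation occurs.
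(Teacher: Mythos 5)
Your proof is correct and follows essentially the same route as the paper: split by arrival order, invoke Lemma~\ref{lem:uearlier} or Lemma~\ref{lem:vearlier} to reduce to $\Phi_2$ or $\Phi_1$, then apply Lemma~\ref{lem:opth}, and observe that the dual objective tracks the primal exactly. You are in fact a bit more careful than the paper in two spots: you make the monotonicity-in-time argument explicit (the lemmas bound $\alpha_u+\alpha_v$ at $u$'s deadline, not at termination), and you use the convention consistent with the lemmas (the paper's proof misstates ``$u$'s deadline later than $v$'s'' even though the preceding analysis fixes $u$'s deadline earlier).
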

\begin{proof}
We conclude the competitive ratio of our algorithm by putting the lemmas together. Fix an arbitrary pair of neighbors $(u,v)$ with $u$'s deadline later than $v$'s deadline. Approximate dual feasibility follows by the two cases. If $u$ arrives earlier than $v$,
\[
\alpha_u + \alpha_v \ge \min_{\substack{0\le\tau_u \le \theta_u\le 1 \\ 1-\theta_u \le \tau_v \le \theta_v}} \Phi_2(\tau_u,\theta_u,\tau_v,\theta_v) \ge 0.6,
\]
where the first inequality follows from Lemma~\ref{lem:uearlier} and the second inequality follows from Lemma~\ref{lem:opth}.
If $v$ arrives earlier than $u$,
\[
\alpha_u + \alpha_v \ge \min_{0\le\tau_v\le\theta_v\le 1} \Phi_1(\tau_v,\theta_v) \ge 0.6,
\]
where the first inequality follows from Lemma~\ref{lem:vearlier} and the second inequality follows from Lemma~\ref{lem:opth}.
Finally, reverse weak duality holds trivially according to the definition of dual variables.
\end{proof}

\section{Fractional Algorithm for General Vertex Arrival}
\label{sec:general_lower}

In this section, we provide a primal-based algorithm for the fractional online matching problem with general vertex arrival by Wang and Wong~\cite{icalp/WangW15}. Our algorithm has the same worst case competitive ratio $0.526$ as Wang and Wong~\cite{icalp/WangW15}. 

We argue that our algorithm is more intuitive and we believe it would be helpful for designing integral algorithms for the problem. Our algorithm is inspired by the algorithm we proposed for the fully online matching problem. Indeed, it is the same algorithm but limited to the matching decisions made on the arrivals of vertices.
\paragraph{Our Algorithm.}
Fix an increasing function $f:[0,1] \to [0,1]$ and a two-dimensional function $g:[0,1] \to [0,1]$.
\begin{itemize}
	\item On the arrival of each vertex $u$, fractionally matches $u$ to the neighbor with the cheapest price $g(a_v, x_v)$, as long as $f(x_u) + g(a_v,x_v) \le 1$. When the process ends, let $a_u = x_u$ be the \emph{active} water level of the vertex $u$.
\end{itemize}
When $x_{uv}$ is increased by $dx$ on the arrival of vertex $u$, update $\alpha_u, \alpha_v$ by
\[
d\alpha_u = (1-g(a_v, x_v)) dx \quad \text{and} \quad d\alpha_v =g(a_v, x_v) dx.
\]

Refer to Algorithm~\ref{alg:general} for the pseudocode of our algorithm.

\begin{algorithm}[H]
	\caption{History-based Pricing Algorithm for General Vertex Arrival}\label{alg:general}
	\begin{algorithmic}
	\If{a vertex $u$ arrives} 
		\For{$\forall v \in N(u)$} \Comment{$N(u)$ denotes $u$'s current neighbor set}
		\State $x_{u,v} \gets 0$ \Comment{$x_{u,v}$ denotes the matched portion of $(u,v)$}
		\EndFor
		\State $x_{u} \gets 0$ \Comment{$x_u$ denotes the water level of $u$}
		\State $v \gets \arg\min_{v\in N(u)}g(a_v,x_{v})$
		\While{$f(x_u) + g(a_v,x_v)\leq 1$}
			\State $x_{u,v}  ~{\scriptstyle  \mathrel{+}=}~ dx$ \Comment{Match $u$, $v$ (small enough portion $dx$).}
			\State $x_{u} ~{\scriptstyle  \mathrel{+}=}~ dx$, $x_{v}  ~{\scriptstyle  \mathrel{+}=}~ dx$ \Comment{Water level update}
			\State $\alpha_u   ~{\scriptstyle  \mathrel{+}=}~ (1 - g(a_{v},x_{v}))dx$,  $\alpha_{v}  ~{\scriptstyle  \mathrel{+}=}~ g(a_{v},x_{v})dx$ \Comment{
				Dual update}
			\State $v \gets \arg\min_{v\in N(u)}g(a_v,x_{v})$.
		\EndWhile
		\State $a_u \gets x_u$ \Comment{Fix $u$'s active water level.}
	\EndIf
	\end{algorithmic}
\end{algorithm}

\paragraph{Analysis.} Fix a pair of neighbors $(u,v)$. Suppose $v$ arrives earlier than $u$. Let $p_v$ be the water level of $v$ right after $u$'s arrival. 
\begin{lemma}
\label{lem:gain_general_arrival}
Right after $u$'s arrival,
\[
\alpha_u + \alpha_v \ge a_v \cdot f(a_v) + \int_{a_v}^{p_v} g(a_v, x_v) dx_v + a_u \cdot f(a_u)
\]	
\end{lemma}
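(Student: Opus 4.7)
The plan is to track three additive contributions to $\alpha_u + \alpha_v$ that come directly from the dual update rule, following the template used to derive Equation~\eqref{eq:gain} in the fully online case but dropping the departure-time contribution, which does not exist in the general vertex arrival model. The three contributions are: the gain of $\alpha_v$ during $v$'s own active matching, the gain of $\alpha_v$ from passive matching between $v$'s arrival and $u$'s arrival, and the gain of $\alpha_u$ during $u$'s own active matching.

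First I would analyze $v$'s active matching on its arrival. The algorithm keeps matching $v$ to the currently cheapest neighbor $z$ as long as $f(x_v) + g(a_z, x_z) \le 1$. Since we always match the cheapest neighbor first and $g$ is non-decreasing in its second argument, the minimum price $g(a_z, x_z)$ is non-decreasing in time, and therefore the marginal utility $1 - g(a_z, x_z)$ is non-increasing. Evaluated right at the stopping moment, this utility is still at least $f(x_v) = f(a_v)$, so the same bound holds throughout the phase. Integrating $d\alpha_v = (1 - g(a_z, x_z))\, dx \ge f(a_v)\, dx$ over $[0, a_v]$ gives $\alpha_v \ge a_v \cdot f(a_v)$ right after $v$'s arrival.

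Next I would account for the passive growth of $\alpha_v$ between $v$'s arrival and $u$'s arrival (inclusive of $u$). During this interval $a_v$ is frozen and $x_v$ only grows, from $a_v$ up to $p_v$. Whenever some later arriving vertex $w$ matches $v$ and increases $x_v$ by $dx$, the dual update rule contributes $g(a_v, x_v)\, dx$ to $\alpha_v$. Thus the total passive gain of $\alpha_v$ during this interval is exactly $\int_{a_v}^{p_v} g(a_v, x_v)\, dx_v$, independent of which later vertices did the matching. Applying the same active-matching argument to $u$ at its arrival yields $\alpha_u \ge a_u \cdot f(a_u)$. Summing the three contributions gives the claimed inequality.

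The only substantive step is the monotonicity argument that lets us lower bound the marginal utility of $v$ during its own arrival by $f(a_v)$, and that step is already used in the fully online analysis; everything else is a direct bookkeeping of the dual update rule, with no departure term to reconcile. I therefore expect the proof to be a streamlined version of the one used for Equation~\eqref{eq:gain}.
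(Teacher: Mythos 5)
Your proof is correct and follows essentially the same route as the paper's: bound $v$'s active gain by $a_v \cdot f(a_v)$ via the non-increasing marginal utility argument, account for $v$'s passive gain as $\int_{a_v}^{p_v} g(a_v,x_v)\,dx_v$, apply the same active bound to $u$, and sum. You spell out the monotonicity step in more detail, but the decomposition and the key inequality are identical to the paper's proof.
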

\begin{proof}
	On the arrival of vertex $v$, it fractionally matches to the neighbor with the cheapest price and $d\alpha_v = (1-g(a_z,x_z)) dx \ge f(a_v) dx$. Later, $v$ is passively matched and the water level of $v$ increases from $a_v$ to $p_v$, $\alpha_v$ increases at the rate of $g(a_v, x_v)$. Hence $\alpha_v \ge a_v \cdot f(a_v) + \int_{a_v}^{p_v} g(a_v,x_v)dx_v$.
	Similarly, the gain of $u$ is at least $a_u \cdot f(a_u)$. Putting the two lower bounds together concludes the lemma.
\end{proof}

\paragraph{Reformulating the bound.} Again, we introduce a function $h: [0,1]^2 \to [0,1]$ that has a one-one correspondence with $g$:
\[
h(\tau,\theta) = \inf \{x: g(f^{-1}(\tau), x) \ge \theta\}. 
\]
Our function $g$ shall be only defined on $(a, x) \in [0,1]^2$ with $x \ge a$. Hence, $h(\tau,\tau)=f^{-1}(\tau)$. Moreover, our functions $f,g$ satisfy that $f(x) = g(x,x)$ for all $x \in [0,1]$.

Then for any $0 \le \tau \le \theta \le 1$, we have
\[
\int_{h(\tau,\tau)}^{h(\tau,\theta)} g(f^{-1}(\tau),x) dx = \theta \cdot h(\tau, \theta) - \tau \cdot h(\tau, \tau) - \int_{\tau}^{\theta} h(\tau, y) dy.
\]

\begin{lemma}
\label{lem:gain_general}
Right after $u$'s arrival,
\[
\alpha_u + \alpha_v \ge \min_{\substack{\tau_v \le \theta_v \\ \tau_u \ge 1-\theta_v}} \left\{ \theta_v \cdot h(\tau_v, \theta_v) - \int_{\tau_v}^{\theta_v} h(\tau_v, y_v) dy_v + \tau_u \cdot h(\tau_u,\tau_u) \right\}
\]
\end{lemma}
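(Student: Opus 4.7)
The plan is to start from Lemma~\ref{lem:gain_general_arrival} and reparametrize in the $(\tau,\theta)$-coordinates, then invoke the integral identity
\[
\int_{h(\tau,\tau)}^{h(\tau,\theta)} g(f^{-1}(\tau),x)\,dx = \theta \cdot h(\tau,\theta) - \tau \cdot h(\tau,\tau) - \int_{\tau}^{\theta} h(\tau,y)\,dy
\]
already established right before the lemma. Concretely, I would set $\tau_v = f(a_v)$, $\theta_v = g(a_v,p_v)$, and $\tau_u = f(a_u)$. Using $h(\tau,\tau) = f^{-1}(\tau)$, this gives $a_v \cdot f(a_v) = \tau_v \cdot h(\tau_v,\tau_v)$ and $a_u \cdot f(a_u) = \tau_u \cdot h(\tau_u,\tau_u)$ directly.

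Next I would lower-bound the middle term $\int_{a_v}^{p_v} g(a_v,x_v)\,dx_v$ by truncating its upper limit to $h(\tau_v,\theta_v)$. This is valid because, by definition, $h(\tau_v,\theta_v) = \inf\{x : g(a_v,x) \ge \theta_v\}$ and $g(a_v,p_v) = \theta_v$ forces $p_v \ge h(\tau_v,\theta_v)$, while the integrand is nonnegative. Applying the displayed identity then converts the truncated integral into $\theta_v \cdot h(\tau_v,\theta_v) - \tau_v \cdot h(\tau_v,\tau_v) - \int_{\tau_v}^{\theta_v} h(\tau_v,y_v)\,dy_v$. Summing with the two $\tau \cdot h(\tau,\tau)$ terms identified above, the $\tau_v \cdot h(\tau_v,\tau_v)$ contributions cancel, leaving exactly the expression appearing inside the min.

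It remains to justify the constraints $\tau_v \le \theta_v$ and $\tau_u \ge 1-\theta_v$. The first follows from monotonicity of $g$ in its second argument together with $a_v \le p_v$, so $\tau_v = f(a_v) = g(a_v,a_v) \le g(a_v,p_v) = \theta_v$. The second encodes the stopping rule of the arrival phase for $u$: once $u$ halts, every neighbor $w$ (in particular $v$) satisfies $g(a_w,x_w) \ge 1 - f(x_u)$, which in the new coordinates reads $\theta_v \ge 1 - \tau_u$. Taking the minimum over all feasible $(\tau_v,\theta_v,\tau_u)$ yields the claimed bound.

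The argument is a direct analogue of Lemma~\ref{lem:uearlier} stripped of the deadline-phase contribution and the $\tau_v,\theta_v$ terms for the earlier vertex, so no genuinely hard step arises. The only thing to be careful about is the direction of the inequality $p_v \ge h(\tau_v,\theta_v)$ (so that truncating the integral indeed gives a lower bound) and the correct translation of the stopping rule into the $(\tau,\theta)$-coordinates; both are routine once the substitutions are set up.
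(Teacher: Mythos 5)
Your proposal is correct and takes essentially the same route as the paper: substitute $\tau_v=f(a_v)$, $\theta_v=g(a_v,p_v)$, $\tau_u=f(a_u)$ into Lemma~\ref{lem:gain_general_arrival}, truncate the integral at $h(\tau_v,\theta_v)$ using $p_v \ge h(\tau_v,\theta_v)$, apply the displayed $g$-to-$h$ identity, and invoke the stopping rule to get $\tau_u \ge 1-\theta_v$. Your explicit remark that the integrand is nonnegative (needed for the truncation to be a lower bound) is a welcome detail the paper leaves implicit.
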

\begin{proof}
Let $\tau_v = f(a_v), \tau_u = f(a_u)$ and $\theta_v = g(a_v, p_v)$. By Lemma~\ref{lem:gain_general_arrival}, we have 
\begin{multline*}
	\alpha_u + \alpha_v = \tau_v \cdot h(\tau_v, \tau_v) + \int_{h(\tau_v,\tau_v)}^{p_v} g(f^{-1}(\tau_v), x_v) dx_v + \tau_u \cdot h(\tau_u,\tau_u) \\
	\ge \tau_v \cdot h(\tau_v, \tau_v) + \int_{h(\tau_v,\tau_v)}^{h(\tau_v, \theta_v)} g(f^{-1}(\tau_v), x_v) dx_v + \tau_u \cdot h(\tau_u,\tau_u) \\
	= \theta_v \cdot h(\tau_v, \theta_v) - \int_{\tau_v}^{\tau_v} h(\tau_v, y_v) dy_v + \tau_u \cdot h(\tau_u, \tau_u),
\end{multline*}
where the inequality holds by the fact that $p_v \ge h(\tau_v,\theta_v)$, by the definition of $h$.
Notice that $\tau_u =f(a_u) > 1-g(a_v,p_v) = 1- \theta_v$ after the arrival of $u$, according to the design of our algorithm. We conclude the lemma by taking the minimum over all possible values of $\tau_v, \theta_v$ and $\tau_u$.
\end{proof}

We are left to optimize the function $h$ and this is again solved by factor revealing lp. We omit the detailed implementation. 
Our experiment suggests the best ratio to be $\Gamma\approx 0.526$ using our algorithm and analysis. Though we do not optimize the function $h$ in a closed form, we believe it is the same competitive ratio of Wang and Wong~\cite{icalp/WangW15}.
\begin{lemma}
\label{lem:opth_general_arrival}
There exists a function $h:[0,1]^2 \to [0,1]$ such that for $\Gamma = 0.526$:
\begin{align*}
	\forall 0\le \tau_v \le \theta_v \le 1:\quad  & \theta_v \cdot h(\tau_v, \theta_v) - \int_{\tau_v}^{\theta_v} h(\tau_v, y_v) dy_v + (1-\theta_v) \cdot h(1-\theta_v, 1-\theta_v) \ge \Gamma;\\
	& \tau \cdot h(\tau, \tau) \text{ is increasing.}
\end{align*} 
\end{lemma}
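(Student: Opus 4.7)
The plan is to construct the desired function $h$ via a factor revealing LP in the same spirit as the one sketched for Lemma~\ref{lem:opth}, using the reformulated lower bound of Lemma~\ref{lem:gain_general} as the constraint family. Since the continuous program has infinitely many inequalities (one per pair $(\tau_v,\theta_v)$) and involves an integral term, I will first discretize the domain $[0,1]^2$ into a grid of mesh $\delta=1/N$ and introduce one LP variable $h_{i,j}$ for each grid point $(i\delta, j\delta)$ with $0 \le i \le j \le N$. The target $\Gamma$ is a single scalar variable to be maximized.

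First I would write the discrete constraints. For every pair $0 \le i \le j \le N$, set $\tau_v = i\delta$, $\theta_v = j\delta$, $k = N-j$, and impose
\begin{equation*}
j\delta \cdot h_{i,j} - \delta \sum_{\ell=i}^{j-1} h_{i,\ell} + (1 - j\delta)\cdot h_{k,k} \;\ge\; \Gamma,
\end{equation*}
which is a Riemann-sum approximation of the lemma's inequality. The monotonicity of $\tau \cdot h(\tau,\tau)$ is encoded by $(i\delta) h_{i,i} \le ((i+1)\delta) h_{i+1,i+1}$. I also add the natural range/shape constraints inherited from the definition of $h$: $0 \le h_{i,j} \le 1$, $h_{i,j} \le h_{i,j+1}$ (monotonicity in $\theta$ that comes from $g$ being non-decreasing in its second argument), and $h_{i,i} \ge i\delta$ (so the resulting $f = g(\cdot,\cdot)$ on the diagonal is a legal quantile). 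Maximizing $\Gamma$ subject to these finitely many linear constraints yields a concrete numerical value, which we then need to certify is at least $0.526$.

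The main obstacle is not the LP itself, which can be solved on a standard solver for $N$ in the hundreds, but passing from the discrete grid solution back to a genuine continuous function $h:[0,1]^2\to[0,1]$ that satisfies the lemma exactly. I would interpolate the LP solution piecewise linearly in $\theta$ for each fixed grid row $\tau=i\delta$, and piecewise constantly (or piecewise linearly) across $\tau$. The resulting function is clearly $[0,1]$-valued and monotone in $\theta$; the issue is that the integral inequality may be violated by $O(\delta)$ at non-grid points. To absorb this error it suffices to solve the LP targeting $\Gamma' = \Gamma + C\delta$ for a fixed constant $C$ depending only on Lipschitz bounds that can be read off the solution, and then choose $N$ large enough that the resulting ratio still exceeds $0.526$. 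The monotonicity of $\tau\cdot h(\tau,\tau)$ is preserved under the interpolation because it was imposed at the grid level.

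Finally, with the interpolated $h$ in hand, the claimed inequality follows by reducing any continuous $(\tau_v,\theta_v)$ to the nearest grid pair, invoking the discrete inequality, and accounting for the $O(\delta)$ slack built into the LP target. The experimental value $\Gamma\approx 0.526$ reported after the lemma corresponds to solving the LP for sufficiently fine $N$; since the same competitive ratio was obtained by Wang and Wong~\cite{icalp/WangW15} via a different parameterization, this is a strong sanity check that the LP optimum is tight and that the interpolation step loses essentially nothing.
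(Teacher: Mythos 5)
Your plan is essentially the paper's own: the paper explicitly states that Lemma~\ref{lem:opth_general_arrival} is ``again solved by factor revealing lp,'' omits the implementation, and refers (via ``again'') to the full worked-out version in Appendix~\ref{app:factor_lp} for the analogous Lemma~\ref{lem:opth}, which uses precisely the discretize--interpolate--pad-with-slack machinery you describe. Two details in your outline differ from the paper's implementation and would need fixing. First, the side constraint $h_{i,i}\ge i\delta$ is not imposed by the paper and runs in the wrong direction for typical water-filling-type pricings (where $f(x)\ge x$, so $h(\tau,\tau)=f^{-1}(\tau)\le\tau$); keeping it risks ruling out precisely the functions you want, and the ``legal quantile'' justification is not a requirement the model places on $f$. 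Second, ``a fixed constant $C$ depending only on Lipschitz bounds that can be read off the solution'' is circular if $C$ is then used to set the padded LP target; the paper avoids this by imposing an explicit Lipschitz bound on the $h$ variables as linear LP constraints (bounded difference between adjacent grid values), which makes the $O(1/n^2)$ interpolation-error slack in each discretized constraint sufficient a priori rather than after the fact.
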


\begin{theorem}
The algorithm with the functions $f,g$ chosen corresponding to the $h$ function of Lemma~\ref{lem:opth_general_arrival} is $\Gamma = 0.526$-competitive for fractional online matching with general vertex arrival.
\end{theorem}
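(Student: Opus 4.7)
The plan is to invoke Lemma~\ref{lem:primal-dual}, so I need to verify both approximate dual feasibility and reverse weak duality. Reverse weak duality is essentially immediate from the bookkeeping: every time the algorithm matches a small amount $dx$ on an edge $(u,v)$, the dual updates satisfy $d\alpha_u + d\alpha_v = (1-g(a_v,x_v))\,dx + g(a_v,x_v)\,dx = dx$, which equals the increment of the primal objective on that edge. Integrating over the whole execution gives $P = D$, so this part is routine and I would only mention it in one sentence.

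The main work lies in establishing approximate dual feasibility for every edge $(u,v) \in E$. The strategy is to fix such a pair and assume without loss of generality that $v$ arrives before $u$; since the general vertex arrival model has no deadlines, there is only this one case to handle, in contrast to the two-case split required in the fully online theorem. I would then invoke Lemma~\ref{lem:gain_general}, which supplies the lower bound
\[
\alpha_u + \alpha_v \ge \min_{\substack{\tau_v \le \theta_v \\ \tau_u \ge 1-\theta_v}} \left\{ \theta_v \cdot h(\tau_v, \theta_v) - \int_{\tau_v}^{\theta_v} h(\tau_v, y_v)\, dy_v + \tau_u \cdot h(\tau_u,\tau_u) \right\}.
\]

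The crucial step is to eliminate the free variable $\tau_u$ from the minimization using the second property of Lemma~\ref{lem:opth_general_arrival}. Since $\tau \cdot h(\tau,\tau)$ is monotone increasing and the constraint $\tau_u \ge 1-\theta_v$ is active on the minimizer, I can replace $\tau_u \cdot h(\tau_u, \tau_u)$ by $(1-\theta_v) \cdot h(1-\theta_v, 1-\theta_v)$. The resulting expression is exactly the left-hand side of the first inequality of Lemma~\ref{lem:opth_general_arrival}, which is at least $\Gamma = 0.526$ for all admissible $(\tau_v, \theta_v)$. Chaining these inequalities gives $\alpha_u + \alpha_v \ge \Gamma$ as required, and the theorem follows by Lemma~\ref{lem:primal-dual}.

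I do not anticipate a genuine obstacle, since Lemma~\ref{lem:opth_general_arrival} has been tailored precisely to match the structure of the bound in Lemma~\ref{lem:gain_general}. The only subtlety worth flagging is that one must justify that the monotonicity of $\tau \cdot h(\tau,\tau)$ is exactly the right property to bridge the two lemmas; in particular, neither ordinary monotonicity of $h$ along its diagonal nor a plain integral inequality would suffice, which is why the lemma statement bundles this specific condition alongside the main integral inequality.
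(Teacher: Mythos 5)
Your proposal is correct and follows essentially the same route as the paper: invoke Lemma~\ref{lem:gain_general}, use the monotonicity of $\tau \cdot h(\tau,\tau)$ to set $\tau_u = 1-\theta_v$ in the minimization, and then apply the first inequality of Lemma~\ref{lem:opth_general_arrival}, with reverse weak duality dispatched by the local $d\alpha_u + d\alpha_v = dx$ accounting. (As a minor aside, the paper's text cites Lemma~\ref{lem:gain_general_arrival} where it means Lemma~\ref{lem:gain_general}; you cite the correct one.)
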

\begin{proof}
We conclude the competitive ratio of our algorithm by putting the lemmas together. Fix an arbitrary pair of neighbors $(u,v)$ with $u$ arrives later than $v$. By Lemma~\ref{lem:gain_general_arrival},
\begin{multline*}
\alpha_u + \alpha_v \ge \min_{\substack{\tau_v \le \theta_v \\ \tau_u \ge 1-\theta_v}} \left\{ \theta_v \cdot h(\tau_v, \theta_v) - \int_{\tau_v}^{\theta_v} h(\tau_v, y_v) dy_v + \tau_u \cdot h(\tau_u,\tau_u) \right\} \\
= \min_{\tau_v \le \theta_v} \left\{ \theta_v \cdot h(\tau_v, \theta_v) - \int_{\tau_v}^{\theta_v} h(\tau_v, y_v) dy_v + (1-\theta_v) \cdot h(1-\theta_v,1-\theta_v) \right\} \ge 0.526,
\end{multline*}
where the first inequality follows from Lemma~\ref{lem:gain_general_arrival}, the equality follows from the monotonicity of $h$ from Lemma~\ref{lem:opth_general_arrival}, and the last inequality is also from Lemma~\ref{lem:opth_general_arrival}.

Finally, reverse weak duality holds trivially according to the definition of dual variables.
\end{proof}

\section{Improved Hardness for Fully Online Matching}
\label{sec:fully_upper}

In this section, we show an improved upper bound for the fractional fully online matching problem. Previously, the best known upper bound is $0.6297$ by Eckl et al.~\cite{orl/EcklKLS21}. 

\begin{theorem}
No algorithm is strictly better than $0.6132$-competitive for the fractional fully online matching problem, even for bipartite graphs.
\end{theorem}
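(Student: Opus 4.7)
The plan is to build a family of bipartite instances $\{I_k\}_{k \ge 1}$ indexed by the number of alternating stages. The adversary may terminate at the end of any stage, and the instance $I_k$ consists of $k$ alternating \emph{tree blocks} and \emph{upper-triangle blocks}. Inside a tree block, a batch of new online vertices arrives sharing a common pool of previously arrived neighbors; inside an upper-triangle block, deadlines of the previous batch are interleaved in a staircase with the arrivals of fresh vertices, forcing any algorithm that over-matched the earlier batch to pay because the fresh vertices have only the now water-heavy old batch as candidates. This extends the ``tree $+$ one upper triangle'' construction of Eckl et al.\ to a truly iterated alternation, which is precisely the improvement advertised in the contribution section.

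First, I would argue by a standard symmetrization (permuting each block by its automorphism group, or equivalently invoking Yao's minimax over the symmetrized family) that we may assume the fractional algorithm treats symmetric vertices identically. This collapses the algorithm's state after each block to a single scalar ``water level'' $y_i$ per block, so that $\mathrm{ALG}(I_k)$ becomes an explicit linear functional of $(y_1,\dots,y_k)$. I would then compute $\mathrm{OPT}(I_k)$, which by design is essentially a perfect matching on the fresh vertices introduced in the upper-triangle blocks and hence scales linearly in $k$ with a computable slope.

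Combining the two expressions gives a ratio $\mathrm{ALG}(I_k)/\mathrm{OPT}(I_k)$ that depends on $(y_1,\dots,y_k)$. Because the adversary may stop at any $k$, the worst case over $k$ imposes $\mathrm{ALG}(I_{k'}) \ge \Gamma\cdot\mathrm{OPT}(I_{k'})$ simultaneously for every $k' \le k$, giving a factor-revealing LP in the water-level sequence. I would solve this LP either by identifying a steady-state recursion $y_{i+1} = \varphi(y_i)$ and analyzing its fixed point (so that $\Gamma$ becomes a root of a small transcendental equation just above $0.6132$), or numerically at a moderate horizon $k$ to certify the bound.

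The main obstacle will be arranging the arrivals and deadlines inside each upper-triangle block so that three constraints hold simultaneously: (i) every vertex's deadline comes strictly after all of its neighbors have arrived, as required by the fully online model; (ii) the deadlines within a block fully ``lock in'' the algorithm's decisions before the next tree block begins, preventing hedging across blocks; and (iii) enough block-level symmetry survives that the reduction to a single water-level scalar per block is without loss of generality. Balancing (ii) against (iii) is delicate, because aggressive deadlines tend to break symmetry, while overly symmetric deadlines can let the algorithm smear mass across block boundaries and evade the $0.6132$ bound. Once the interleaving is verified, the remaining analysis is routine factor-revealing LP computation.
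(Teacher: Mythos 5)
Your plan matches the paper's construction and analysis: an alternating sequence of complete-bipartite (``tree'') blocks and upper-triangle blocks, an indistinguishability argument at each departure reducing the algorithm WLOG to Water-filling with a single scalar water level per block, and a contraction recurrence for that water level analyzed at its fixed point. The one concrete ingredient your sketch leaves implicit is a size parameter $\alpha$ controlling what fraction of each block is devoted to the upper triangle versus the complete-bipartite part; that parameter is what gets optimized at the end (numerically $\alpha \approx 0.43$) to extract the $0.6132$ bound from the fixed-point equation.
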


\paragraph{Hard Instance.} Our construction includes $2n \cdot \ell$ vertices, that are divided into $\ell$ identical groups. Each group consists of $4$ parts, that are denoted by $A_k, B_k, C_k$, and $D_k$. Each group of $A_k, B_k$ has $\alpha \cdot n$ vertices and $C_k, D_k$ has $(1-\alpha) \cdot n$ vertices, where $\alpha$ is a constant to be fixed later.

We first describe the structure of the underlying graph and then define the arrivals and departures of the vertices. Refer to Figure~\ref{fig:upper}.


\begin{itemize}
    \item Upper triangle between $A_k$ and $B_k$. Let $A_k = \{a_{ki}\}_{i}$ and $B_k = \{b_{ki}\}_{i}$. The $i$-th vertex $a_{ki}$ in $A_k$ is connected to every $b_{kj}$ for $j\geq i$.
    \item Complete bipartite graph between $A_k$ and $C_k$. 
    \item Complete bipartite graph between $C_k$ and $D_k \cup B_{k+1} \cup C_{k+1}$.
\end{itemize}
 
\begin{figure}[H]
    \center
    \includegraphics[scale = 0.5]{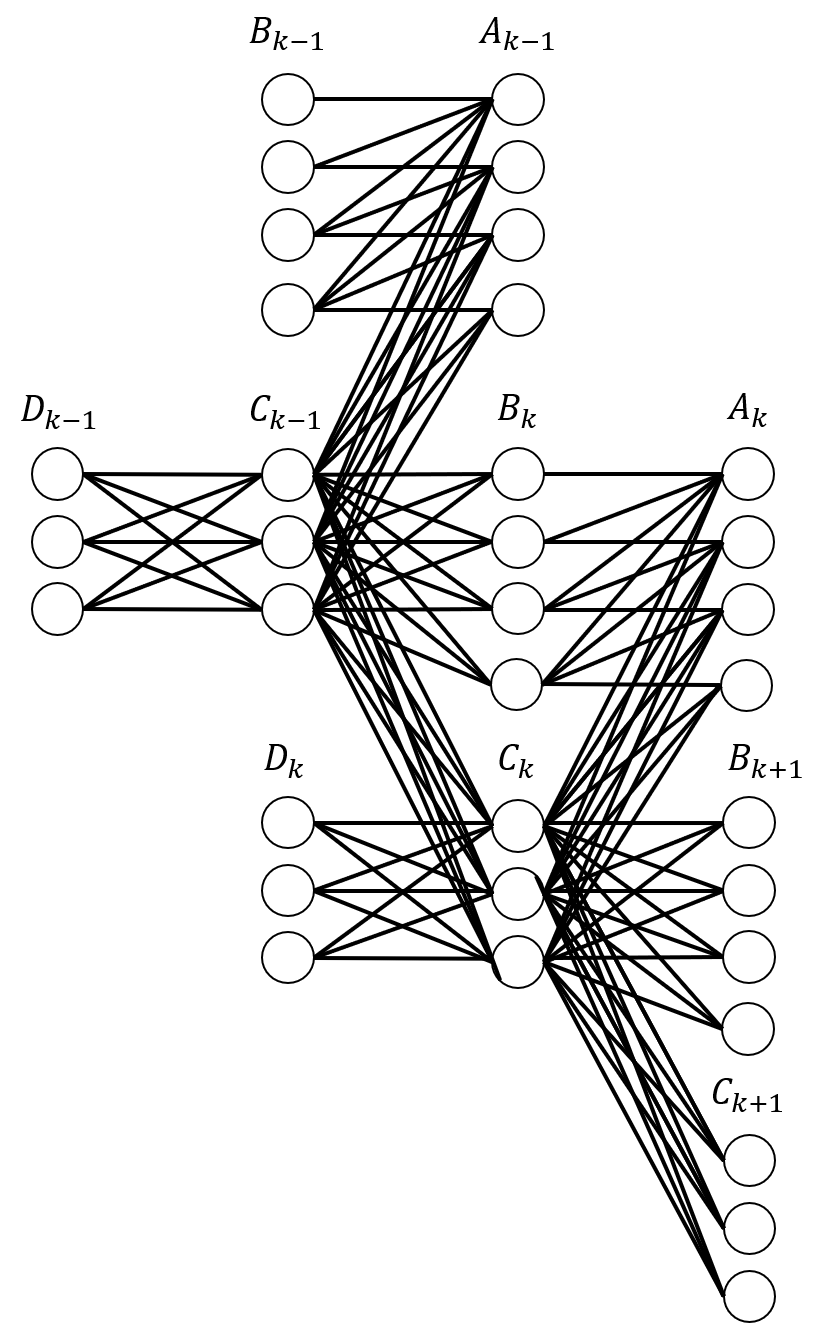}
    \caption{An illustration when $\alpha \cdot n =4$ and $(1-\alpha)\cdot n=3$.}
    \label{fig:upper}
\end{figure}

Next, we define the arrivals and deadlines of the vertices. Let there be $\ell$ stages. At the end of the $(k-1)$-th stage, all vertices in $\{A_i\}_{i \in [k-1]}, \{B_i\}_{i \in [k]}, \{C_i\}_{i \in [k]}, \{D_i\}_{i \in [k-1]}$ have arrived. Furthermore, all vertices in $\{A_i\}_{i \in [k-1]}, \{B_i\}_{i \in [k-1]}, \{C_i\}_{i \in [k-1]}, \{D_i\}_{i \in [k-1]}$ have left. In other words, the only remaining vertices in the graph belongs to $B_k \cup C_k$. Refer to Figure~\ref{fig:construction-1}. 

We shall define the arrivals and deadlines in the $k$-th stage.
\begin{enumerate}
    \item (Fig~\ref{fig:construction-2}) The vertices of $A_{k}$ arrive and leave immediately one by one. Upon the arrival and departure of $a_{ki}$, $a_{ki}$'s neighbors $\{b_{kj}\}_{j\ge i} \cup C_k$ are still alive (i.e., the vertex arrives but not depart) in the graph.
    \item (Fig~\ref{fig:construction-2}) The vertices of $B_k$ leave. There is no alive neighbor for vertices in $B_k$. 
    \item (Fig~\ref{fig:construction-3}) The vertices of $D_k \cup B_{k+1} \cup C_{k+1}$ arrive and the vertices of $C_k$ leave ony by one. $c_{ki}$'s neighbors $D_k \cup B_{k+1} \cup C_{k+1}$ are still alive.
    \item (Fig~\ref{fig:construction-3}) The vertices of $D_k$ leave. There is no alive neighbor for vertices in $D_k$.  
\end{enumerate}

\begin{figure}[H]
    \center
    \begin{subfigure}[H]{.15\textwidth}
        \includegraphics[scale = 0.5]{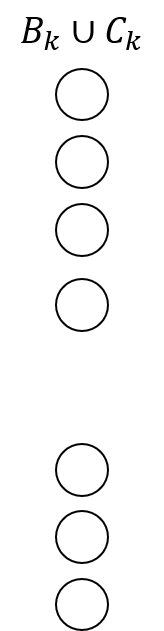}
        \caption{}
        \label{fig:construction-1}
    \end{subfigure}
    \begin{subfigure}[H]{.15\textwidth}
        \includegraphics[scale = 0.5]{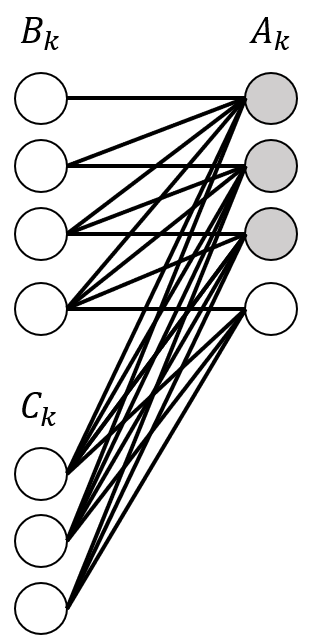}
        \caption{}
        \label{fig:construction-2}
    \end{subfigure}
    \begin{subfigure}[H]{.3\textwidth}
        \includegraphics[scale = 0.5]{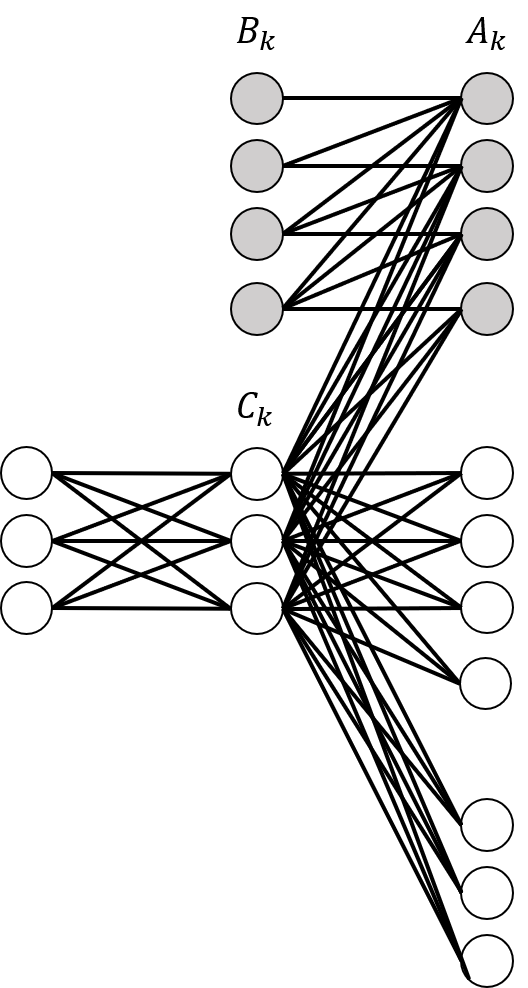}
        \caption{}
        \label{fig:construction-3}
    \end{subfigure}
    \begin{subfigure}[H]{.3\textwidth}
        \includegraphics[scale = 0.5]{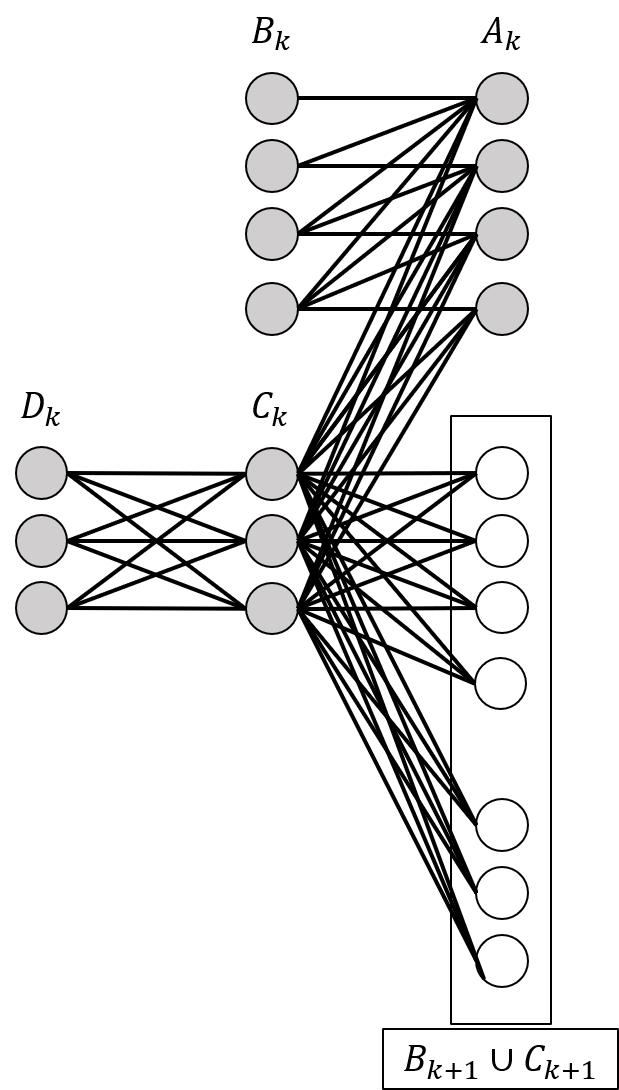}
        \caption{}
        \label{fig:construction-4}
    \end{subfigure}
    \caption{An illustration when $\alpha \cdot n =4$ and $(1-\alpha)\cdot n=3$. The vertices become grey means that they have already left.}
    \label{fig:construction}
\end{figure}

\paragraph{Competitive Ratio.}

First, we observe that the maximum matching of the graph has size $n\ell$. Indeed, the graph has a perfect matching by matching $A_k$ with $B_k$ and matching $C_k$ with $D_k$ for all $k \in [\ell]$.

Next, we shall assume that the algorithm only makes matching decisions on the departures of vertices, and the algorithm would like to fully match a vertex unless there is no feasible neighbor. This assumption is without loss of generality by a simple exchange argument (see e.g. \cite{jacm/HuangKTWZZ20}).

A crucial property of our construction is that on the departure of each vertex, all its remaining neighbors in the graph are indistinguishable. Indeed, on the departure of each $a_{ki}$, all its neighbors $\{b_{kj}\}_{j \ge i} \cup C_k$ are indistinguishable; on the departure of each $c_{ki}$, all its neighbors $D_k \cup B_{k+1} \cup C_{k+1}$ are indistinguishable. 
Applying a similar argument as~\cite{orl/EcklKLS21}, we can then, without loss of generality, restrict ourselves to the Water-Filling algorithm that treats all neighbors symmetrically. That is, on the departure of each vertex, we match its remaining portion equally to all its neighbors. A more formal construction need to decide the identity of each $b_{ki}$ after $a_{ki}$'s departure and decide the identities of $D_k$ after $C_k$'s departure. We omit the detailed derivation here.

Based on the discussion above, we are left to calculate the performance of Water-filling on the constructed instance. Since we treat all vertices in $B_k \cup C_k$ equally at the $(k-1)$-th stage, all vertices of $B_k \cup C_k$ would have the same water level right before the $k$-th stage. Let $x_{k}$ be value of the water level. Initially, we have $x_1 = 0$. We derive the recurrent formula of $x_{k+1}$ in the following.

On the deadline of the vertex $a_{ki}$ of $A_k$, it has $|\{b_{kj}\}_{j \ge i} \cup C_k| = n-i+1$ number of neighbors. Hence, the increment of the water level of $C_k$ equals $\frac{1}{n-i+1}$ on the departure of $a_{ki}$. Then, after the departure of the last vertex of $A_k$, the water level of $C_k$ becomes
\[
y_k = x_{k} + \frac{1}{n} + \frac{1}{n-1} + \cdots + \frac{1}{(1-\alpha)n + 1}~. 
\]
Next, on the departure of $C_k$, each vertex matches its remaining portion to all vertices in $D_k \cup B_{k+1} \cup C_{k+1}$ equally. Thus, we have
\[
x_{k+1} = (1 - y_k) \cdot \frac{1-\alpha}{2-\alpha} = \left(1 - \left(x_k + \frac{1}{n} + \frac{1}{n-1} + \cdots + \frac{1}{(1-\alpha)n+1}\right)\right) \cdot \frac{1-\alpha}{2-\alpha}~.
\]
Fix $n$ and let the number of stage $k$ goes to infinity. We have that the sequence $\{x_k\}$ converges to the solution $x^*$ of the following equation:
\[
x = \left(1 - \left(x + \frac{1}{n} + \frac{1}{n-1} + \cdots + \frac{1}{(1-\alpha)n+1}\right)\right) \cdot \frac{1-\alpha}{2-\alpha}~.
\]
Moreover, the total matching made in stage $k$ equals 
\[
1 \cdot \alpha n + (1-y_k) \cdot (1-\alpha) n = \left(\alpha + (2-\alpha)\cdot x_{k+1}\right) \cdot n,
\]
where the term $1 \cdot \alpha n$ corresponds to the matchings made on the departures of $A_k$ and the term $(1-y_k)\cdot(1-\alpha) n$ corresponds to the matchings made on the departures of $C_k$.
Hence, when $\ell \to \infty$, the ratio between the matching produced by Water-filling and the optimal matching is
\[
\lim_{k \to \infty} \frac{\left(\alpha + (2-\alpha)\cdot x_{k+1}\right) \cdot n}{n} = \alpha + (2-\alpha)\cdot x^*~.
\]
Finally, let $n$ goes to infinity, we have $x^*$ satisfies that
\[
x^* = \left( 1-x^*+\ln(1-\alpha) \right) \cdot \frac{1-\alpha}{2-\alpha} \iff x^*= \frac{1-\alpha}{3-2\alpha} \cdot (1+\ln(1-\alpha)) ~,
\]
where we use the fact that 
\[
\lim_{n\to \infty}\left( \frac{1}{n} + \frac{1}{n-1} + \cdots + \frac{1}{(1-\alpha)n+1} \right) = -\ln(1-\alpha)~.
\]
By optimizing $\alpha$, we conclude an upper bound of
\[
\min_{\alpha} \left\{ \alpha+(2-\alpha)\cdot x^* \right\} = \min_{\alpha} \left\{\alpha+\frac{(2-\alpha)(1-\alpha)}{3-2\alpha} \cdot (1+\ln(1-\alpha)) \right\} \approx 0.6132,
\]
when $\alpha \approx 0.43$.

\section{Improved Hardness for General Vertex Arrival}
\label{sec:general_upper}

In this section, we show an improved upper bound for the fractional online matching problem with general vertex arrival. Previously, the best known upper bound is $0.592$ by Buchbinder et al.~\cite{algorithmica/BuchbinderST19}. 

\begin{theorem}
\label{thm:general_upper}
No algorithm is strictly better than $0.584$-competitive for fractional online matching with general vertex arrivals, even for bipartite graphs.
\end{theorem}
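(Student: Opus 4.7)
The plan is to extend the two-stage bad instance of Wang and Wong~\cite{icalp/WangW15} by adding a third stage designed to punish the greedy behavior that an algorithm would otherwise exhibit in stage two. Concretely, I would use a bipartite construction whose online side is partitioned into three groups $U_1, U_2, U_3$ of sizes $n_1, n_2, n_3$ and whose offline side is partitioned into matching groups $V_1, V_2, V_3$ together with auxiliary vertices. In stage one, the vertices of $V_1 \cup U_1$ arrive forming a complete bipartite graph $K_{n_1,n_1}$; by symmetry and the standard averaging argument, any $\Gamma$-competitive algorithm must match every vertex in $V_1 \cup U_1$ to the average level $\Gamma$. In stage two, $V_2 \cup U_2$ arrive so that $V_2$ is connected by a complete bipartite graph to $U_1$ (offering $U_1$ a chance to grow its matching) and together with $U_2$ forms an upper-triangle structure, mirroring Wang--Wong. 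In stage three, $V_3 \cup U_3$ arrive, attached via a complete bipartite graph to $U_2$ (punishing $U_2$ for being matched aggressively in stage two) and with $U_3$ forming another upper-triangle with $V_3$. The adversary can choose to stop at the end of any stage, and the maximum matching in each truncated instance is the perfect matching using the matching groups $V_k \leftrightarrow U_k$.

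Next, I would apply the symmetrization trick used in Section~\ref{sec:fully_upper} (in the spirit of \cite{orl/EcklKLS21}): at every arrival in the construction, all currently relevant unmatched neighbors are indistinguishable to the algorithm, so one may assume without loss of generality that the algorithm is the Water-Filling variant that splits its matching uniformly across indistinguishable neighbors. This reduces the analysis to tracking a small set of real-valued state variables describing the water levels of each group at the end of each stage. The competitive constraint from the first stage pins the stage-one water levels at $\Gamma$; then one writes the upper-triangle integrals exactly as in the fully-online hardness section to express the water levels of $U_1, U_2, V_2$ after stage two, and of $U_2, U_3, V_3$ after stage three, in closed form.

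The core of the argument is then a comparison of the algorithm's total matching to the perfect matching at each potential stopping point. Stopping after stage one forces $\Gamma \le \Gamma$ (trivial); stopping after stage two gives the Wang--Wong inequality relating $\Gamma$ to the growth inside $U_1$ and $U_2$; stopping after stage three gives a new inequality in which the ``wasted'' matching made on $U_2$ during stage two now hurts the algorithm, because $U_2$ has less capacity left to absorb stage-three matching demand from $V_3$. Choosing the sizes $n_1, n_2, n_3$ (or equivalently, the limiting fractions) to balance these three constraints yields the strongest bound. The main obstacle I expect is precisely what the authors flag: after stage two the state is not a single number but a distribution of water levels across $U_1$ (since the upper-triangle produces a non-uniform profile before symmetrization within groups), so one has to formulate the stage-three integral against this profile and then minimize the resulting convex program in the parameters. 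Once this is set up, optimizing numerically over the two or three free parameters should yield a ratio no larger than $0.584$, which I would then verify by exhibiting an explicit setting of parameters achieving this value.
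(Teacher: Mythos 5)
Your high-level direction (a multi-stage instance punishing the greedy behavior of stage two, with adversarial early stopping) matches the paper's intuition, but your concrete construction and proof machinery diverge in ways that would make the argument break down as proposed.

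First, the symmetrization step does not transfer. The paper's Section~\ref{sec:fully_upper} relies on indistinguishability \emph{at deadlines}, a feature specific to the fully online model. In the general vertex arrival model there are no deadlines, and after stage one the algorithm may have produced an arbitrary, asymmetric profile of water levels over $A$. You cannot reduce to Water-Filling WLOG. The paper's formal argument (Appendix~\ref{app:general_upper}) avoids this entirely: at step $i$ of stage two, the adversary \emph{adaptively} defines $A_i$ as the $k$ least-matched remaining vertices, which yields averaging inequalities (Lemma~\ref{lem:constraint}) that hold for \emph{any} algorithm, without symmetrization.

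Second, your third stage is structurally the wrong object. You propose a single block $V_3 \cup U_3$ attached to $U_2$ plus one more upper triangle, essentially iterating the Wang--Wong gadget once more. The paper instead attaches a \emph{family} of per-index upper triangles: for each over-matched index $i$ (defined by the adversarial threshold set $I=\{i: (1-e^{\alpha_i-1})+(1-e^{\beta_i-1}) < \Gamma\}$) it plants separate triangles $C_i$ on $A_i$ and $D_i$ on $B_i$. The punishment is therefore keyed to $\alpha_i$ and $\beta_i$ \emph{separately for each $i$}, which is exactly what forces the algorithm to respect a pointwise constraint rather than an aggregate one. A single stage-three block cannot reproduce this.

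Third, the quantitative tool you cite is mismatched: you invoke the harmonic/$\ln$ water-filling integrals from the fully online hardness, but what is actually needed is a KVV-style bound showing an upper triangle attached to vertices already matched to level $\alpha$ can gain at most $(1-e^{\alpha-1}+o(1))k$ (Lemma~\ref{lem:triangle}). Relatedly, the optimization is not a two- or three-parameter problem; because the stage-two state is a \emph{vector} $(\alpha_1,\ldots,\alpha_n,\beta_1,\ldots,\beta_n)$, the paper must set up the full program~\eqref{program:p1}, prove structural lemmas to collapse it to~\eqref{program:p2} with tightness conditions, and then solve numerically. Your sketch underestimates both the state space and the effort needed to justify that the numerical minimum is actually $0.584$.
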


\paragraph{Hard Instance.} Our construction includes $3$ stages. 
Suppose there exists a $\Gamma$-competitive algorithm.
We shall fix an arbitrary algorithm with competitive ratio $\Gamma$ and construct an instance that depends on the decision made by the algorithm. In the following, we describe the behavior of the algorithm and our construction at the same time.
Refer to Figure~\ref{fig:general}.

\begin{itemize}
    \item (Fig~\ref{fig:general-1}) The first stage involves $nk$ number of vertices that arrive in a sequence, that form a complete bipartite graph with an even size on the two sides of the graph. Denote these vertices by $A$. We denote the average matched portion of these vertices after the first stage by $\gamma$. If $\gamma < \Gamma$, the instance ends.
    \item (Fig~\ref{fig:general-2}) The second stage is consisted of $n$ steps. At step $i$, the identities of $\{A_{j}\}_{j<i}$ has been determined. A set of $k$ vertices $B_i$ arrive and it forms a complete bipartite graph with the vertices $A - \cup_{j<i}A_j$. The algorithm makes matching decision between $B_i$ and $A - \cup_{j<i}A_j$. Afterwards, let $A_i$ be the least $k$ matched vertices among $A - \cup_{j<i}A_j$. We use $\alpha_i,\beta_i$ to denote the average matched portion of $A_i,B_i$ after the second stage.
    \item (Fig~\ref{fig:general-3}) Let $I=\{i \in [n]:1-e^{\alpha_i-1} + 1 - e^{\beta_i-1} < \Gamma\}$.
    \item (Fig~\ref{fig:general-4}) For each $i \in I$, let there be $k$ vertices $C_i$ and an upper triangle between $C_i$ and $A_i$. The vertices in $C_i$ comes in a sequence. Upon the arrival of the $j$-th vertex $c_{ij}$, the identities of $\{a_{i\ell}\}_{\ell<j}$ has been determined. $c_{ij}$ is connected to all vertices of $A_i - \{a_{i\ell}\}_{\ell<j}$ and the algorithm makes matching decisions for these edges. Afterwards, let $a_{ij}$ be the least matched vertex among $A_i - \{a_{i\ell}\}_{\ell<j}$. Similarly, let there be $k$ vertices $D_i$ and an upper triangle between $D_i$ and $B_i$.
\end{itemize}

\begin{figure}
    \center
    \begin{subfigure}[H]{.15\textwidth}
        \includegraphics[scale = 0.5]{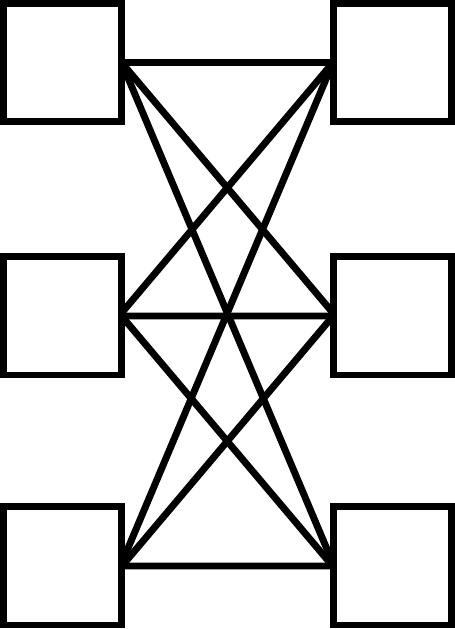}
        \caption{}
        \label{fig:general-1}
    \end{subfigure}
    \begin{subfigure}[H]{.15\textwidth}
        \includegraphics[scale = 0.5]{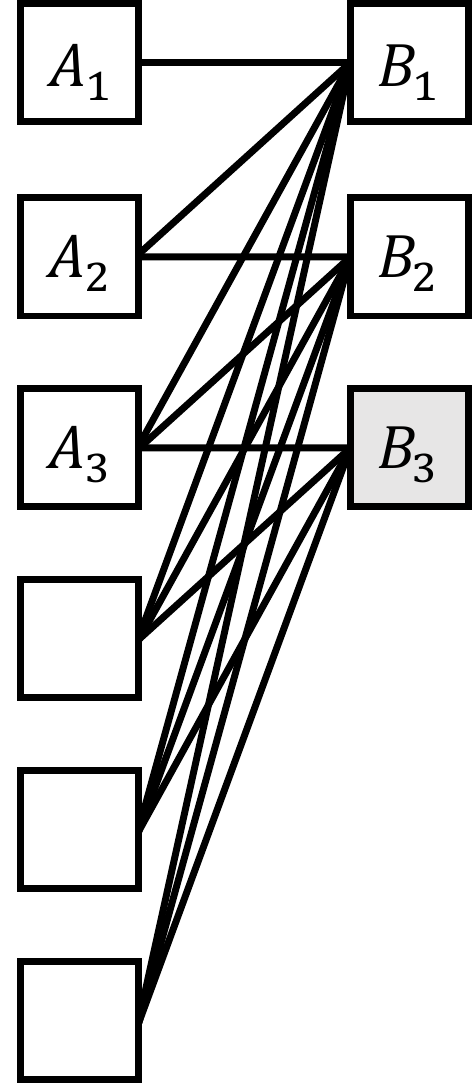}
        \caption{}
        \label{fig:general-2}
    \end{subfigure}
    \begin{subfigure}[H]{.17\textwidth}
        \includegraphics[scale = 0.5]{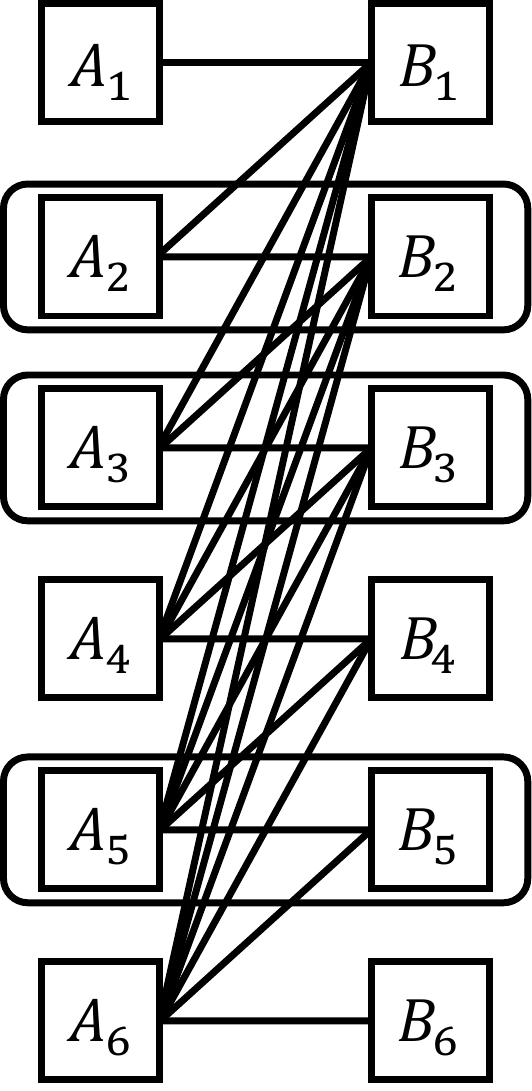}
        \caption{}
        \label{fig:general-3}
    \end{subfigure}
    \begin{subfigure}[H]{.4\textwidth}
        \includegraphics[scale = 0.5]{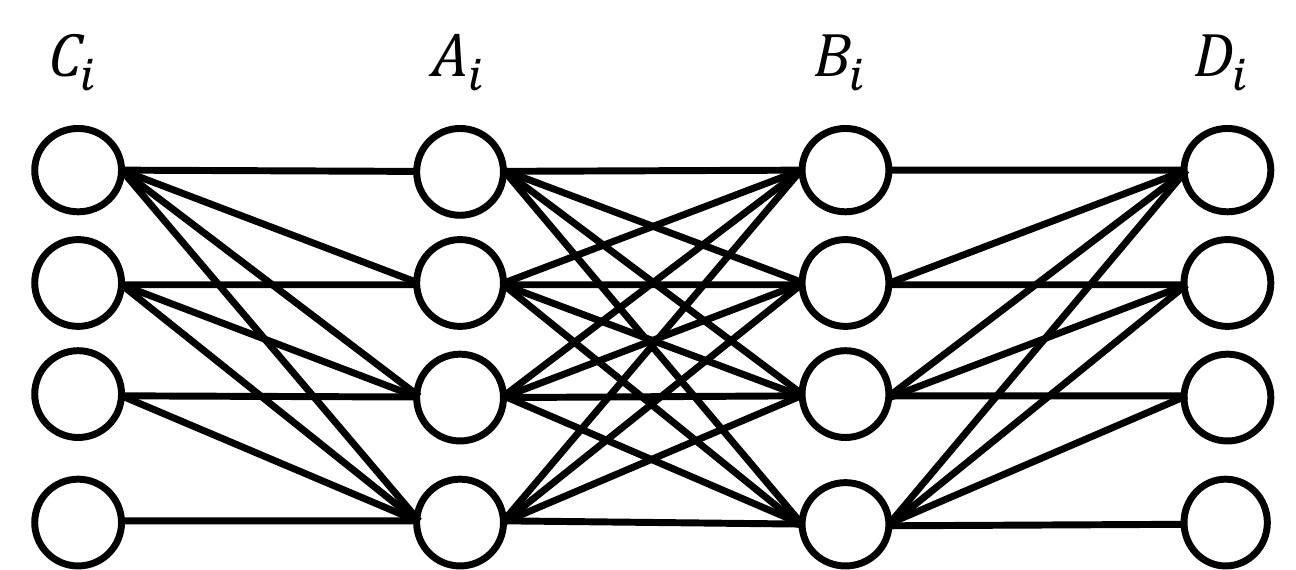}
        \caption{}
        \label{fig:general-4}
    \end{subfigure}
    \caption{An illustration when $n=6$ and $k=4$. Each square denotes a set of $k=4$ vertices. An edge between two squares means a complete bipartite graph in between the two squares. a) $24$ vertices arrive in a sequence and they form a complete bipartite graph. b) $B_3$ arrives and the algorithm makes matching decisions. Afterwards, the identity of $A_3$ is determined. c) $I = \{2,3,5\} = \{i:1-e^{a_i-1} + 1 - e^{b_i-1} < \Gamma\}$. d) Two upper triangles for each $i \in I$.}
    \label{fig:general}
\end{figure}

\paragraph{Competitive Ratio.} 
We first derive the stated competitive ratio of any algorithm for our instance, by making the following intuitive assumptions on the behavior of the algorithm. A formal proof of the theorem is provided in Appendix~\ref{app:general_upper}. 
\begin{itemize}
\item The algorithm matches evenly for all vertices in $A$ by $\gamma$ fraction at the first stage. Intuitively, since all vertices are symmetric at the first stage, the algorithm should match them evenly.
\item On the arrival of each $B_i$, the algorithm matches all remaining vertices, i.e. $A - \cup_{j <i} A_j$ evenly. This is again because all the neighbors of $B_i$ are symmetric on the arrival of $B_i$.
\item Last, the most crucial assumption we made is that whenever $A_i$ is not fully matched, i.e. $\alpha_i < 1$, on the arrival of $B_i$, we would keep matching until $1-e^{\alpha_i-1} + 1 - e^{\beta_i-1} = \Gamma$, where $\alpha_i,\beta_i$ denote the average matched portion of $A_i,B_i$ respectively. The value $1-e^{\alpha_i-1} + 1 - e^{\beta_i-1}$ is monotonically increasing as we continue matching $B_i$ to its neighbors. This assumption states that the algorithm should stop matching when this value is smaller than $\Gamma$.
\end{itemize}

The third assumption is the most crucial to achieve the improved upper bound of online matching with general vertex arrival. And this is due to the design of the third stage, which we explain in detail in the following. 

Recall the graph structure of the third stage. The upper triangle construction is essentially the same construction by Karp et al.~\cite{stoc/KarpVV90}, for proving the $1-\frac{1}{e}$ upper bound for the classical online bipartite matching problem. The only difference is that we have an average matched portion $\alpha_i,\beta_i$ of $A_i, B_i$ before the third stage. We analyze the best possible performance of any algorithm at the third stage, that can be viewed as a straightforward generalization of the $1-1/e$ hardness result of Karp et al.~\cite{stoc/KarpVV90}. We remark that when $\alpha_i=0$, our bound in the following lemma equals $1-\frac{1}{e}$. The formal proof of the lemma is deferred to Appendix~\ref{app:general_upper}.
\begin{lemma}
For each $i \in I$, the total matching made on the arrival of $C_i$ is at most $\left( 1-e^{\alpha_i-1} + o(1) \right) \cdot k$ and the total matching made on the arrival of $D_i$ is at most $\left( 1-e^{\beta_i-1} + o(1) \right) \cdot k$.
\end{lemma}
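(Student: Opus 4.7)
The plan is to adapt the classical $1-1/e$ hardness argument of Karp, Vazirani, and Vazirani for online bipartite matching, shifted to account for the nonzero starting average matched portion $\alpha_i$ on $A_i$ at the end of stage~2. The bound for $D_i$ follows by the same argument applied symmetrically with $\beta_i$ in place of $\alpha_i$, so I only describe the bound for $C_i$.

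The first step is to reduce to the case where the $k$ vertices of $A_i$ start stage~3 with matched portions all equal to $\alpha_i$. Because the adversary designates $a_{ij}$ to be the least matched alive vertex after the matching of $c_{ij}$, the total matching the algorithm can extract from $C_i$ is Schur--concave in the vector of starting portions: an exchange argument that replaces two unequal entries $(p,p')$ by $(\tfrac{p+p'}{2},\tfrac{p+p'}{2})$ only increases the algorithm's total, because smoothing removes the cheap vertex the adversary would otherwise pick. Hence the uniform starting vector at value $\alpha_i$ upper bounds the achievable total.

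Under uniform starting portions, the algorithm's best strategy is water-filling over the alive vertices, so by symmetry all alive vertices share a common matched portion $y_j$ after round $j$, with $y_0=\alpha_i$. As long as $y_{j-1}+\tfrac{1}{k-j+1}\le 1$ the algorithm can match $c_{ij}$ fully, giving
\[
y_j \;=\; \min\!\Bigl(y_{j-1}+\tfrac{1}{k-j+1},\,1\Bigr),\qquad c_j \;=\; \min\!\bigl(1,\,(k-j+1)(1-y_{j-1})\bigr).
\]
Passing to the continuous limit $t=j/k$, $k\to\infty$, the harmonic sum tends to $-\ln(1-t)$, yielding $y(t)=\alpha_i-\ln(1-t)$ up to the saturation point $t^{\star}=1-e^{\alpha_i-1}$, and $y(t)\equiv 1$ on $[t^{\star},1]$.

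Since every stage~3 edge incident to $C_i$ lands in $A_i$, the total matching on $C_i$ equals $k$ times the final average matched portion of $A_i$ minus $k\alpha_i$. A short integration of $y(t)$ on $[0,t^{\star}]$ plus the saturated mass on $[t^{\star},1]$ evaluates the final average to $\alpha_i+1-e^{\alpha_i-1}$, giving the claimed $(1-e^{\alpha_i-1}+o(1))\,k$ bound. The main obstacle is making the first step --- the Schur-concavity/exchange reduction to uniform starting portions --- fully rigorous, as one must verify that the smoothing move dominates through the entire multi-round interaction with the adversarial min-removal; once that reduction is secured, the remainder is the textbook $1-1/e$ calculation shifted by $\alpha_i$.
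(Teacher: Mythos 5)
Your proposal identifies the right high-level structure (a shifted Karp--Vazirani--Vazirani calculation), and the closed-form calculation in the uniform case is correct: the continuous water-filling profile $y(t)=\alpha_i-\ln(1-t)$ up to $t^\star=1-e^{\alpha_i-1}$ does integrate to $\alpha_i+1-e^{\alpha_i-1}$, giving $(1-e^{\alpha_i-1})\,k$ net matching. However, you have a genuine gap exactly where you flag it: the reduction to uniform starting portions via ``Schur-concavity'' is stated, not proved, and it is the entire content of the lemma. The statement you need is that the maximum total matching the algorithm can extract over a multi-round game against an adaptive adversary that always designates the least-matched surviving vertex is Schur-concave in the starting vector. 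That is a non-trivial claim about a max over strategies in an interactive process; the smoothing move must be shown to dominate through all rounds and all algorithm responses, and a single-step exchange does not obviously propagate. You also implicitly assume water-filling is the algorithm's optimal response under uniform starting portions, which again requires an argument in the presence of the adaptive min-removal.

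The paper sidesteps both issues and works directly with an \emph{arbitrary} starting distribution of matched portions whose average is $\alpha_i$. Writing $x_j$ for the final matched portion of $a_{ij}$, the fact that $a_{ij}$ is the least-matched vertex among the survivors $\{a_{i\ell}\}_{\ell\ge j}$ at the moment after $c_{ij}$'s arrival, combined with the observation that at that moment the total mass on $A_i$ is at most $\alpha_i k + j$, gives the linear constraint $\sum_{\ell<j}x_\ell + (k-j+1)x_j \le \alpha_i k + j$ for every $j$. Summing these constraints for $j\le j^\star$ with weights $\frac{1}{(k-j)(k-j+1)}$ telescopes to a bound on $\sum_{j\le j^\star}x_j$; applying $x_j\le 1$ for $j>j^\star$ and choosing $j^\star\approx(1-e^{\alpha_i-1})k$ yields the stated bound with no normalization of the starting vector needed. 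So the paper's argument not only avoids your reduction but effectively proves it as a by-product. To repair your proposal you would either need to supply the Schur-concavity argument in full, or, more cleanly, replace the reduction step with the direct weighted-sum-of-constraints approach.
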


In other words, if an algorithm matches too much between $A_i$ and $B_i$ on the arrival of $B_i$, the arrivals of $C_i, D_i$ from the third stage would penalize the algorithm. Observe that after the arrival of each $C_i,D_i$, the size of the maximum matching increases by $k$, while the algorithm matches at most $(1-e^{\alpha_i-1} + 1-e^{\beta_i-1}) \cdot k$ by the above lemma. Since we are aiming for a competitive ratio of $\Gamma$, it suggests the algorithm to stop matching $B_i$ further if $1-e^{\alpha_i-1} + 1 - e^{\beta_i-1} = \Gamma$. This is exactly the third assumption we made above.  

Now we are ready to calculate the best possible competitive ratio $\Gamma$. We derive the claimed $\Gamma=0.584$ with computer assistance. Concludes all the assumptions above, we can solve all $\alpha_i$ and $\beta_i$ to depict the best algorithm's performance in the following Algorithm~\ref{alg:solve-ratio}.
\begin{algorithm}[H]
	\caption{Solve $\alpha$ an $\beta$}
	\label{alg:solve-ratio}
	\begin{algorithmic}
	    \State $\alpha_0=\gamma$.
	    \For{Each $i\in[n]$}
	       \State Solve $x$ for $e^{\alpha_{i-1} + \frac{x}{n-i+1} - 1} + e^{x-1} = 2-\Gamma$.
	       \State $\beta_i=\min\{x,(n-i+1)(1-\alpha_i)\}$.
	       \State $\alpha_i=\alpha_{i-1}-1+\frac{\beta_i}{n-i+1}$.
	    \EndFor
	\end{algorithmic}
\end{algorithm}

Fixing $n=500$, the algorithm must start with a $\gamma \ge \Gamma$ after the first stage. Then, the competitive ratio of the algorithm after the second stage equals:
\[
r(\gamma) = \frac{\left( n\gamma+2\sum_{i=1}^{n} \beta_i \right) \cdot k}{2n \cdot k} = \frac{n\gamma + 2\sum_{i=1}^{n} \beta_i}{2n}.
\]

We search for every possible value of $\gamma \ge \Gamma$ (discretized) and plot the following figure for the function $r(x)$.
\begin{figure}[H]
    \centering
    \includegraphics[scale = 0.5]{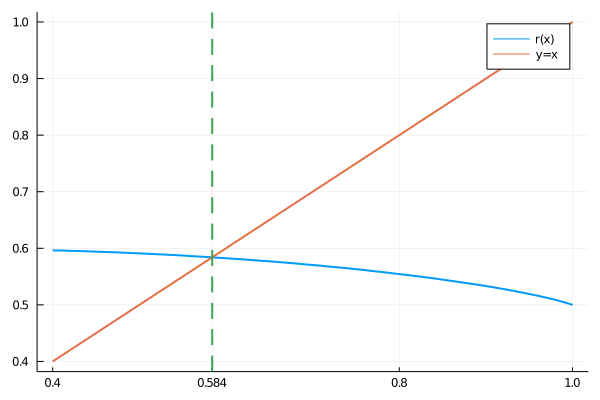}
    \caption{Function $r(x)$ and $y=x$}
    \label{fig:general-upper-func}
\end{figure}
Observe that $r(x) < \Gamma$ for all $x\ge \Gamma$. We conclude that there is no algorithm has a competitive ratio better than $\Gamma=0.584$.


\bibliography{matching}
\bibliographystyle{plain}

\appendix
\section{Factor Revealing LP}
\label{app:factor_lp}
In this section, we explain how we use the factor revealing LP to construct the feasible $h$ function to conclude the proof of Lemma~\ref{lem:opth}. Recall that there are two constraints between the non-decreasing function $h$ and the targeting competitive ratio $\Gamma = 0.6$. Let
\begin{align*}
    &\Phi_1(\tau_v,\theta_v) = \theta_v \cdot h(\tau_v, \theta_v) - \int_{\tau_v}^{\theta_v} h(\tau_v, y_v) dy_v + 1-\theta_v; \\ 
    &\Phi_2(\tau_u,\theta_u,\tau_v,\theta_v) = \left(
    \begin{aligned}
        &\theta_u \cdot h(\tau_u, \theta_u) - \int_{\tau_u}^{\theta_u} h(\tau_u, y_u) dy_u + \theta_v \cdot h(\tau_v, \theta_v)\\
        & - \int_{\tau_v}^{\theta_v} h(\tau_v, y_v) dy_v + (1-h(\tau_u, \theta_u)) \cdot (1-\theta_v)
    \end{aligned}
    \right).
\end{align*}
The two constraints are defined as follows: 
\begin{align}
	& \forall 0\le \tau_v \le \theta_v \le 1: &\Phi_1 \ge \Gamma; \tag{\ref{eqn:h-feasible-1}} \\
    & \forall 0\le \tau_u \le \theta_u \le 1, 1-\theta_u\le \tau_v \le \theta_v \le 1 :\quad  & \Phi_2 \ge \Gamma; \tag{\ref{eqn:h-feasible-2}}
\end{align}

\paragraph{The Factor-revealing Lp and The Construction of $h$} Fix an integer $n$, we let $1/n$ be the step size, and $[0,1]_n = \{\frac{i}{n}:i = 0 ... n\}$ denote the set of $n+1$ multiples of the step size $1/n$, between $0$ and $1$. The LP contains $(n+1)^2$ variables, which also represent the function value of $h(0,0), h(0,1/n), ..., h(0,1), h(1/n,0), ..., h(1,1)$. We remark that the remaining of $h$ will be later constructed by these variables via linear interpolation. In particular, consider a real number $x\in [0,1]$, we define $\bar{x}$ be the largest value inside $[0,1]_n$ that not exceed $x$ and $\hat{x} = \bar{x} + \frac{1}{n}$. We also define $z(x) = n \cdot (x - \bar{x})$, which also means $x = \bar{x} + \frac{z(x)}{n}$ and $x = \hat{x} - \frac{(1-z(x))}{n}$. Then for those point $(\tau,\theta)$ not inside $[0,1]_n \times [0,1]_n$, $h$ is defined to be:  
$$
h(\tau,\theta) = \left(
    \begin{aligned}
        &(1-z(\tau))\cdot (1-z(\theta)) \cdot h(\bar{\tau},\bar{\theta}) + (1-z(\tau))\cdot z(\theta) \cdot h(\bar{\tau},\hat{\theta}) \\
        &z(\tau)\cdot (1-z(\theta)) \cdot h(\hat{\tau},\bar{\theta}) + z(\tau)\cdot z(\theta) \cdot h(\hat{\tau},\hat{\theta})
    \end{aligned}
\right).
$$
Following this construction of $h$, $\forall \tau,~\theta \in [0,1]_n$, we have 
$$
\int_{\tau}^{\theta} h(\tau,y) dy = \sum_{\tau\leq y < \theta,~ y \in [0,1]_n} \frac{h(y) + h(y+\frac{1}{n})}{2n}.
$$
It means both when the variables (i.e., $\tau_u$, $\tau_v$, $\theta_u$, and $\theta_v$) are chosen from $[0,1]_n$, both $\Phi_1$ and $\Phi_2$ are linear. It allows us to present the LP formulation below, for convenience, we let $h(\tau,1+1/n) = 1$ for any $\tau \in [0,1]_n$. 

\begin{align}
    \text{maximize}\qquad & \Gamma \nonumber \\
    \text{subject to}
    \qquad & \Gamma \leq \Phi_1(\tau_v,\theta_v) - \frac{5}{2n^2}, 
    &\forall \tau_v \leq \theta_v \in [0,1]_n & \label{constraint:v-arrive-first}& \\
    & \Gamma \leq \Phi_2(\tau_u,\theta_u,\tau_v,\theta_v) - \frac{5}{n^2}, 
    &\begin{aligned}
        \forall \tau_v \leq \theta_v \in [0,1]_n, \\
        \forall 1-\theta_u\le \tau_v \le \theta_v \in [0,1]_n 
    \end{aligned} &
    \label{constraint:u-arrive-first} &\\
& 
\begin{aligned}
    &h(\tau,0) = 0, \\
    &h(\tau,1) = h(\tau, 1+\frac{1}{n}) = 1,
\end{aligned}\quad 
  &\forall \tau \in [0,1])_n &&\\ 
& h(\tau,y) < h(\tau, y+\frac{1}{n}),
  & \forall \tau \in [0,1], ~\forall y\in[0,1]_n < 1 &\tag{Monotonicity} &\\
&\begin{aligned}
 & \textstyle h(\tau, y) \geq h(\tau, y+\frac{1}{n}) + \frac{4}{n}, \\
 & \textstyle h(y, \theta) \geq h(y+\frac{1}{n}, \theta) + \frac{4}{n},  
\end{aligned} 
& \begin{aligned}
    &\forall \tau \in [0,1], ~\forall y\in[0,1]_n < 1 \\
    &\forall \theta \in [0,1], ~\forall y\in[0,1]_n < 1 
\end{aligned}
&\tag{Lipschitzness} &
\end{align}

We use the Gurobi LP solver to solve the above LP and get $\Gamma >0.6$ when we fix $n=100$. In the following part, we prove that the optimal solution of the above LP can construct a feasible $h$ function when we fix $\Gamma = 0.6$, i.e., prove Eqn~\eqref{eqn:h-feasible-1} and Eqn~\eqref{eqn:h-feasible-2}, so to derive the $0.6$ competitive ratio. 

These constraints directly appears in the LP when the parameters (i.e., $\tau_u$, $\tau_v$, $\theta_u$, and $\theta_v$) are all inside $[0,1]_n$, it remains to consider the case when some parameters are outside $[0,1]_n$. We first define following two functions: 

\begin{align*}
&\tilde{\Phi}_1(\tau_v,\theta_v)= \left(
\begin{aligned}
    & (1-z(\tau_u))\cdot(1-z(\theta_u)) \cdot \Phi_1(\bar{\tau_u},\bar{\theta_u}) 
    + (1-z(\tau_u))\cdot z(\theta_u) \cdot \Phi_1(\bar{\tau_u},\hat{\theta_u}) \\
    &+ z(\tau_u) \cdot (1-z(\theta_u)) \cdot \Phi_1(\hat{\tau_u},\bar{\theta_u}) 
    + z(\tau_u) \cdot z(\theta_u) \cdot \Phi_1(\hat{\tau_u},\hat{\theta_u})
\end{aligned}\right).
\\
&\tilde{\Phi}_2(\tau_u,\theta_u,\tau_v,\theta_v)= \left(
\begin{aligned}
    & (1-z(\tau_u))\cdot(1-z(\theta_u))\cdot(1-z(\tau_v))\cdot(1-z(\theta_v)) \cdot \Phi_2(\bar{\tau_u},\bar{\theta_u},\bar{\tau_v},\bar{\theta_v}) \\
    &+ (1-z(\tau_u))\cdot(1-z(\theta_u))\cdot(1-z(\tau_v))\cdot z(\theta_v) \cdot \Phi_2(\bar{\tau_u},\bar{\theta_u},\bar{\tau_v},\hat{\theta_v}) \\
    &+ (1-z(\tau_u))\cdot(1-z(\theta_u))\cdot z(\tau_v) \cdot (1-z(\theta_v)) \cdot \Phi_2(\bar{\tau_u},\bar{\theta_u},\hat{\tau_v},\bar{\theta_v}) \\
    &+ \dots \\
    &+ z(\tau_u) \cdot z(\theta_u) \cdot z(\tau_v) \cdot z(\theta_v) \cdot \Phi_2(\hat{\tau_u},\hat{\theta_u},\hat{\tau_v},\hat{\theta_v})
\end{aligned}
\right).
\end{align*}
It's easy to see that $\tilde{\Phi}_1$ and $\tilde{\Phi}_2$ is no less than $\Gamma + \frac{5}{2n^2}$ and $\Gamma + \frac{5}{n^2}$, because they are the affine combination of the terms appear in the LP constraints \eqref{constraint:v-arrive-first} and \eqref{constraint:u-arrive-first}. The key point we claim is the following two lemmas:
\begin{lemma}
\label{lem:phi1-bound}
    $\Phi_1(\tau_u,\theta_u,\tau_v,\theta_v) \geq \tilde{\Phi}_1(\tau_u,\theta_u,\tau_v,\theta_v) - \frac{5}{2n^2} \geq \Gamma$.
\end{lemma}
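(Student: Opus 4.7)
The plan is to treat the two inequalities separately. The second inequality $\tilde{\Phi}_1(\tau_v,\theta_v) - \frac{5}{2n^2} \ge \Gamma$ is almost immediate: $\tilde{\Phi}_1$ is a convex combination of $\Phi_1$ evaluated at the four grid corners surrounding $(\tau_v,\theta_v)$, and at each such grid point LP constraint~\eqref{constraint:v-arrive-first} enforces $\Phi_1 \ge \Gamma + \frac{5}{2n^2}$; the convex combination preserves this bound. The real content is therefore the first inequality $\Phi_1(\tau_v,\theta_v) \ge \tilde{\Phi}_1(\tau_v,\theta_v) - \frac{5}{2n^2}$, and my approach is to decompose the three summands of $\Phi_1$, namely the product $\theta_v\cdot h(\tau_v,\theta_v)$, the negative integral $-\int_{\tau_v}^{\theta_v} h(\tau_v,y)\,dy$, and the constant $1-\theta_v$, and bound each deviation from its interpolated counterpart within $\tilde{\Phi}_1$.

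The constant summand matches exactly because the four bilinear weights sum to one and $\theta_v$ itself is the $z_\theta$-combination of $\bar{\theta_v},\hat{\theta_v}$, so $1-\theta_v = \sum_{i,j} w_{ij}(1-\theta_j)$. For the product summand, the bilinear construction gives $h(\tau_v,\theta_v) = \sum_{i,j} w_{ij}\,h_{ij}$ where $h_{ij}$ abbreviates $h$ at the $(i,j)$-th corner, so the product discrepancy reduces to $\sum_{i,j} w_{ij}(\theta_v - \theta_j)h_{ij}$. Using $\theta_v - \bar{\theta_v} = z_\theta/n$ and $\theta_v - \hat{\theta_v} = -(1-z_\theta)/n$ and collecting terms, this telescopes into $-\frac{z_\theta(1-z_\theta)}{n}\bigl[(1-z_\tau)(h(\bar{\tau_v},\hat{\theta_v}) - h(\bar{\tau_v},\bar{\theta_v})) + z_\tau(h(\hat{\tau_v},\hat{\theta_v}) - h(\hat{\tau_v},\bar{\theta_v}))\bigr]$. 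Monotonicity and Lipschitzness place each bracketed $h$-difference in $[0,4/n]$, and $z_\theta(1-z_\theta)\le 1/4$, so the product summand contributes at worst a deficit of $1/n^2$ to $\Phi_1 - \tilde{\Phi}_1$.

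The integral term is the main obstacle. Exploiting the piecewise bilinear structure of $h$, I would first write $\int_{\tau_v}^{\theta_v} h(\tau_v,y)\,dy = (1-z_\tau)\int_{\tau_v}^{\theta_v} h(\bar{\tau_v},y)\,dy + z_\tau\int_{\tau_v}^{\theta_v} h(\hat{\tau_v},y)\,dy$, since $h$ is affine in its first argument on each vertical strip. Grouping $\tilde{\Phi}_1$ by its $\tau$-index, the $\theta$-endpoint mismatch on each strip reduces, by a short trapezoidal calculation on the cell $[\bar{\theta_v},\hat{\theta_v}]$, to a term of the form $-\frac{z_\theta(1-z_\theta)}{2n}\cdot\Delta$ with $|\Delta|\le 4/n$, hence bounded by $1/(2n^2)$ in magnitude. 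The subtle part is the $\tau$-endpoint mismatch: the interpolated integrals for the $\bar{\tau_v}$ and $\hat{\tau_v}$ indices start at $\bar{\tau_v}$ and $\hat{\tau_v}$ respectively, while the true integral starts at the off-grid point $\tau_v$; this creates thin discrepancy strips $[\bar{\tau_v},\tau_v]$ and $[\tau_v,\hat{\tau_v}]$ whose integrals are individually $O(1/n)$. Computing their trapezoidal values using linearity of $h$ on each cell and combining them with weights $(1-z_\tau)$ and $z_\tau$, the leading $O(1/n)$ pieces cancel, leaving a cross-Lipschitz remainder controlled by $|h(\hat{\tau_v},\hat{\tau_v}) - h(\hat{\tau_v},\bar{\tau_v}) - h(\bar{\tau_v},\hat{\tau_v}) + h(\bar{\tau_v},\bar{\tau_v})|$, which two-sided Lipschitzness bounds by $O(1/n)$; multiplying by the $\frac{z_\tau(1-z_\tau)}{2n}$ prefactor yields $O(1/n^2)$. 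The chief difficulty is precisely this telescoping: ensuring the $O(1/n)$ boundary contributions on the two $\tau$-end strips visibly cancel so that only the cross-Lipschitz $O(1/n^2)$ remainder survives. Summing the three deficits, $0$ from the constant, $1/n^2$ from the product, and $3/(2n^2)$ from the integral in the worst case, yields the claimed bound $5/(2n^2)$.
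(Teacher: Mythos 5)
Your overall approach mirrors the paper's: the paper factors out $H(\tau,\theta) = \theta h(\tau,\theta) - \int_\tau^\theta h(\tau,y)\,dy$ and proves $H \ge \tilde H - 5/(2n^2)$ by first relaxing $\theta$ (deficit $\le 1/n^2$) and then relaxing $\tau$ (deficit $\le 3/(2n^2)$), while you decompose $\Phi_1$ by summand. The product-term and constant-term pieces are handled correctly, and the second inequality is treated the same way. But there is a genuine gap in the $\tau$-endpoint analysis of the integral term.

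You assert that after weighting the two $\tau$-strips $[\bar\tau_v,\tau_v]$ and $[\tau_v,\hat\tau_v]$ by $(1-z_\tau)$ and $z_\tau$, the leading $O(1/n)$ pieces cancel and the remainder is controlled by the mixed second difference $\bigl|h(\hat\tau_v,\hat\tau_v) - h(\hat\tau_v,\bar\tau_v) - h(\bar\tau_v,\hat\tau_v) + h(\bar\tau_v,\bar\tau_v)\bigr|$. This is not what happens. Carrying out the trapezoidal computation, the weighted $\tau$-strip contribution equals
$\frac{z_\tau(1-z_\tau)}{2n}\bigl[h(\bar\tau_v,\bar\tau_v) + h(\bar\tau_v,\tau_v) - h(\hat\tau_v,\tau_v) - h(\hat\tau_v,\hat\tau_v)\bigr]$, and expanding the off-grid values via bilinearity shows the bracket equals $(a-c)+(a-d)-z_\tau(a-b-c+d)$ with $a=h(\bar\tau_v,\bar\tau_v)$, $b=h(\bar\tau_v,\hat\tau_v)$, $c=h(\hat\tau_v,\bar\tau_v)$, $d=h(\hat\tau_v,\hat\tau_v)$. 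The first-order Lipschitz differences $(a-c)$ and $(a-d)$ do not cancel; the bracket is not a mixed second difference. Bounding a mixed second difference by Lipschitzness would yield $8/n$ and hence a $\tau$-strip deficit of only $1/n^2$, whereas the actual bracket needs the paper's triangle-inequality bound of $12/n$, giving a deficit of up to $3/(2n^2)$.

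Your final arithmetic of $0 + 1/n^2 + 3/(2n^2) = 5/(2n^2)$ happens to come out right, but only because a second imprecision offsets the first: the $\theta$-endpoint mismatch, which you charge a potential deficit of $1/(2n^2)$, actually has favorable sign — the trapezoidal calculation gives $+\frac{z_\theta(1-z_\theta)}{2n}\bigl(h(\tau_i,\hat\theta_v) - h(\tau_i,\bar\theta_v)\bigr) \ge 0$ by monotonicity, contributing no deficit. With that corrected, the $\tau$-strip deficit must be $3/(2n^2)$ (not $1/n^2$) to reach the stated total, which requires the $12/n$ Lipschitz bound rather than the mixed-second-difference bound you invoke. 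The two errors cancel numerically, but the argument as written does not establish the claimed inequality; you would need to replace the mixed-second-difference claim with the correct bracket and use the paper's $12/n$ estimate.
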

\begin{lemma}
\label{lem:phi2-bound}
    $\Phi_2(\tau_u,\theta_u,\tau_v,\theta_v) \geq \tilde{\Phi}_2(\tau_u,\theta_u,\tau_v,\theta_v) - \frac{5}{n^2} \geq \Gamma$.
\end{lemma}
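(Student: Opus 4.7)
The plan is to establish the two inequalities in the lemma separately. The second one, $\tilde{\Phi}_2 - 5/n^2 \ge \Gamma$, is the easy direction: by construction, $\tilde{\Phi}_2(\tau_u,\theta_u,\tau_v,\theta_v)$ is a non-negative-weight convex combination (the weights $1-z(\cdot)$ and $z(\cdot)$ sum to one on each of the four axes) of the sixteen values of $\Phi_2$ at the corners of the four-dimensional grid cell containing $(\tau_u,\theta_u,\tau_v,\theta_v)$. For corners lying in the feasible region of constraint~\eqref{constraint:u-arrive-first}, the LP directly gives $\Phi_2 \ge \Gamma + 5/n^2$. The few corners that straddle the boundary $1-\theta_u = \tau_v$ can be handled by replacing them with their feasible grid-neighbors at distance $1/n$, absorbing the small Lipschitz discrepancy into the slack.

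For the first inequality I would decompose $\Phi_2 = T_a + T_b + T_c + T_d + T_e$, where $T_a = \theta_u h(\tau_u,\theta_u)$, $T_c = \theta_v h(\tau_v,\theta_v)$, $T_b = -\int_{\tau_u}^{\theta_u} h(\tau_u,y)\,dy$, $T_d = -\int_{\tau_v}^{\theta_v} h(\tau_v,y)\,dy$, and $T_e = (1-h(\tau_u,\theta_u))(1-\theta_v)$, and by linearity $\tilde{\Phi}_2 = \sum_i \tilde{T}_i$. The cleanest case is $T_e$: within the cell $h$ is bilinear in $(\tau_u,\theta_u)$ by the LP's interpolation rule, so $T_e$ is multilinear in $(\tau_u,\theta_u,\theta_v)$ and independent of $\tau_v$, hence reproduced exactly by the 4D multilinear interpolation, giving $|T_e - \tilde{T}_e| = 0$. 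For $T_a$ (and symmetrically $T_c$), substituting $\theta_u = (1-t)\bar\theta_u + t\hat\theta_u$ with $t = z(\theta_u)$ and using bilinearity of $h$ in $(\tau_u,\theta_u)$ yields the explicit expression
\[
T_a - \tilde{T}_a = -\tfrac{(1-t)t}{n}\bigl(h(\tau_u,\hat\theta_u) - h(\tau_u,\bar\theta_u)\bigr),
\]
whose magnitude is $O(1/n^2)$ via $(1-t)t \le 1/4$ and the Lipschitz constraint bounding the $h$-difference by $O(1/n)$.

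The main obstacle is the integral terms $T_b, T_d$, since the integration range $[\tau_u,\theta_u]$ can span many grid intervals in $y$ while $h(\tau_u,y)$ is only piecewise linear with breakpoints at grid values. I would split $\int_{\tau_u}^{\theta_u} h(\tau_u,y)\,dy$ into a head piece on $[\tau_u,\hat\tau_u]$, a middle piece on $[\hat\tau_u,\bar\theta_u]$, and a tail piece on $[\bar\theta_u,\theta_u]$. The middle piece is a sum of full-cell trapezoidal contributions $\frac{1}{2n}\bigl(h(\tau_u,y) + h(\tau_u,y+1/n)\bigr)$, each linear in $\tau_u$, so it is reproduced exactly by multilinear interpolation (this is the very formula used to define $\Phi$ at grid points in the LP). The head and tail pieces each contribute an error of order $1/n^2$ via a $(1-s)s$-style factor coming from the local bilinear expression of $h$ on the boundary cell, combined with the Lipschitz bound on the neighboring $h$-values (note that the coefficients of the $h$-values in the error expression sum to zero, so only the $O(1/n)$ variation of $h$ within the cell survives). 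Careful bookkeeping and summation of all five term errors yields the required overall bound $|\Phi_2 - \tilde{\Phi}_2| \le 5/n^2$.
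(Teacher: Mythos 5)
Your first inequality takes essentially the same route as the paper, only with a finer decomposition: the paper bundles your $T_a + T_b$ into a single quantity $H(\tau,\theta) := \theta\, h(\tau,\theta) - \int_{\tau}^{\theta} h(\tau,y)\,dy$ (so $T_a+T_b = H(\tau_u,\theta_u)$ and $T_c+T_d=H(\tau_v,\theta_v)$), proves the reusable bound $H(\tau,\theta)\ge\tilde H(\tau,\theta)-\tfrac{5}{2n^2}$ by interpolating first in $\theta$ (error $1/n^2$) and then in $\tau$ (error $\tfrac{3}{2n^2}$), and applies it twice. The cross term $T_e=(1-h(\tau_u,\theta_u))(1-\theta_v)$ is handled identically in both proofs: it is multilinear on the cell, hence reproduced exactly. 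Your explicit error formula for $T_a$ and your head/middle/tail split for $T_b$ are, modulo bookkeeping, the same calculations the paper performs inside the $H$-lemma, so this half of the argument is sound and matches the paper.

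The second inequality is where there is a real gap. You are right to worry that some of the sixteen cell corners may violate the LP's constraint $1-\theta_u\le\tau_v$: for example with $n=10$, $\theta_u=0.35$, $\tau_v=0.65$ one has $1-\bar\theta_u = 0.7 > 0.6 = \bar\tau_v$, and that corner can carry constant-order interpolation weight (here $\tfrac14$ in the $(\theta_u,\tau_v)$ projection). But your proposed repair fails quantitatively: swapping an infeasible corner for a grid-neighbor changes $\Phi_2$ by order $1/n$ (the Lipschitz constraint only bounds adjacent $h$-values within $4/n$), which is an order of magnitude larger than the $\tfrac{5}{n^2}$ slack you hope to absorb it into. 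The correct way to close this is structural, not analytic: impose the LP constraint \eqref{constraint:u-arrive-first} on the slightly enlarged grid set $1-\theta_u\le\tau_v+\tfrac1n$, so that every corner of every relevant cell is itself an LP constraint. (The paper's own proof asserts $\tilde\Phi_2\ge\Gamma+\tfrac{5}{n^2}$ "because they are the affine combination of the terms appear in the LP constraints" without addressing this boundary issue, so you are flagging something the paper also glosses over; your fix, however, is not a valid one.)
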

To prove these two lemmas, we define one crucial function below: 
\begin{align*}
&H(\tau,\theta) = \theta_v h(\tau,\theta) - \int_{\tau}^{\theta} h(\tau,y) dy; \\
&\tilde{H}(\tau,\theta) = \left(
\begin{aligned}
    &(1-z(\tau))(1-z(\theta))H(\bar{\tau},\bar{\theta}) 
    + (1-z(\tau))\cdot z(\theta) H(\bar{\tau},\hat{\theta}) \\
    &+ z(\tau)\cdot(1-z(\theta))H(\hat{\tau},\bar{\theta}) 
    + z(\tau)\cdot z(\theta) H(\hat{\tau},\hat{\theta})
\end{aligned}
\right).
\end{align*}
We claim the following lemma, 

\begin{lemma}
    $H(\tau,\theta) \geq \tilde{H}(\tau,\theta) - \frac{5}{2n^2}$.
\end{lemma}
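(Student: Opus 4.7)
The plan is to prove $H(\tau,\theta) \ge \tilde{H}(\tau,\theta) - \frac{5}{2n^2}$ by an explicit computation that exploits the piecewise bilinear structure of $h$ to isolate the interpolation error. The cornerstone identity I would invoke is
\[
h(\tau, y) = (1-s)\, h(\bar\tau, y) + s\, h(\hat\tau, y) \qquad \text{for all } y \in [0,1],
\]
where $s = z(\tau)$; this follows directly from the bilinear-interpolation construction of $h$ and decouples the $\tau$-dependence of the integrand from the $y$-integration in $I(\tau,\theta) := \int_\tau^\theta h(\tau, y)\, dy$. Splitting the two resulting column integrals at the nearest grid endpoint yields
\[
I(\tau,\theta) = (1-s)\left[\int_{\bar\tau}^\theta h(\bar\tau, y)\, dy - A\right] + s\left[\int_{\hat\tau}^\theta h(\hat\tau, y)\, dy + B\right],
\]
with boundary integrals $A := \int_{\bar\tau}^\tau h(\bar\tau, y)\, dy$ and $B := \int_\tau^{\hat\tau} h(\hat\tau, y)\, dy$, each of width at most $1/n$.

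I would then perform the analogous rearrangement on the four corner integrals inside the bilinear interpolant of $I$, and handle the $\theta h(\tau,\theta)$ piece of $H$ in the same way. For each column $a \in \{\bar\tau,\hat\tau\}$, a one-cell trapezoidal computation on $[\bar\theta,\hat\theta]$ gives
\[
\int_{a}^{\theta} h(a,y)\,dy - (1-t)\int_{a}^{\bar\theta} h(a,y)\,dy - t\int_{a}^{\hat\theta} h(a,y)\,dy = -\tfrac{t(1-t)}{2n}\,\Delta_\theta h(a,\bar\theta),
\]
where $\Delta_\theta h(a,\bar\theta) := h(a,\hat\theta) - h(a,\bar\theta)$. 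Combining the two columns via the identity $\Delta_\theta h(\tau,\bar\theta) = (1-s)\Delta_\theta h(\bar\tau,\bar\theta) + s\,\Delta_\theta h(\hat\tau,\bar\theta)$ and collecting with the parallel error for the quadratic piece $\theta h(\tau,\theta)$ (which contributes $-\tfrac{t(1-t)}{n}\Delta_\theta h(\tau,\bar\theta)$ and absorbs half of the previous term), I expect to arrive at the clean identity
\[
H(\tau,\theta) - \tilde H(\tau,\theta) = -\frac{t(1-t)}{2n}\,\Delta_\theta h(\tau,\bar\theta) + (1-s)\,A - s\,B.
\]

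Given this identity, two bounds finish the proof. By monotonicity $\Delta_\theta h \ge 0$ and by the LP's Lipschitz constraint $\Delta_\theta h \le 4/n$, so the first term is non-positive and at least $-\tfrac{2t(1-t)}{n^2} \ge -\tfrac{1}{2n^2}$. For the second term, a naive triangle-inequality bound using $h\le 1$ yields only $O(1/n)$, which is far too weak. The key observation is that on the diagonal cell $[\bar\tau,\hat\tau]^2$ the integrands of $A$ and $B$ are affine in $y$, giving the closed forms
\[
A = \tfrac{s}{n}\bigl[(1-\tfrac{s}{2})\, h(\bar\tau,\bar\tau) + \tfrac{s}{2}\, h(\bar\tau,\hat\tau)\bigr], \qquad B = \tfrac{1-s}{n}\bigl[\tfrac{1-s}{2}\, h(\hat\tau,\bar\tau) + \tfrac{1+s}{2}\, h(\hat\tau,\hat\tau)\bigr].
\]
Expanding $(1-s)A - sB$, the $O(1/n)$ leading parts cancel exactly, and the residual factors as $\tfrac{s(1-s)}{n}$ times a bounded combination of the nearest-neighbour differences of $h$ on the diagonal cell, each of magnitude at most $4/n$ by Lipschitzness. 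This gives $|(1-s)A - sB| \le \tfrac{6s(1-s)}{n^2} \le \tfrac{3}{2n^2}$, and combining the two estimates yields $H - \tilde H \ge -\tfrac{1}{2n^2} - \tfrac{3}{2n^2} = -\tfrac{2}{n^2} \ge -\tfrac{5}{2n^2}$, as claimed.

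The main obstacle is exactly this algebraic cancellation inside $(1-s)A - sB$: without recognising that the leading parts annihilate one only obtains an $O(1/n)$ bound and the entire factor-revealing LP scheme would collapse. A secondary subtlety is a short case split on whether $\tau$ and $\theta$ lie in the same grid cell or in different cells. In the ``different cells'' case, the $y$-integration passes through complete grid cells on each of which $h(\tau,y)$ is linear in $\tau$, so those middle trapezoidal pieces of $I$ and $\tilde I$ cancel exactly between $H$ and $\tilde H$, and the argument reduces to the boundary pieces $A$ and $B$ above; the same-cell case is handled by the same identities restricted to a single cell.
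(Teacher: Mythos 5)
Your proof is correct and takes essentially the same route as the paper, just organized more tightly. The paper also bilinearly interpolates $H$, splits into a $\theta$-direction step and a $\tau$-direction step, and bounds each by the LP's Lipschitz constraint; your $(1-s)A - sB$ boundary-integral analysis is precisely the paper's $\tau$-direction computation, including the same $\frac{s(1-s)}{2n}\cdot\frac{12}{n}\le\frac{3}{2n^2}$ bound. The genuine refinement in your argument is that you write down a single exact identity $H-\tilde H = -\frac{t(1-t)}{2n}\Delta_\theta h(\tau,\bar\theta) + (1-s)A - sB$ rather than bounding the two one-dimensional interpolation errors sequentially: the paper treats the $\theta h(\tau,\theta)$ piece and the $-\int$ piece independently (Lipschitzness gives $\ge -1/n^2$ for the former, monotonicity gives $\ge 0$ for the latter), whereas you observe the trapezoidal error from $-\int$ exactly cancels half the $\theta h$ error, yielding $\ge -\frac{1}{2n^2}$ for the $\theta$-part and $\ge -2/n^2$ overall, which is strictly sharper than the claimed $-5/(2n^2)$. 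Two small remarks: your phrase ``nearest-neighbour differences $\ldots$ each of magnitude at most $4/n$'' is slightly imprecise since the diagonal difference $h(\bar\tau,\bar\tau)-h(\hat\tau,\hat\tau)$ requires two Lipschitz steps, but your final $\frac{6s(1-s)}{n^2}$ constant is correct because the total coefficient mass after decomposing into single steps is $3$, giving $3\cdot 4/n = 12/n$; and the same-cell/different-cell case split you flag is actually unnecessary, since both the one-cell trapezoidal identity and the closed forms for $A,B$ hold verbatim in either regime.
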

\begin{proof}
    We first restrict $\tau \in [0,1]_n$ but relax $\theta$ to be in $[0,1]$, and prove that 
    $$
    H(\tau,\theta) \geq (1-z(\theta))\cdot H(\tau,\bar{\theta}) + z(\theta)\cdot H(\tau,\hat{\theta}) - \frac{1}{n^2}.
    $$
    Recall that $H(\tau,\theta) = \theta h(\tau,\theta) - \int_{\tau}^{\theta} h(\tau,y) dy$, consider the first term, we have 
    \begin{align*}
        & \theta h(\tau,\theta) - (1-z(\theta))\cdot \bar{\theta} h(\tau,\bar{\theta}) - z(\theta)\cdot \hat{\theta} h(\tau,\hat{\theta}) \\ 
        = & \theta\cdot(1-z(\theta))\cdot(\tau,\bar{\theta}) + \theta\cdot z(\theta)h(\tau,\hat{\theta}) - (1-z(\theta))\cdot \bar{\theta} h(\tau,\bar{\theta}) - z(\theta)\cdot \hat{\theta} h(\tau,\hat{\theta}) \\
        = & \frac{z(\theta)}{n} \cdot (1-z(\theta)) \cdot h(\tau,\bar{\theta}) - \frac{1-z(\theta)}{n} \cdot z(\theta) \cdot h(\tau,\hat{\theta}) \\
        = & \frac{z(\theta)(1-z(\theta))}{n} \cdot (h(\tau,\bar{\theta}) -h(\tau,\hat{\theta})) \\ 
        \geq & - \frac{1}{n^2}. \qquad \text{(By the Lipschitzness of $h$)}
    \end{align*}
    For the other term, we have
    \begin{align*}
        & - \int_{\tau}^{\theta} h(\tau,y)dy + (1-z(\theta)) \cdot \int_{\tau}^{\bar{\theta}} h(\tau,y)dy + z(\theta) \cdot \int_{\tau}^{\hat{\theta}} h(\tau,y)dy \\
        = & - \int_{\bar{\theta}}^{\theta} h(\tau,y)dy + z(\theta) \cdot \int_{\hat{\theta}}^{\bar{\theta}} h(\tau,y)dy \\
        \geq & 0. \qquad \text{(By the Monotonicity of $h$)}
    \end{align*}
    Thus far, we conclude the first claim. Then, we restrict $\theta \in [0,1]_n$, and allow $\tau$ to be in $[0,1]$. We prove that 
    $$
    H(\tau,\theta) \geq (1-z(\tau))\cdot H(\bar{\tau},\theta) + z(\tau)\cdot H(\hat{\tau},\theta) - \frac{3}{2n^2}.
    $$
    Similarly, consider the first term, we have
    \begin{align*}
        & \theta h(\tau,\theta) - (1-z(\tau))\cdot \theta h(\bar{\tau},\theta) - z(\tau)\cdot \theta h(\hat{\tau},\theta) = 0.
    \end{align*}
    For the second them, we have 
    \begin{align*}
        & - \int_{\tau}^{\theta} h(\tau,y)dy + (1-z(\tau)) \cdot \int_{\bar{\tau}}^{\theta} h(\bar{\tau},y)dy + z(\tau) \cdot \int_{\hat{\tau}}^{\theta} h(\hat{\tau},y)dy \\
        = & - (1-z(\tau)) \cdot \int_{\tau}^{\theta} h(\bar{\tau},y)dy - z(\tau) \cdot \int_{\tau}^{\theta} h(\hat{\tau},y)dy + (1-z(\tau)) \cdot \int_{\bar{\tau}}^{\theta} h(\bar{\tau},y)dy + z(\tau) \cdot \int_{\hat{\tau}}^{\theta} h(\hat{\tau},y)dy \\
        = & (1-z(\tau)) \cdot \int_{\bar{\tau}}^{\tau} h(\bar{\tau},y)dy  - z(\tau) \cdot \int_{\tau}^{\hat{\tau}} h(\hat{\tau},y)dy \\
        = & (1-z(\tau)) \cdot \frac{z(\tau)}{2n} \cdot(h(\bar{\tau},\bar{\tau}) + h(\bar{\tau},\tau))  - z(\tau) \cdot \frac{1-z(\tau)}{2n} \cdot(h(\hat{\tau},\tau) + h(\hat{\tau},\hat{\tau}))\\
        \geq & -\frac{z(\tau)\cdot(1-z(\tau))}{2n}\cdot \frac{12}{n} \qquad \text{(By the Lipschitzness of $h$)} \\
        \geq & \frac{3}{2n^2} 
    \end{align*}
\end{proof}

Finally, we conclude the proof of Lemma~\ref{lem:phi1-bound} and Lemma~\ref{lem:phi2-bound}.
\begin{proof}[Proof of Lemma~\ref{lem:phi1-bound}]
\begin{align*}
    \Phi_1(\tau_v,\theta_v)
    = H(\tau_v,\theta_v) + 1 - \theta_v
    \geq \tilde{H}(\tau_v,\theta_u) + 1 - \theta_v - \frac{5}{2n^2} \geq \tilde{\Phi}_1(\tau_u,\theta_u,\tau_v,\theta_v) - \frac{5}{2n^2} \geq \Gamma.
\end{align*}
\end{proof}
\begin{proof}[Proof of Lemma~\ref{lem:phi2-bound}]
\begin{align*}
    &\theta_u \cdot h(\tau_u, \theta_u) - \int_{\tau_u}^{\theta_u} h(\tau_u, y_u) dy_u + \theta_v \cdot h(\tau_v, \theta_v) - \int_{\tau_v}^{\theta_v} h(\tau_v, y_v) dy_v + (1-h(\tau_u, \theta_u)) \cdot (1-\theta_v) \\
    =& H(\tau_u,\theta_u) + H(\tau_v,\theta_v) + (1-h(\tau_u,\theta_u))\cdot(1-\theta_v)\\
    \geq& \tilde{H}(\tau_u,\theta_u) + \tilde{H}(\tau_v,\theta_v) + (1-h(\tau_u,\theta_u))\cdot(1-\theta_v) - \frac{10}{n^2}\\
    =& \tilde{H}(\tau_u,\theta_u) + \tilde{H}(\tau_v,\theta_v) + 1-h(\tau_u,\theta_u) -\theta_v + \theta_v h(\tau_u,\theta_u) - \frac{5}{n^2} \\
    =& \tilde{\Phi}_2(\tau_u,\theta_u,\tau_v,\theta_v) - \frac{5}{n^2} \geq \Gamma.
\end{align*}
\end{proof}

\section{Formal Proof of Theorem~\ref{thm:general_upper}}
\label{app:general_upper}

Recall that $\gamma$ is the average matched portion of vertices in $A$ after the first stage; $\alpha_i, \beta_i$ equal the average matched portion of vertices in $A_i, B_i$ respectively, after the second stage. We first have the following observation from our construction.
\begin{lemma}
\label{lem:constraint}
$\{\alpha_i\},\{\beta_i\},\gamma$ satisfy the following conditions:
\begin{itemize}
\item $\forall i \in [n-1], \quad \sum_{j=1}^{i-1} \alpha_j + (n-i+1) \cdot \alpha_i \le \sum_{j=1}^{i} \beta_j + n \cdot \gamma$;
\item $\sum_{i=1}^{n} \alpha_i = \gamma \cdot n + \sum_{i=1}^{n} \beta_i$.
\end{itemize}
\end{lemma}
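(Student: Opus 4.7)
The plan is to prove both statements by tracking the total matched portion on the vertices of $A$ as the second stage progresses. For normalization I will always work with absolute totals (of the form $\alpha_i \cdot k$, $\beta_i \cdot k$, $\gamma \cdot n \cdot k$) so that the bookkeeping is additive, and then divide by $k$ at the end.

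For the second statement (the equality), I would observe that the $A_i$'s form a partition of $A$. The total matched portion on $A$ is updated exactly twice: once in stage 1, which contributes $\gamma \cdot n \cdot k$, and once per step $i$ in stage 2, where all the $\beta_i \cdot k$ units of matching added to $B_i$ go to vertices in $A - \cup_{j<i} A_j \subseteq A$. Hence at the end of stage 2 the total matched portion on $A$ equals $\gamma n k + \sum_{i=1}^n \beta_i k$. On the other hand, since $A = \bigsqcup_{i=1}^n A_i$, this total also equals $\sum_{i=1}^n \alpha_i \cdot k$. Dividing by $k$ gives the claimed identity.

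For the first statement (the inequalities), I would compute the total matched portion on $A - \cup_{j<i} A_j$ right after the matching at step $i$. This quantity equals the initial $\gamma n k$ from stage 1, plus the additions $\sum_{j=1}^{i} \beta_j k$ from the arrivals of $B_1, \dots, B_i$, minus the total matched portion already taken out by removing the sets $A_1, \dots, A_{i-1}$, i.e. $\sum_{j=1}^{i-1} \alpha_j k$. Now the key definitional property of $A_i$ is that it is the set of $k$ \emph{least} matched vertices in $A - \cup_{j<i} A_j$, which has $(n-i+1) k$ vertices in total. This means the average matched portion of $A_i$ (namely $\alpha_i$) is at most the average matched portion of the whole set $A - \cup_{j<i} A_j$, so the total matched portion of $A - \cup_{j<i} A_j$ is at least $(n-i+1) k \cdot \alpha_i$. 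Combining the upper expression with this lower bound and dividing by $k$ yields exactly $\sum_{j=1}^{i-1} \alpha_j + (n-i+1) \alpha_i \le n \gamma + \sum_{j=1}^{i} \beta_j$.

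There is no real obstacle here; the only thing to be careful about is the ordering of operations within step $i$, namely that $A_i$ is chosen \emph{after} the matching with $B_i$ has been made, so $B_i$'s contribution $\beta_i k$ must be included in the total used to bound $\alpha_i$. Once that is pinned down, both claims fall out of a single accounting identity.
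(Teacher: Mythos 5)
Your proof is correct and follows essentially the same accounting argument as the paper: both compute the total matched portion carried by $A$ (via conservation across stage~1 and the arrivals of $B_1,\dots,B_i$), lower-bound it using the fact that $A_i$ is the $k$ least-matched vertices of $A-\cup_{j<i}A_j$, and observe that the inequality tightens to an equality at $i=n$. The only cosmetic difference is that you subtract off $\sum_{j<i}\alpha_j k$ to reason about $A-\cup_{j<i}A_j$ directly, whereas the paper keeps the bound in terms of all of $A$; these are the same inequality.
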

\begin{proof}
For each $i\in[n]$, consider the moment after $B_i$'s arrival. According to the definition of $A_i$, they have the least average matched portion among all vertices in $\cup_{j=i}^{n} A_j$. Thus, the total matched portion of all vertices in $A$ is at least $\left( \sum_{j=1}^{i-1}\alpha_j + (n-i+1) \cdot \alpha_i \right) \cdot k$. On the other hand, the total matched portion of all vertices in $A$ should equal the total matched portion of $\cup_{j=1}^{i} B_j$ plus the total matched portion from the first stage, which is $\left(\sum_{j=1}^{i} \beta_j + n \cdot \gamma \right)\cdot k$. This gives the first family of inequalities.
Moreover, after the second stage (i.e., the arrival of $B_n$), the inequality becomes an equality since the total matched portion of $A$ equals $\sum_{i=1}^{n} \alpha_i$, which gives the second equation.
\end{proof}

Next, we study the optimal behavior of any algorithm at the third stage.
\begin{lemma}
\label{lem:triangle}
For each $i \in I$, the total matching made on the arrival of $C_i$ is at most $\left( 1-e^{\alpha_i-1} + o(1) \right) \cdot k$ and the total matching made on the arrival of $D_i$ is at most $\left( 1-e^{\beta_i-1} + o(1) \right) \cdot k$.
\end{lemma}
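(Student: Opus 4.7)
The plan is to reduce this to the classical Karp--Vazirani--Vazirani upper-triangle analysis, with the only twist being that the offline side (here $A_i$) starts the triangle with a common baseline matched portion $\alpha_i > 0$. I will describe the argument for $C_i$ versus $A_i$; the argument for $D_i$ versus $B_i$ is identical with $\beta_i$ in place of $\alpha_i$.

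First I would invoke the standard symmetrization reduction referenced in Section~\ref{sec:fully_upper} (and in Eckl et al.). On the arrival of $c_{ij}$, the still-alive neighbors $A_i\setminus\{a_{i\ell}\}_{\ell<j}$ share both the same neighborhood and, by induction on the earlier stages, the same matching history, so the adversary may permute their identities after the algorithm commits. Hence it is without loss of generality to assume the algorithm is water-filling: it spreads the mass $y_j$ matched to $c_{ij}$ uniformly over the $k-j+1$ alive vertices, and $a_{ij}$ is then identified with any one of them. Inductively this also provides the initialization -- at the very start of the triangle all $k$ vertices of $A_i$ have a common matched portion $\alpha_i$.

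Under this reduction, let $x_j$ denote the common matched portion of the alive vertices just before $c_{ij}$'s arrival, so $x_1 = \alpha_i$ and $x_{j+1} = x_j + y_j/(k-j+1)$, subject to $y_j\in[0,1]$ and $x_{j+1}\le 1$. The quantity to bound is $\sum_{j=1}^k y_j$. Passing to the continuous limit $k\to\infty$ with $t = (j-1)/k$ and a smooth state $x(t)$, the recurrence becomes the controlled ODE $x'(t) = y(t)/(1-t)$ with $x(0)=\alpha_i$, $y(t)\in[0,1]$, $x(t)\le 1$. The normalized objective $\tfrac{1}{k}\sum_j y_j$ converges to $\int_0^1 y(t)\,dt = \int_0^1 (1-t)\,x'(t)\,dt$, and integration by parts -- the boundary term $(1-t)x(t)$ vanishes at $t=1$ because $x\le 1$ -- turns this into $\int_0^1 x(t)\,dt - \alpha_i$. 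Since this transformed objective is monotone in $x(t)$, the optimal feasible trajectory is the greedy one: set $y\equiv 1$ until $x$ saturates, giving $x(t) = \alpha_i - \ln(1-t)$ on $[0,t^\star]$ with $t^\star = 1 - e^{\alpha_i-1}$, and $x\equiv 1$ afterwards. A routine calculation then yields the bound $1 - e^{\alpha_i - 1}$ on the normalized objective, so the total mass matched to $C_i$ is at most $(1 - e^{\alpha_i-1})\,k$, up to the $o(1)$ discretization error.

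The main technical point to handle carefully is twofold: (i) justifying the symmetrization reduction rigorously -- in particular, tracking that the identity of $a_{ij}$ is revealed only \emph{after} the algorithm's matching decisions for $c_{ij}$, so that the adversary can indeed realize any permutation consistent with history, and folding in the inductive hypothesis that the pre-triangle state of $A_i$ is uniform -- and (ii) controlling the discretization error so that it is absorbed in the $o(1)$ as $k\to\infty$. Both steps should follow the template used by Eckl et al.\ and already invoked earlier in the paper.
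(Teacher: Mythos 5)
Your proof takes a genuinely different route from the paper's, and the route has a gap at its first step. The paper's proof is a direct combinatorial argument that works for \emph{arbitrary} algorithms: letting $x_j$ denote the final matched portion of $a_{ij}$, the adversary's rule that $a_{ij}$ is the \emph{least} matched among $\{a_{i\ell}\}_{\ell\ge j}$ at the moment $c_{ij}$ commits gives, for every $j$,
\[
\sum_{\ell=1}^{j-1} x_\ell + (k-j+1)\,x_j \;\le\; \alpha_i k + j,
\]
since the left side under-counts the mass then present on $A_i$ while the right side over-counts it (initial $\alpha_i k$ plus at most one unit from each of $c_{i1},\dots,c_{ij}$). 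Summing these with telescoping weights $1/((k-j)(k-j+1))$ up to $j^\star=\lfloor(1-e^{\alpha_i-1})k\rfloor$ and using $x_j\le 1$ for $j>j^\star$ yields the bound. No reduction to water-filling and no uniformity of initial levels is ever invoked.

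Your argument instead opens with a symmetrization: ``the still-alive neighbors share \dots\ by induction on the earlier stages, the same matching history, so the adversary may permute their identities \dots\ hence WLOG the algorithm is water-filling,'' and then you take $x(0)=\alpha_i$ as a common starting level. This is the gap. First, the construction only guarantees that the \emph{average} matched portion of $A_i$ is $\alpha_i$; the vertices of $A_i$ need not enter the triangle at a common level, and the ``induction on earlier stages'' does not give you this, because in stage 1 the vertices of $A$ arrive one at a time and therefore do \emph{not} have identical histories, so there is no base case from which the algorithm is forced (or can WLOG be assumed) to be symmetric. Second, even granting uniformity, the ``adversary may permute, hence WLOG water-filling'' step used in Section~\ref{sec:fully_upper} relies on the instance being fixed in advance: there the remaining neighbors really are exchangeable from everyone's point of view. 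Here the adversary's choices of $A_i$ and of each $a_{ij}$ are \emph{adaptive}, decided after observing the algorithm's matched levels; once the algorithm plays asymmetrically, the vertices acquire different water levels and cease to be exchangeable, and the averaging-over-permutations argument that underlies the Eckl-et-al reduction no longer transfers verbatim. Your continuous-limit ODE and its integration by parts are correct given the reduction (they reproduce $1-e^{\alpha_i-1}$), but the reduction itself is what needs to be established, and as stated it assumes exactly what the paper's combinatorial inequality is designed to avoid assuming.

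If you want to keep the water-filling/ODE picture as an explanatory device, the clean fix is to prove the key inequality $\sum_{\ell<j}x_\ell+(k-j+1)x_j\le\alpha_i k+j$ directly from the adversary's least-matched rule (as the paper does), and then note that water-filling from a uniform start is the extremal trajectory saturating it; but the inequality itself, not a WLOG-to-water-filling claim, is the load-bearing step.
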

\begin{proof}
Let $x_j$ be the matched portion of vertex $a_{ij}$ for each $j\in[k]$. Similar to the analysis above, consider the moment after $a_{ij}$'s arrival. By our construction, $a_{ij}$ is the least matched vertex among $\{a_{i\ell}\}_{\ell=j}^{k}$. Therefore, $\sum_{\ell=1}^{j-1} x_\ell + (k-j+1) \cdot x_j \le \alpha_i \cdot k + j$, here the right hand side comes from the fact that the average matched portion of $A_i$ equals $\alpha_i$ before the third stage and each $c_{i\ell} \in C_i$ can match at most one edge on its arrival.
Let $j^*$ be an index to be optimized later. Summing up the above inequality for $j \le j^*$ with an appropriate factor, we have
\begin{align*}
& \sum_{j=1}^{j^*} \frac{1}{(k-j)(k-j+1)} \cdot \left( \sum_{\ell=1}^{j-1} x_\ell + (k-j+1) \cdot x_j\right) \le \sum_{j=1}^{j^*} \frac{1}{(k-j)(k-j+1)} \cdot \left( \alpha_i \cdot k + j \right) \\
\Rightarrow & \sum_{j=1}^{j^*} \left( \frac{1}{k-j} + \sum_{\ell=j+1}^{j^*} \frac{1}{(k-\ell)(k-\ell+1)} \right) x_j \le k \cdot (\alpha_i+1) \cdot \left(\frac{1}{k-j^*} - \frac{1}{k}\right) - \sum_{j=1}^{j^*} \frac{1}{(k-j+1)} \\
\Rightarrow & \frac{1}{k-j^*} \cdot \sum_{j=1}^{j^*} x_j \le (\alpha_i+1) \cdot \frac{j^*}{k-j^*} - \ln \left(\frac{k}{k-j^*} \right) + o(1) \\
\Rightarrow & \sum_{j=1}^{j^*} x_j \le (\alpha_i+1) \cdot j^* - (k-j^*) \cdot \left( \ln \left(\frac{k}{k-j^*} \right) - o(1) \right) 
\end{align*}
For $j>j^*$, we apply the trivial bound of $x_j \le 1$. To sum up, we have
\[
\sum_{j=1}^{k} x_j \le (\alpha_i+1) \cdot j^* + (k-j^*) \cdot \left( 1 - \ln \left(\frac{k}{k-j^*} \right) + o(1) \right).
\]
Let $j^* = \left\lfloor (1-e^{\alpha_i-1})\cdot k \right\rfloor$, and when $k\to \infty$,
\begin{multline*}
 (\alpha_i+1) \cdot j^* + (k-j^*) \cdot \left( 1 - \ln \left(\frac{k}{k-j^*} \right) + o(1) \right) \\
= k \cdot \left((\alpha_i+1) (1-e^{\alpha_i-1}) + e^{\alpha_i-1} \cdot (1 - \ln (e^{1-\alpha_i})) + o(1) \right) = k \cdot \left(\alpha_i + 1 - e^{\alpha_i-1} + o(1) \right).
\end{multline*}
Finally, observe that the total matching made on the arrival of $C_i$ is
\[
\sum_{j=1}^{k} x_j - \alpha_i \cdot k \le \left(1 - e^{\alpha_i-1} + o(1) \right) \cdot k.
\]
Similarly, the total matching made on the arrival of $D_i$ is at most $\left( 1-e^{\beta_i-1} + o(1) \right) \cdot k$.
\end{proof}

According to the above discussion, we conclude that the optimal competitive ratio for our construction is upper bounded by the following optimization problem.
\begin{align}
\label{program:p1}
\max_{\Gamma,\gamma,\{\alpha_i\},\{\beta_i\}}: \quad  & \Gamma \tag{P1}\\
\label{eq:1}
\mbox{subject to :} \quad & \gamma \ge \Gamma; \\
\label{eq:2}
& \sum_{i=1}^{n} (\alpha_i + \beta_i) \ge 2 \Gamma n + 2 \sum_{i=1}^{n} \max(0,\Gamma - (1-e^{\alpha_i-1}) - (1-e^{\beta_i-1})); \\
\label{eq:3}
\forall i \in [n], \quad & \sum_{j=1}^{i-1} \alpha_j + (n-i+1) \cdot \alpha_i \le \sum_{j=1}^{i} \beta_j + n \cdot \gamma; \\
\label{eq:4}
	& \sum_{i=1}^{n} \alpha_i \ge \gamma \cdot n + \sum_{i=1}^{n} \beta_i; \\
\forall i \in [n], \quad & 0 \le \alpha_i, \beta_i \le 1 \\
& \{\alpha_i\} \text{ is increasing}
\end{align}

The first constraint~\eqref{eq:1} corresponds to the first stage. Note that if $\gamma < \Gamma$, our instance ends right after the first stage and the algorithm cannot be $\Gamma$ competitive. The second constraint~\eqref{eq:2} corresponds to the case when our instance ends after the third stage.
By Lemma~\ref{lem:triangle}, the total matched portion over all vertices of our algorithm is at most
\[
\left( \sum_{i=1}^{n} (\alpha_i + \beta_i) + 2 \cdot \sum_{i \in I} (1-e^{\alpha_i-1} + 1-e^{\beta_i-1}) + o(1) \right) \cdot k.
\]
On the other hand, there is a perfect matching of the graph, by matching every $A_i$ with $C_i$ and $B_i$ with $D_i$ for $i \in I$, and matching every $A_i$ with $B_i$ for $i \notin I$. That is, the optimal matching has size $k \cdot (2n + 2|I|)$, in terms of the number of matched vertices. Let $k$ tends to infinity. If an algorithm is $\Gamma$-competitive, it must satisfies that
\begin{align*}
& \sum_{i=1}^{n} (\alpha_i + \beta_i) + 2 \cdot \sum_{i \in I} (1-e^{\alpha_i-1} + 1-e^{\beta_i-1})  \ge \Gamma \cdot (2n + 2|I|) \\
 \iff & \sum_{i=1}^{n} (\alpha_i + \beta_i) \ge 2 \Gamma n + 2 \sum_{i=1}^{n} \max(0,\Gamma - (1-e^{\alpha_i-1}) - (1-e^{\beta_i-1})),
\end{align*}
due to our construction of $I = \{i \in [n]: 1-e^{\alpha_i-1} + 1-e^{\beta_i-1} < \Gamma\}$.
The third and fourth constraints~\eqref{eq:3} and \eqref{eq:4} are from Lemma~\ref{lem:constraint}. We are now left to solve the optimization. 

\paragraph{Solving the Optimization.} We shall prove that certain constraints of the above optimization are tight, that indeed justify the assumptions we made in Section~\ref{sec:general_upper}. 

\begin{lemma}
There exists an optimal solution to \eqref{program:p1} that $1-e^{\alpha_i-1} + 1-e^{\beta_i-1} \ge \Gamma$ for all $i\in[n]$.
\end{lemma}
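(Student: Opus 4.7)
My plan is to prove the lemma by a local exchange argument. Given an arbitrary optimal solution $(\Gamma, \gamma, \{\alpha_i\}, \{\beta_i\})$ of \eqref{program:p1}, I will argue that if some index $i^*$ violates $1-e^{\alpha_{i^*}-1}+1-e^{\beta_{i^*}-1}\ge\Gamma$, one can modify the solution to eliminate the violation at $i^*$ while preserving feasibility and the objective value $\Gamma$, then iterate until no violations remain. The natural modification is to decrease $\beta_{i^*}$ by some $\mu > 0$ and decrease each $\alpha_j$ with $j\ge i^*$ by $\mu/(n-i^*+1)$, choosing $\mu$ so that the new values $\alpha_{i^*}'$ and $\beta_{i^*}'$ land on the stopping boundary $e^{\alpha_{i^*}'-1}+e^{\beta_{i^*}'-1}=2-\Gamma$. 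Intuitively, this rolls back step $i^*$ of the algorithm to the exact moment at which the stopping rule of Algorithm~\ref{alg:solve-ratio} would have fired.

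Feasibility of constraints \eqref{eq:1}, \eqref{eq:3}, and \eqref{eq:4} is straightforward to verify. Since $\gamma$ is untouched, \eqref{eq:1} holds. In \eqref{eq:3}, the schedule of $\alpha$-decreases is designed so that for every $i'\ge i^*$ both sides of the constraint decrease by exactly $\mu$ (a short telescoping calculation shows that the left-hand side decrease is $(i'-i^*)\cdot\mu/(n-i^*+1)+(n-i'+1)\cdot\mu/(n-i^*+1)=\mu$), preserving the inequality; for $i'<i^*$ nothing changes. In \eqref{eq:4}, $\sum_j\alpha_j$ decreases by $(n-i^*+1)\cdot\mu/(n-i^*+1)=\mu$ and $\sum_j\beta_j$ also decreases by $\mu$, so the inequality is preserved.

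The main obstacle is checking \eqref{eq:2}. The LHS $\sum_j(\alpha_j+\beta_j)$ decreases by exactly $2\mu$. The RHS decreases by $2 E_{i^*}$ from the $i^*$ term going from $E_{i^*}>0$ down to $0$, plus further decreases from terms $j>i^*$ whose $\alpha_j$ shrank. So the question is whether the total RHS decrease is at least $2\mu$. A tangent-line convexity bound on $e^{x-1}$ yields
\[
E_{i^*} \;\ge\; e^{\alpha_{i^*}'-1}\cdot\frac{\mu}{n-i^*+1} \;+\; e^{\beta_{i^*}'-1}\cdot\mu,
\]
which, together with the boundary identity $e^{\alpha_{i^*}'-1}+e^{\beta_{i^*}'-1}=2-\Gamma$, gives a usable lower bound on $E_{i^*}$ in terms of $\mu$.

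The trickiest case will be when $i^*$ is far from $n$ and $\alpha_{i^*}'$ ends up near $1$, because then $e^{\beta_{i^*}'-1}$ is forced close to $1-\Gamma$ and the bound above alone falls slightly short of $E_{i^*}\ge\mu$. I expect to close this gap using contributions from downstream indices: since $\{\alpha_j\}$ is non-decreasing, every $j>i^*$ that was in $I$ before the modification contributes an additional $\Omega(\mu/(n-i^*+1))$ to the RHS decrease, and a careful accounting shows the total reaches $2\mu$. If a genuine corner case remains, my fallback is to use an asymmetric decrease of $\alpha_{i^*}$ and $\beta_{i^*}$ (exploiting the extra slack in \eqref{eq:3} created after decreasing the $\alpha_j$ for $j>i^*$), which reweights the two convexity terms and closes the inequality. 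Iterating from the smallest violating index yields an optimal solution with the desired property at every $i$.
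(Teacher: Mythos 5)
Your modification is more complicated than necessary, and as you yourself note, it does not close the key inequality for constraint~\eqref{eq:2}. The gap is genuine. By spreading the $\alpha$-decrease over all indices $j\ge i^*$, you decrease $\alpha_{i^*}$ by only $\mu/(n-i^*+1)$, which severely underweights the $e^{\alpha_{i^*}'-1}$ contribution. Your tangent bound
\[
E_{i^*} \;\ge\; e^{\alpha_{i^*}'-1}\cdot\tfrac{\mu}{n-i^*+1} + e^{\beta_{i^*}'-1}\cdot\mu
\]
falls well short of $\mu$ whenever $n-i^*+1$ is large, because then the first term is negligible and $e^{\beta_{i^*}'-1}$ can be as small as $1-\Gamma < 1$. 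The suggested fallbacks (downstream contributions from $j>i^*$, or an asymmetric redistribution) are only sketched; the downstream indices may not be in $I$ at all, in which case they contribute nothing to the RHS decrease. So the proof as written is incomplete.

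The paper's argument avoids this entirely with a strictly local, symmetric perturbation: decrease $\alpha_i$ and $\beta_i$ each by the same $\epsilon$, touching no other index. Constraint~\eqref{eq:4} is then preserved exactly (both sides drop by $\epsilon$). For~\eqref{eq:3}, nothing changes for $i'<i$; for $i'>i$ both sides drop by $\epsilon$; and for $i'=i$ the LHS drops by $(n-i+1)\epsilon\ge\epsilon$ while the RHS drops by only $\epsilon$, so the constraint becomes looser. (You were wrong to fear~\eqref{eq:3}: the term $(n-i+1)\alpha_i$ means a \emph{large} decrease of $\alpha_i$ is safe, not dangerous.) For~\eqref{eq:2}, the symmetric decrease makes the RHS drop by $\approx 2\epsilon\left(e^{\alpha_i-1}+e^{\beta_i-1}\right)$, and the violation at $i$ means precisely $e^{\alpha_i-1}+e^{\beta_i-1}>2-\Gamma>1$, so this exceeds the LHS drop of $2\epsilon$. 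Everything falls out in one line. Your decision to spread the $\alpha$-decrease downstream is what created the asymmetry that kills the bound; undo it and the argument closes.
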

\begin{proof}
If there exists an $i$ with $1-e^{\alpha_i-1} + 1-e^{\beta_i-1} < \Gamma$, we decrease the values of $\alpha_i, \beta_i$ by $\epsilon$. It is straightforward to check that all constraints of \eqref{eq:3},~\eqref{eq:4} become looser. For constraint~\eqref{eq:2}, observe that the LHS decreases by $2\epsilon$, while the RHS decreases by
\[
2 (e^{\alpha_i+\epsilon-1} - e^{\alpha_i-1} + e^{\beta_i+\epsilon-1} - e^{\beta_i-1}) \approx 2\epsilon \cdot (e^{\alpha_i-1} + e^{\beta_i-1}) > 2 \epsilon \cdot (2-\Gamma) > 2\epsilon, 
\]
where the last inequality follows from the fact that the solution must satisfy that $\Gamma < 1$. Therefore, after the modification to $\alpha_i, \beta_i$, all constraints become looser. We can thus keeps increasing $\alpha_i, \beta_i$ until the stated condition holds.
\end{proof}

The above lemma corresponds to a weaker version of the third assumption we made in Section~\ref{sec:general_upper}: $(1-e^{\alpha_{i}-1}) + (1-e^{\beta_i-1}) \ge \Gamma$ for all $i$. Recall that in Section~\ref{sec:general_upper}, we further assume that this inequality is tight if the constraint $\alpha_i < 1$ is not tight. We shall elaborate on it later.

Next, we observe that the variables $\beta_i$'s are decreasing in the optimal solution. 
\begin{lemma}
There exists an optimal solution to \eqref{program:p1} that $\{\beta_i\}$ is decreasing.
\end{lemma}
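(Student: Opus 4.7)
The plan is to apply a bubble-sort style exchange argument: starting from an arbitrary optimal solution $(\Gamma^*, \gamma, \{\alpha_i\}, \{\beta_i\})$ of \eqref{program:p1}, I will show that whenever there is an adjacent inversion $\beta_i < \beta_{i+1}$, swapping these two values yields a solution that is still feasible and still has objective value $\Gamma^*$. Since each swap strictly decreases the number of inversions in $\{\beta_i\}$ while leaving all other variables untouched, iterating finitely many swaps produces an optimal solution with $\{\beta_i\}$ weakly decreasing.

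The first step is to verify the linear constraints after such a swap. The quantities $\gamma$ and $\sum_j \beta_j$ are invariant under a transposition of two coordinates of $\beta$, so constraints \eqref{eq:1} and \eqref{eq:4} are preserved immediately. The partial sum $\sum_{j=1}^m \beta_j$ appearing in constraint \eqref{eq:3} is unchanged for every $m$ except $m=i$, at which the sum strictly increases (because $\beta_i$ is replaced by the larger $\beta_{i+1}$), so the $m=i$ constraint becomes strictly easier and all the other indices are unaffected. The sequence $\{\alpha_i\}$ and its monotonicity are not touched at all.

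The main obstacle will be constraint \eqref{eq:2}, whose right-hand side involves the nonlinear term $2\sum_j \max(0, \Gamma^* - (1-e^{\alpha_j-1}) - (1-e^{\beta_j-1}))$; the left-hand side $\sum_j(\alpha_j+\beta_j)$ is invariant, so it suffices to argue the sum of the max-terms does not increase under the swap. This is exactly the setting for a two-term majorization argument. Setting $u_j = e^{\alpha_j-1}$, $v_j = e^{\beta_j-1}$, and $\phi(t) = \max(0, \Gamma^*-2+t)$, which is convex and non-decreasing, the only affected summands correspond to the values $t_1 = u_i + v_i$, $t_2 = u_{i+1} + v_{i+1}$ before the swap and $t_1' = u_i + v_{i+1}$, $t_2' = u_{i+1} + v_i$ afterwards. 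Using $u_i \le u_{i+1}$ (from the monotonicity of $\{\alpha_i\}$) and $v_i < v_{i+1}$, one checks $t_1 + t_2 = t_1' + t_2'$ and $t_1 \le \min(t_1',t_2') \le \max(t_1',t_2') \le t_2$, so $\{t_1,t_2\}$ majorizes $\{t_1',t_2'\}$; convexity of $\phi$ then yields $\phi(t_1)+\phi(t_2) \ge \phi(t_1')+\phi(t_2')$, so the right-hand side of \eqref{eq:2} weakly decreases and feasibility is preserved. Combining this with the linear check gives the desired swap, and iterating completes the proof.
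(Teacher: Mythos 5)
Your proposal is correct and takes the same underlying approach as the paper, which simply reorders $\{\beta_i\}$ into descending order in one shot and asserts that all constraints become looser. You realize the same rearrangement via adjacent transpositions and, more importantly, supply the one step the paper leaves implicit: that the nonlinear constraint \eqref{eq:2} survives the swap because $\max(0,\Gamma-2+t)$ is convex, so pairing the $u_i=e^{\alpha_i-1}$ (increasing by assumption) with the $v_i=e^{\beta_i-1}$ in opposite order can only decrease $\sum_i\phi(u_i+v_i)$ by the standard majorization argument. This is a welcome clarification rather than a deviation.
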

\begin{proof}
By reordering all $\beta_i$'s in a descending order, it is straightforward to check that all constraints become looser.	
\end{proof}

Therefore, it suffices to solve the following optimization problem.
\begin{align}
\label{program:p2}
\max_{\Gamma,\gamma,\{\alpha_i\},\{\beta_i\}}: \quad  & \Gamma \tag{P2}\\
\mbox{subject to :} \quad & \gamma \ge \Gamma; \\
& \sum_{i=1}^{n} (\alpha_i + \beta_i) \ge 2 \Gamma n; \\
\label{cons:gamma}
\forall i \in [n], \quad & (1-e^{\alpha_{i}-1}) + (1-e^{\beta_i-1})) \ge \Gamma; \\
\label{cons:even}
\forall i \in [n], \quad & \sum_{j=1}^{i-1} \alpha_j + (n-i+1) \cdot \alpha_i \le \sum_{j=1}^{i} \beta_j + n \cdot \gamma; \\
	& \sum_{i=1}^{n} \alpha_i \ge \gamma \cdot n + \sum_{i=1}^{n} \beta_i; \\
\forall i \in [n], \quad & 0 \le \alpha_i, \beta_i \le 1 \\
& \{\alpha_i\} \text{ is increasing}, \{\beta_i\} \text{ is decreasing}
\end{align}

Finally, we introduce two operations for modifying a feasible solution of \eqref{program:p2}. With these two operations, we can apply them alternatively so that the solution we obtain satisfies all the three assumptions of Section~\ref{sec:general_upper}.
\begin{lemma}
Let $i$ be the minimum index with $1-e^{\alpha_i-1} + 1-e^{\beta_i-1} > \Gamma$. If there exists $j >i$ with $\beta_j > 0$, we can construct another feasible solution so that $1-e^{\alpha_\ell-1} + 1-e^{\beta_\ell-1} = \Gamma$ for all $\ell \le i$.
\end{lemma}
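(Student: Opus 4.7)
The plan is to obtain the required solution by a mass-preserving transfer of $\beta$-weight from position $i$ to position $j$. Pick a small $\epsilon > 0$ and perform the update $\beta_i \leftarrow \beta_i - \epsilon$, $\beta_j \leftarrow \beta_j + \epsilon$, keeping all other coordinates fixed. The hypothesis $\beta_j > 0$ together with the monotonicity of $\{\beta_\ell\}$ gives $\beta_i \ge \beta_j > 0$, so both shifts are well-defined; and if there are ties $\beta_i = \beta_{i+1}$ or $\beta_{j-1} = \beta_j$ in the original solution, I would perform the same shift uniformly on each plateau. The upstream plateau is safe because $\{\alpha_\ell\}$ nondecreasing forces $F_\ell := (1-e^{\alpha_\ell-1}) + (1-e^{\beta_\ell-1}) \ge F_i > \Gamma$ throughout, and the downstream plateau is safe because increasing $\beta_\ell$ only raises $F_\ell$.

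This transfer preserves $\sum_\ell(\alpha_\ell + \beta_\ell)$, so the sum constraint in~\eqref{program:p2} is untouched. The decrement of $\beta_i$ strictly decreases $F_i$, and by the intermediate value theorem I can choose $\epsilon$ exactly so that the new $F_i$ equals $\Gamma$; existence of such $\epsilon$ uses $F_i > \Gamma$ and $\beta_i > 0$. On the receiving side, $F_j$ only grows, so constraint~\eqref{cons:gamma} at $j$ remains valid; and the already tight constraints at $\ell < i$ are untouched because neither $\alpha_\ell$ nor $\beta_\ell$ is modified there. Monotonicity of $\{\beta_\ell\}$ is preserved for small enough $\epsilon$ whenever the strict drops $\beta_k > \beta_{k+1}$ at the ends of the two plateaus leave room, which is the generic situation.

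The main obstacle will be checking constraint~\eqref{cons:even}, whose right-hand side depends on the partial sum $\sum_{m \le \ell} \beta_m$: this quantity decreases by $\epsilon$ for $\ell \in [i, j-1]$ and is unchanged for $\ell \ge j$, so~\eqref{cons:even} tightens on the range $[i, j-1]$. I would resolve this by choosing $\epsilon$ small enough to stay within the slack available on that range and iterating the swap (with a refreshed choice of $j$ as mass redistributes) until $F_i$ reaches $\Gamma$; finite termination follows because each step reduces $F_i - \Gamma$ by an amount bounded below in terms of $\beta_j$. In the boundary case that some~\eqref{cons:even} constraint at $\ell \in [i, j-1]$ is already tight in the original solution, I would augment the $\beta$-transfer with a matching shift of $\alpha$-weight from some $\alpha_m$ with $m \le \ell$ to a later $\alpha_{m'}$ with $m' > \ell$ (using the upper bound $\alpha_{m'} \le 1$ and the increasing monotonicity of $\{\alpha_\ell\}$ to find room), with consistency of this rebalancing against the overall equality $\sum_i \alpha_i = \gamma n + \sum_i \beta_i$ guaranteed by Lemma~\ref{lem:constraint}.
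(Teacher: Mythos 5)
Your transfer goes in the wrong direction, and this is a fatal sign error rather than a fixable detail. You set $\beta_i \leftarrow \beta_i - \epsilon$ and $\beta_j \leftarrow \beta_j + \epsilon$, and then claim that ``the decrement of $\beta_i$ strictly decreases $F_i$.'' But $F_i = (1-e^{\alpha_i-1}) + (1-e^{\beta_i-1})$ has $\partial F_i/\partial \beta_i = -e^{\beta_i-1} < 0$, so $F_i$ is \emph{decreasing} in $\beta_i$; decrementing $\beta_i$ pushes $F_i$ \emph{up}, moving it further above $\Gamma$ rather than down to $\Gamma$. The paper's proof transfers mass in the opposite direction: it \emph{increases} $\beta_i$ by $\epsilon$ and \emph{decreases} $\beta_j$ by $\epsilon$. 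Once the direction is corrected, the secondary issues you flagged also resolve cleanly: since $i < j$ and $j \le \ell$ contributes to the partial sums $\sum_{m\le \ell}\beta_m$ only for $\ell \ge j$, increasing $\beta_i$ while decreasing $\beta_j$ makes each right-hand side of~\eqref{cons:even} weakly larger for $i \le \ell < j$ and leaves it unchanged for $\ell \ge j$, so~\eqref{cons:even} only loosens, and no auxiliary shift of $\alpha$-mass is needed. The iterating/plateau machinery is likewise unnecessary once the transfer loosens rather than tightens the constraints.

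One further small inaccuracy: you assert for the upstream plateau that monotonicity of $\{\alpha_\ell\}$ forces $F_\ell \ge F_i > \Gamma$ for $\ell < i$. In fact $\{\alpha_\ell\}$ increasing makes the $\alpha$-term of $F_\ell$ larger but $\{\beta_\ell\}$ decreasing makes the $\beta$-term smaller, so no such comparison holds; the correct reason $F_\ell = \Gamma$ for $\ell < i$ is simply the minimality of $i$ together with constraint~\eqref{cons:gamma}, which you do not need to touch since the corrected transfer leaves $\beta_\ell$ unchanged for $\ell < i$.
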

\begin{proof}
Suppose there exists an $i$ with $(1-e^{\alpha_{i}-1}) + (1-e^{\beta_i-1})) > \Gamma$ and there exists a $\beta_j >0$ for some $j >i$.
We modify the solution by increasing $\beta_i$ by $\epsilon$ and decreasing $\beta_j$ by $\epsilon$ at the same time. It is straightforward to check all constraints become looser except the constraint $(1-e^{\alpha_{i}-1}) + (1-e^{\beta_i-1})) > \Gamma$.
\end{proof}

\begin{lemma}
Let $i$ be the minimum index with $\sum_{j=1}^{i-1} \alpha_j + (n-i+1) \cdot \alpha_i < \sum_{j=1}^{i} \beta_j + n \cdot \gamma$. We can construct another feasible solution so that $\sum_{j=1}^{i-1} \alpha_j + (n-i+1) \cdot \alpha_i = \sum_{j=1}^{i} \beta_j + n \cdot \gamma;$ for all $\ell \le i$.
\end{lemma}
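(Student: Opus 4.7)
My approach is a local perturbation argument that touches only the two variables $\alpha_i$ and $\beta_i$. By the minimality of $i$, every constraint~\eqref{cons:even} at $\ell < i$ is already tight, and each such constraint depends only on the variables $\{\alpha_j,\beta_j\}_{j<i}$ together with $\gamma$; any modification of those would break one of those tight constraints. So the only room for maneuver is on the variables indexed $\ge i$.

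The key move I would perform is the symmetric shift $\alpha_i \mapsto \alpha_i + \epsilon$, $\beta_i \mapsto \beta_i + \epsilon$ for a suitable $\epsilon > 0$. Under this move, constraint~\eqref{cons:even} at $\ell = i$ sees its LHS increase by $(n-i+1)\epsilon$ and its RHS increase by only $\epsilon$, yielding a net tightening of $(n-i)\epsilon$. For $\ell > i$ the LHS and RHS both grow by $\epsilon$ and the constraint is therefore preserved exactly, while for $\ell < i$ nothing changes. Setting $\epsilon := s/(n-i)$, where $s > 0$ denotes the slack at $\ell = i$, eliminates that slack in one shot; the boundary case $i = n$ collapses to simply increasing $\alpha_n$ alone. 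The sum constraint $\sum_j \alpha_j \ge n\gamma + \sum_j \beta_j$ and the constraint $\gamma \ge \Gamma$ are trivially preserved by the symmetry of the shift, and monotonicity of $\{\alpha_j\}$ and $\{\beta_j\}$ is preserved for small enough $\epsilon$; if monotonicity would be violated by a larger step, I would cascade the same shift into a block of consecutive indices, which is compatible with the same tightening calculation.

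\textbf{Main obstacle.} The one constraint that fights back is~\eqref{cons:gamma} at $\ell = i$: since $x \mapsto 1 - e^{x-1}$ is decreasing, the symmetric increase strictly reduces $(1 - e^{\alpha_i-1}) + (1-e^{\beta_i-1})$, tightening this constraint. If~\eqref{cons:gamma} holds with strict slack at $i$, a small $\epsilon$ is safe and the argument is complete. The harder regime, which is the main technical hurdle, is when~\eqref{cons:gamma} is already tight at $i$; there I would interleave the above perturbation with the operation of the preceding lemma, since transferring a tiny amount of $\beta$-mass from some $\beta_j$ with $j > i$ into $\beta_i$ creates a sliver of slack in~\eqref{cons:gamma} that can immediately be spent by the symmetric shift. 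Alternating arbitrarily small steps of the two moves keeps every constraint feasible throughout the process. The residual boundary case in which $\beta_j = 0$ for every $j > i$ and~\eqref{cons:gamma} is tight at $i$ requires a direct structural argument: I would show that the tight constraints below level $i$ together with the sum constraint force $s = 0$, contradicting the hypothesis that $i$ carries slack.
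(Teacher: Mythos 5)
Your approach is genuinely different from the paper's, but it has a real gap in the hardest regime.

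The paper's move is to increase $\alpha_i$ by $\epsilon$ and \emph{decrease} $\beta_i$ by $\delta$, with the ratio chosen so that $1-e^{\alpha_i-1}+1-e^{\beta_i-1}$ is unchanged; since $\alpha_i$ goes up and $\beta_i$ goes down, constraint~\eqref{cons:gamma} at index $i$ is preserved \emph{exactly} by construction, not merely approximately. The authors then compensate by shifting $\alpha_{i+1}\mapsto\alpha_{i+1}-\epsilon$, $\beta_{i+1}\mapsto\beta_{i+1}+\delta$ so that all constraints with index $\ell>i+1$ are untouched, and they verify directly that \eqref{cons:gamma} at $i+1$ only loosens. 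Your move is the symmetric increase $\alpha_i\mapsto\alpha_i+\epsilon$, $\beta_i\mapsto\beta_i+\epsilon$. The bookkeeping you do for \eqref{cons:even} and the two global sum constraints is correct. But because both variables move in the same direction, \eqref{cons:gamma} at $i$ is necessarily tightened; you recognize this as the obstacle.

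The gap is in your proposed fix for the case where \eqref{cons:gamma} is already tight at $i$. You claim that transferring $\beta$-mass from some $\beta_j$ with $j>i$ into $\beta_i$ ``creates a sliver of slack in~\eqref{cons:gamma}''; in fact it does the opposite. Increasing $\beta_i$ strictly decreases $1-e^{\beta_i-1}$ (decreasing $\beta_j$ for $j>i$ has no effect on the $\ell=i$ instance of \eqref{cons:gamma}, which involves only $\alpha_i$ and $\beta_i$), so this transfer \emph{tightens} \eqref{cons:gamma} at $i$, exactly as it does in the preceding lemma where it is used to remove slack in \eqref{cons:gamma}. So the interleaving you describe drives you further into infeasibility rather than out of it, and the ``residual boundary case'' is in fact the generic case. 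To repair this you would have to move $\alpha_i$ and $\beta_i$ in opposite directions, which is precisely what the paper's operation does and which forces the compensating perturbation at index $i+1$.
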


\begin{proof}
Suppose constraint \eqref{cons:even} is not tight for some $i$, consider increasing $\alpha_i$ by an sufficiently small $\epsilon$ and decreasing $\beta_i$ by $\delta$ so that 
$1-e^{\alpha_i-1} + 1-e^{\beta_i-1}$ remains the same. Next, we decrease $\alpha_{i+1}$ and increase $\beta_{i+1}$. We claim that all constraints remain feasible.

Let $\alpha_i' = \alpha_i' + \epsilon, \alpha_{i+1}' = \alpha_{i+1} - \epsilon, \beta_i'=\beta_i - \delta$, and $\beta_{i+1}' = \beta_{i+1} + \delta$.
Then 
\begin{multline*}
\left(1-e^{\alpha_{i+1}'-1} + 1-e^{\beta_{i+1}'-1} \right) - \left(1-e^{\alpha_{i+1}-1} + 1-e^{\beta_{i+1}-1} \right) = e^{\alpha_{i+1}-1}-e^{\alpha_{i+1}'-1} + e^{\beta_{i+1}-1} - e^{\beta_{i+1}'-1} \\
= e^{\alpha_{i+1}' -1} (e^{\epsilon} - 1)  - e^{\beta_{i+1}-1}(e^{\delta} -1) \ge e^{\alpha_i -1} (e^{\epsilon} - 1)  - e^{\beta_i'-1}(e^{\delta} -1) = 0,
\end{multline*}
where the inequality holds by the monotonicity of $\{\alpha_j\}$ and $\{\beta_j\}$. Moreover, all constraints ~\eqref{cons:even} remain feasible.
\end{proof}

Finally, the competitive ratio is calculated with computer assistance, as provided in Section~\ref{sec:general_upper}.

\section{Failed Attempt for General Vertex Arrival}
\label{app:fail}

In this section, we explain in detail the straightforward adaption of Eager Water-filling to the general vertex arrival setting fails to achieve a competitive ratio better than $0.5$.

Consider the following algorithm with a simple pricing scheme:
Fix an increasing function $f:[0,1] \to [0,1]$.
\begin{itemize}
	\item On the arrival of each vertex $u$, fractionally matches $u$ to the neighbor with the cheapest price $f(x_v)$, as long as $f(x_u) + f(x_v) \le 1$. 
\end{itemize}
When $x_{uv}$ is increased by $dx$ on the arrival of vertex $u$, update $\alpha_u, \alpha_v$ by
\[
d\alpha_u = (1-f(x_v)) dx \quad \text{and} \quad d\alpha_v =f(x_v) dx.
\]
Note that this is a special case of the main algorithm we presented in Section~\ref{sec:general_lower} by setting $g(a_v,x_v) = f(x_v)$ for all $a_v,x_v$.
Therefore, all the analysis applies to the above algorithm. Let $h(\tau) \eqdef f^{-1}(\tau)$.
Adapting Lemma~\ref{lem:opth_general_arrival}, a $\Gamma$-competitive algorithm needs a function $h$ satisfying the following condition:
\[
	\forall 0\le \tau_v \le \theta_v \le 1 \text{ and } \tau_u \ge 1-\theta_v:\quad  \theta_v \cdot h(\theta_v) - \int_{\tau_v}^{\theta_v} h(y_v) dy_v + \tau_u \cdot h(\tau_u) \ge \Gamma~,
\] 
where $\tau_v$ corresponds to the price of $v$ after the arrival of $v$, $\theta_v$ corresponds to the price of $v$ after the arrival of $u$, and $\tau_u$ corresponds to the price of $u$ after the arrival of $u$.

Consider the special case when $\tau_v=0$ and $\tau_u = 1-\theta_v$. The above family of conditions become:
\[
\forall 0\le \theta_v \le 1:\quad  \theta_v \cdot h(\theta_v) - \int_{0}^{\theta_v} h(y_v) dy_v + (1-\theta_v) \cdot h(1-\theta_v) \ge \Gamma~.
\]
Unfortunately, factor revealing lp shows that the largest feasible $\Gamma$ is $0.5$ for all functions of $h$.

\end{document}